\newtheorem{theorem}{Theorem}
\newtheorem{lemma}{Lemma}
\newtheorem{corollary}{Corollary}
\newcommand{\mpmd}{\textsf{MPMD} }
\newcommand{\comp}{\mathscr{C}}
\newcommand{\dec}{\mathfrak{D}}
\newcommand{\rank}{\text{rank}}
\newcommand{\nrank}{\text{nrank}}
\newcommand{\req}{\text{req}}
\newtheorem{claim}{Claim}
\crefname{claim}{Claim}{claims}
\title{A Deterministic Polylogarithmic Competitive Algorithm for Matching with Delays}
\author[]{Marc Dufay}
\author[]{Roger Wattenhofer}
\affil[]{ETH Z{\"u}rich\\ \{mdufay, wattenhofer\}@ethz.ch}
\date{\vspace{-5ex}}
\begin{document}

\maketitle

\begin{abstract}
In the online Min-cost Perfect Matching with Delays (\textsf{MPMD}) problem, $m$ requests in a metric space are submitted at different times by an adversary. The goal is to match all requests while (i) minimizing the sum of the distances between matched pairs as well as (ii) how long each request remained unmatched after it appeared.

While there exist almost optimal algorithms when the metric space is finite and known a priori, this is not the case when the metric space is infinite or unknown. In this latter case, the best known algorithm, due to Azar and Jacob-Fanani, has competitiveness $\mathcal{O}(m^{0.59})$ which is exponentially worse than the best known lower bound of $\Omega(\log m /  \log \log m)$ by Ashlagi et al.

We present a $\mathcal{O}(\log^5 m)$-competitive algorithm for the \textsf{MPMD} problem. This algorithm is deterministic and does not need to know the metric space or $m$ in advance. This is an exponential improvement over previous results and only a polylogarithmic factor away from the lower bound.
\end{abstract}
\section{Introduction}

We consider an online two-player game like online chess or racing. Players arrive one by one and must be paired to play games against each other. A well-designed matchmaking system needs to optimize two conflicting criteria: first, we need to minimize the players' waiting time to maintain engagement, second, we need to ensure that players are satisfied with whom they are matched with. This satisfaction might depend on multiple factors, e.g., the skill difference with their opponent or their geographical distance, which affects game latency.


This trade-off was first formalized by Emek et al \cite{STOC:EmeKutWat16} by defining the online Min-cost Perfect Matching with Delay (\textsf{MPMD}) problem. In this problem we are given a metric space and $m$ requests, which are points in the metric space arriving at any time. An algorithm must pair all requests in a way that minimizes (i) the sum of the distances between each pair (space cost) and (ii) how long each request has to wait after arriving before being matched (time cost). 

This problem was first studied in the case where the metric space is finite and known in advance. Most of the proposed algorithms were randomized and relied on metric tree embeddings \cite{STOC:EmeKutWat16,SODA:AzaChiKap17}. We note that this approach has many drawbacks. First, the competitive ratio may be arbitrarily large compared to the actual number of requests. In addition, these algorithms' guarantees only hold in expectation. In particular, these algorithms fail against an adaptive adversary.


A way to work around these issues is to no longer assume knowledge about the metric space and focus on deterministic algorithms. This line of work started with Bienkowski et al \cite{AOA:BKS18} who gave an algorithm with $\mathcal{O}(m^{\log 5.5}) \approx \mathcal{O}(m^{2.46})$ competitiveness. This was later improved to  $\mathcal{O}(m)$ by Bienkowski et al \cite{AOA:BKLS18} and to $\mathcal{O}(\varepsilon^{-1} m^{\log (1.5 + \varepsilon)}) \approx \mathcal{O}(m^{0.59})$ by Azar and Jacob-Fanani \cite{TCS:YJ20}. This is so far the best competitiveness for the general MPMD problem. A lower bound of $\Omega(\log m /  \log \log m)$ also exists by Ashlagi et al \cite{ARCO:AACCGKMWW17}. This lower bound is quite strong as it holds even for a randomized algorithm against an oblivious adversary. 

\subsection{Our contribution}

In their paper \cite{TCS:YJ20}, Azar and Jacob-Fanani ask whether there exists an algorithm with poly-logarithmic competitiveness in the general setting. We answer in the positive by describing a deterministic algorithm with $\mathcal{O}(\log^5 m)$-competitiveness for \mpmd. This represents an exponential improvement over previous results and is only a poly-logarithmic factor away from the best known lower bound of $\Omega(\log m /  \log \log m)$.

\subsection{Related Work} The first algorithm with bounded competitive ratio was given by Emek et al \cite{STOC:EmeKutWat16} and focuses on the case where the metric space is known, finite of size $n$ and the adversary is oblivious. In this case, they give a randomized $\mathcal{O}(\log^2 n + \log \Delta)$-competitive algorithm, where $\Delta$ is the aspect ratio of the metric space. This was later improved to $\mathcal{O}(\log n)$ by \cite{SODA:AzaChiKap17}. These algorithms use the same approach of embedding the metric space in a tree with logarithmic distortion. Regarding lower bounds, \cite{SODA:AzaChiKap17} first gave an $\Omega(\sqrt{\log n})$ lower bound which was later improved to $\Omega(\log n /  \log \log n)$ by Ashlagi et al \cite{ARCO:AACCGKMWW17}. These lower bounds hold for a randomized algorithm against an oblivious adversary. Moreover, because the latter bound uses $m = \Theta(n)$ requests, it also implies the $\Omega(\log m /  \log \log m)$ bound described above. When the delay cost is a concave function instead of a linear one, Azar et al \cite{SODA:AzaRenVai21} give a randomized $\mathcal{O}(\log n)$-competitive algorithm. As a more general approach, Deryckere and Umboh \cite{ARCO:DU23} consider the \textsf{MPMD-Set} problem, where the time cost at a given instant is a function of the unmatched requests.

To work around the logarithmic lower bound, many papers focus on special cases of online matching with delay. For example, for two-point metrics, Emek et al \cite{AC:ESW17} give an optimal deterministic $3$-competitive algorithm. When the metric space is a tree with height $h$, Azar et al \cite{SODA:AzaChiKap17} provide a deterministic $\mathcal{O}(h)$-competitive algorithm. If the requests' arrival times follow a Poisson distribution, Mari et al \cite{TCS:MPRS25} give an algorithm with constant competitiveness. For a $k$-point uniform metric space, Liu et al \cite{ISAAC:LPWW18} give a $\mathcal{O}(k)$-competitive algorithm which supports convex time cost functions.

A closely related problem to the \mpmd problem is the online Minimum-cost Bipartite Matching with Delay problem (\textsf{MBPMD}). In \textsf{MBPMD}, requests additionally carry a sign (positive or negative) and only requests of opposite sign can be matched. For this problem, Azar et al \cite{SODA:AzaChiKap17} first gave a randomized $\mathcal{O}(\log n)$-competitive algorithm along with a $\Omega(\log^{1/3} n)$ lower bound. This lower bound was later improved to $\Omega(\sqrt{\log n / \log \log n})$ by Ashlagi et al \cite{ARCO:AACCGKMWW17}. Many results for \mpmd, such as the $\mathcal{O}(m)$-competitive algorithm by Bienkowski et al \cite{AOA:BKLS18} and the algorithm with $\mathcal{O}(m^{0.59})$-competitiveness by Azar and Jacob-Fanani \cite{TCS:YJ20} have a variant with the same competitiveness for  \textsf{MBPMD}. More recently, Kuo \cite{WAOA:K24} gave an algorithm with $\tilde{\mathcal{O}}(\sqrt{m})$-competitiveness for \textsf{MBPMD} on a line.

\subsection{Algorithm overview}

We now give a brief overview of the techniques and ideas used by our algorithm. In essence, the algorithm partitions the requests which arrived so far into components. Requests inside each component are matched using a greedy strategy.

\textbf{Offline approach}: Our algorithm can be described as a way to adapt Wattenhofer and Wattenhofer's algorithm for offline perfect weighted matching \cite{ENDM:WW04} to the online setting. Their algorithm is similar to Boruvska's algorithm in that we try to cover vertices with components and do so within at most $\log n$ rounds. In each round, some components are merged together to create bigger components. The main difference is that Boruvska's algorithm finds a minimum spanning tree and ends with a single component remaining. Meanwhile the algorithm by \cite{ENDM:WW04} finds a stable matching and terminates when the size of all components remaining is even. Within a component, they show that requests can be paired efficiently following the order of an Euler tour.

\textbf{Online approach}: For the online setting, we adapt the algorithm by \cite{ENDM:WW04} with an overall approach similar to how Gallager et al adapted the Boruvka algorithm to run in the distributed setting with the GHS algorithm \cite{TPLS:GHS83}. More specifically, we consider each component as operating independently. When a component of odd size encounters another component it can merge with, it waits for a period of time proportional to the distance between them before merging. We give each component a rank, similar to the disjoint-set data structure, so that merging only happens in the direction of increasing rank.

\textbf{Matching inside a component}: In the offline setting, finding a perfect matching inside a component can be done efficiently using an Euler tour. This is not the case in the online setting, as some requests may join the component late. To work around this issue, we pair requests inside a component using a greedy matching algorithm. Although this greedy approach is not efficient in the general setting \cite{SJC:RT81}, we show that the matching obtained is a good approximation in this context.

\textbf{Minimizing waiting trees}: If a component $C_A$ wants to merge with a component $C_B$, but $C_B$ has a lower rank than $C_A$, then $C_A$ must wait for $C_B$'s rank to increase (or become irrelevant). This relation of $C_A$ waiting for $C_B$ can be seen as an oriented \textit{waiting} edge from $C_A$ to $C_B$ in a graph. When looking at the set of all these \textit{waiting} edges, they form a forest. By a procedure called \textit{Waiting tree pruning}, we make sure that all trees in this forest have a logarithmic size at all times. This ensures that the overall waiting time can be amortized.

\textbf{Regular and special edges}: A component always tries to merge with its closest \textit{compatible} component. When such a merge occurs, the set of edges added are called \textit{regular edges}. The idea, based on \cref{lemma:decomp-shortest-opt} and formally proven by \cref{lemma:bound-regular} is that the total weight of regular edges can be bounded by the cost of the optimal matching. However, this approach has many shortcomings when used in the online setting: a component may incorrectly identify its closest compatible component because some requests arrived late. To address this, we introduce \textit{special edges} and merge problematic components using them. The idea, formally proved by \cref{lemma:bound-special}, is that the weight of special edges is small relative to the weight of the components they are merging. 
\section{Preliminaries}

\subsection{Metric Space}

We consider a metric space $\mathcal{M} = (S, g)$, where $g$ is a distance function on $S$. In the \mpmd problem, an input $\mathcal{I} = (r_i)_{i \in [\![1;m]\!]}$ arrives in an online way, where a request $r_i$ arrives at time $t(r_i) \geq 0$ and at position $x(r_i) \in S$. The algorithm does not know a priori $m$, which is guaranteed to be even, or $S$. It can only query the distance between the position of requests that have already arrived. We note that we use $g$ instead of $d$ for the distance, as $g$ will be rarely used while we reserve $d$ for the time-augmented distance defined later.

The algorithm must produce a perfect matching incrementally. For any two requests $p$ and $q$, they can be matched together at any time $t \geq \max(t(p), t(q))$. Given a set of requests $(p_i, q_i, t_i)_{i \in [\![1;m/2]\!]}$ matched by an algorithm, the cost of the matching is:
\begin{align*}
    \sum_{i=1}^{m/2} \left( g(x(p_i), x(q_i)) + |t_i - t(p_i)| + |t_i - t(q_i)| \right)
\end{align*}

We define the competitiveness of an algorithm as the worst-case ratio of the cost of this algorithm compared to an optimal algorithm that knows the entire input ahead of time.

\textbf{Time-augmented metric space}:

We can consider adding the time as part of the metric space: Let us define $S' = S \times \mathbb{R}_+$ and $\mathcal{M'} = (S', d)$ where:
\begin{align*}
    d((p, t_1),(q, t_2)) = g(p,q) + |t_1 - t_2|
\end{align*}
One can check that $\mathcal{M'}$ also defines a metric space. Moreover, the optimal solution for regular perfect matching on $\mathcal{M'}$ is closely tied to the optimal online solution for $\mathcal{M}$ in the following way, which is formally proven in the appendix:

\begin{restatable}{lemma}{OnlineOfflineSame}\label{lemma:online-offline-same}
Consider an input $\mathcal{I} = (r_i)_{i \in [\![1;m]\!]}$, let $OPT$ be the cost of the optimal solution for \mpmd on $\mathcal{I}$ in $\mathcal{M}$ and $OPT'$ be the cost of the optimal solution for the offline min-cost perfect matching on $\mathcal{I}$ in $\mathcal{M'}$. Then $OPT = OPT'$.
\end{restatable}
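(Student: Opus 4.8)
The plan is to prove the two inequalities $OPT' \le OPT$ and $OPT \le OPT'$ separately, the bridge in both directions being the observation that in any \mpmd solution the matching time of a pair can be lowered to the later of the two arrival times without increasing the cost. Throughout I use that the map $\varphi$ sending a request $r_i$ to the point $(x(r_i), t(r_i)) \in S'$ turns a perfect matching of the request set into a perfect matching of the corresponding points of $S'$, and conversely; this is a bijection on pairings (harmlessly so even if two requests share the same position and arrival time, since then their images coincide and any pair involving them contributes distance $0$ on both sides), and it is well defined because $m$ is even.

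First I would record the elementary fact that for a pair $p, q$ matched at time $t \ge \max(t(p), t(q))$, the time cost $|t - t(p)| + |t - t(q)| = (t - t(p)) + (t - t(q))$ is non-decreasing in $t$, hence minimized at $t = \max(t(p), t(q))$, where it equals $|t(p) - t(q)|$. Given any \mpmd solution $(p_i, q_i, t_i)_{i}$ on $\mathcal{I}$, replacing each $t_i$ by $\max(t(p_i), t(q_i))$ keeps the solution valid and does not increase its cost, which then becomes
\begin{align*}
\sum_{i=1}^{m/2}\Big( g(x(p_i), x(q_i)) + |t(p_i) - t(q_i)| \Big) = \sum_{i=1}^{m/2} d\big((x(p_i), t(p_i)),\, (x(q_i), t(q_i))\big),
\end{align*}
i.e. the cost in $\mathcal{M'}$ of the matching obtained by applying $\varphi$ to the pairing $(p_i, q_i)_i$. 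Minimizing over all \mpmd solutions gives $OPT' \le OPT$.

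For the converse I would start from an optimal offline matching in $\mathcal{M'}$, which pairs the points $(x(r_i), t(r_i))$, and read it back through $\varphi^{-1}$ as a pairing of the requests; matching each pair $(p,q)$ at time $\max(t(p), t(q))$ yields a valid \mpmd solution (every request is matched no earlier than its arrival), whose cost equals the $\mathcal{M'}$-cost of the matching by the same computation, namely $OPT'$. Hence $OPT \le OPT'$, and combining the two inequalities gives $OPT = OPT'$. There is no real obstacle in this argument; the only points requiring a line of care are the monotonicity-in-$t$ step that lets us restrict to $t_i = \max(t(p_i), t(q_i))$, and the verification that $\varphi$ gives an honest correspondence of perfect matchings (both essentially routine, the latter as already noted in the excerpt where $d$ is observed to be a metric).
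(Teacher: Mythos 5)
Your proof is correct and follows essentially the same route as the paper's: both directions hinge on the observation that the time cost of a pair $(p,q)$ matched at time $t \ge \max(t(p),t(q))$ is at least $|t(p)-t(q)|$ with equality at $t = \max(t(p),t(q))$, so that the \mpmd cost of a pairing equals the $\mathcal{M'}$-distance cost once matching times are chosen greedily. The only cosmetic difference is that you frame the correspondence explicitly as a bijection $\varphi$ on pairings, whereas the paper leaves this implicit.
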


For an edge $e = (u,v)$, with $u$ and $v$ in $S'$, we call $w(e)$ the weight of the edge and define it as $w(e) = d(u,v)$. For a set of edges $E$, we define $w(E)$ as $\sum_{e \in E} w(e)$.

We consider $V_{req}$ the multiset of all time-augmented requests an algorithm receives. We have $|V_{req}| = m$ and remark that the input to a \mpmd instance is uniquely defined by $V_{req}$.

We now give a useful lemma regarding the time-augmented space:
\begin{lemma} \label{lemma:arrival-time}
    Let $u \in S'$ be a request arriving at time $t(u)$. Then for any $l > 0$, by time $t(u) + l$, all requests $v$ such that $d(u,v) \leq l$ have arrived.
\end{lemma}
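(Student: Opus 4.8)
The plan is to simply unwind the definition of the time-augmented distance $d$ and read off the time coordinate. Write $u = (x(u), t(u))$ and, for an arbitrary request $v$, write $v = (x(v), t(v))$, both viewed as points of $S' = S \times \mathbb{R}_+$; here $t(v)$ is by definition the arrival time of $v$. By the definition of $d$ on $\mathcal{M'}$,
\begin{align*}
    d(u,v) = g(x(u), x(v)) + |t(u) - t(v)|.
\end{align*}

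Next I would use non-negativity of the metric $g$ together with the elementary bound $|a - b| \ge b - a$. This gives
\begin{align*}
    d(u,v) \;\ge\; |t(u) - t(v)| \;\ge\; t(v) - t(u).
\end{align*}
Hence, if $v$ is a request with $d(u,v) \le l$, then $t(v) - t(u) \le l$, i.e.\ $t(v) \le t(u) + l$. Since a request is, by definition, available from its arrival time onward, $v$ has arrived by time $t(u) + l$, which is exactly the claim.

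There is essentially no obstacle here: the statement is an immediate consequence of the fact that the time component is $1$-Lipschitz with respect to $d$ (indeed it appears additively and unweighted in the definition of $d$). The only point worth stating carefully is that $v$ should range over actual requests, so that the quantity $t(v)$ is well-defined as an arrival time; for arbitrary points of $S'$ the inequality still holds but the phrase ``have arrived'' would be vacuous.
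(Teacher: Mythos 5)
Your proof is correct and follows the same approach as the paper's: both observe that $|t(u)-t(v)| \le d(u,v)$ (yours spells out that this comes from non-negativity of $g$ in the definition of $d$) and conclude $t(v) \le t(u) + l$.
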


\begin{proof}
    Let $l > 0$ and $v \in S'$ be a request such that $d(u,v) \leq l$. We have $|t(v) - t(u)| \leq d(u,v) \leq l$. This implies that the arrival time $t(v)$ satisfies $t(v) \leq t(u) + d(u,v) \leq t(u) + l$.
\end{proof}

\subsection{Component Decomposition}

\begin{figure}[h]
\centering
\includegraphics[scale=0.6]{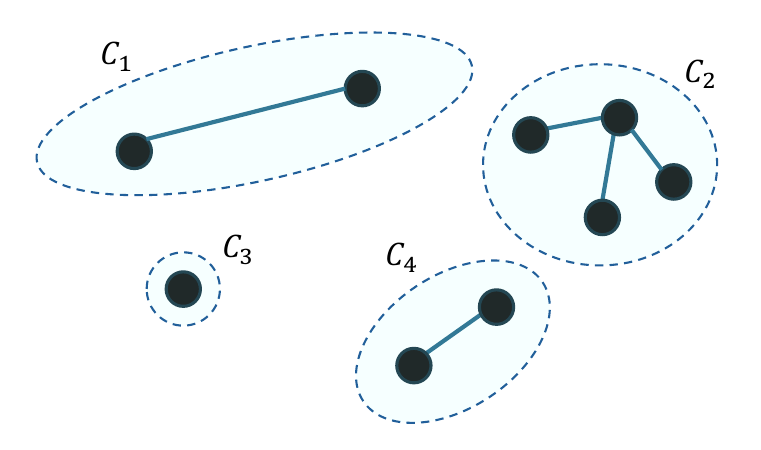}
\caption{Component decomposition of a graph with $9$ vertices into $4$ components}
\label{figure:component-decomposition}
\end{figure}

Our work relies extensively on component decomposition, i.e grouping multiple requests together inside a single component, and merging components together. This approach was already used successfully for offline weighted matching \cite{STOC:SupPlaRei80, ENDM:WW04}. Given a set of vertices $V$, we partition it into components, such that a component $C$ is defined by:
\begin{itemize}
    \item $V(C)$: the set of vertices it covers
    \item $T(C)$: a spanning tree over $V(C)$
\end{itemize}
When there is no ambiguity, we may sometimes identify a component with its set of vertices or its spanning tree. We also define $w(C) = w(T(C))$, the weight of its spanning tree. For the sake of our algorithm, when vertices are in a time-augmented space, we also define the following properties and give an informal description of their use:
\begin{itemize}
    \item $\rank(C)$: the rank of the component, which is an indicator of the size of $V(C)$. We want to make sure that $\rank(C) \leq \log |V(C)|$.
    \item $\nrank(C)$: nearby rank of the component. This can be set to $\bot$ or to a positive integer. When not $\bot$, it indicates that there is a "close by" component to $C$ whose rank is at least $\nrank(C)$. For simplicity, $\forall p \geq 1$, we define $\max(\bot, p)$ as $p$.
    \item $repr(C)$: representative of $C$. This is a vertex in $V(C)$ which is used to help identify $C$.
    \item $t_{max}(C) = \max_{v \in V(C)} t(v)$: the latest time of arrival of a request in $C$.
\end{itemize}
By partitioning $V$ into multiple components, we then define a component decomposition $\comp$ over $V$ as a set of components such that $V = \mathbin{\mathaccent\cdot\cup}_{C \in \comp} V(C)$. We also define the weight of the decomposition $w(\comp) = \sum_{C \in \comp} w(C)$.

When a component has an even number of vertices ($|V(\mathcal{C})|$ is even), we call it an even component. Otherwise we call it an odd component. We naturally extend the distance $d$ to components (we note that $d$ does not satisfy the triangle inequality on components):
\begin{align*}
    \forall C_1, C_2 \in \comp, \ d(C_1, C_2) = \min_{v_1 \in C_1, v_2 \in C_2} d(v_1, v_2)
\end{align*}

We now consider a new distance $D$ on components, which we call the \textit{compressed} distance. $D$ is obtained by compressing even components to a single point. To be more precise, for any two components $C_A, C_B$, and $l \geq 0$, $D(C_A, C_B) = l$ if and only if there exists $k \geq 1$, and components $C_A = C_1, C_2, \ldots, C_k = C_B$ such that $C_2, \ldots, C_{k-1}$ are even, $l = \sum_{i=1}^{k-1} d(C_i, C_{i+1})$ and no such path of lower weight exists. We note that $D$ is technically not a metric distance (it does not always satisfy the triangle inequality).

One of the main lemmas used for the analysis of our algorithm is the following. We note that it is a rephrasing of Lemma 2 from \cite{ENDM:WW04} and is formally proven in the appendix:
\begin{restatable}{lemma}{DecompShortestOpt}\label{lemma:decomp-shortest-opt}
    Let $\comp$ be a component decomposition over $V$. Let $\comp_{odd} \subseteq \comp$ be the set of odd components in $\comp$. Let $W_{OPT}$ be the weight of the optimal min-cost perfect matching over $V$. Moreover, for all $C \in \comp_{odd}$, we define:
    \begin{align*}
        r_C = \min_{C' \in \comp_{odd} \setminus \{C\}} D(C, C')
    \end{align*}
    Then:
    \begin{align*}
        \sum_{C \in \comp_{odd}} r_C \leq 2 W_{OPT}
    \end{align*}
\end{restatable}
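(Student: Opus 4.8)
The plan is to fix an optimal matching, extract from it for every odd component a short admissible path (in the sense of the definition of $D$) to some other odd component, and then argue that these paths overlap little.

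Fix an optimal min-cost perfect matching $M$ on $V$, so $W_{OPT}=w(M)$, and build a multigraph $G_M$ on the vertex set $\comp$: for every edge $\{u,v\}\in M$ with $u\in V(C)$, $v\in V(C')$ and $C\neq C'$, put an edge between $C$ and $C'$ of weight $d(u,v)$ (edges of $M$ internal to a single component are discarded). Two observations are immediate. First, $w(G_M)=\sum_{\text{crossing }\{u,v\}\in M} d(u,v)\le w(M)=W_{OPT}$. Second, $\deg_{G_M}(C)$ is the number of vertices of $V(C)$ that $M$ matches outside $V(C)$, which has the same parity as $|V(C)|$; hence the odd-degree vertices of $G_M$ are exactly the components in $\comp_{odd}$. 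Moreover, any walk $C=C_{i_0},C_{i_1},\dots,C_{i_s}=C'$ in $G_M$ whose intermediate vertices $C_{i_1},\dots,C_{i_{s-1}}$ are all even is an admissible path in the definition of $D$, and since $d(C_{i_j},C_{i_{j+1}})$ is at most the weight of the corresponding $G_M$-edge, such a walk witnesses $D(C,C')\le w(\text{walk})$.

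Next I decompose $G_M$. Applying the classical trail decomposition to each connected component of $G_M$, the edge set splits into pairwise edge-disjoint open trails, each joining two distinct odd components, together with some circuits lying in connected components that contain no odd component; discard the circuits. Now cut every trail at each odd component that it visits internally. This produces a family $\mathcal S$ of pairwise edge-disjoint \emph{elementary} sub-trails, each joining two odd components and having only even components in its interior; by the observation above, an elementary sub-trail $S$ with odd endpoints $C,C'$ satisfies $D(C,C')\le w(S)$. Fix an odd component $C$. Every $G_M$-edge incident to $C$ lies in a unique elementary sub-trail, as a boundary edge of it (since $C$ never occurs in the interior of a sub-trail), so the number of sub-trail-ends located at $C$ equals $\deg_{G_M}(C)$; a closed sub-trail based at $C$ accounts for $2$ such ends and an open one for $1$, so as $\deg_{G_M}(C)$ is odd there is at least one \emph{open} elementary sub-trail having $C$ as an endpoint. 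Fix such an $S_C$; then $r_C\le w(S_C)$.

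It remains to sum. We get $\sum_{C\in\comp_{odd}} r_C\le\sum_{C\in\comp_{odd}} w(S_C)$, and since every elementary sub-trail has exactly two (odd) endpoints it is chosen as $S_C$ for at most two components, so $\sum_{C} w(S_C)\le 2\sum_{S\in\mathcal S} w(S)\le 2\,w(G_M)\le 2\,W_{OPT}$, the middle inequality using that the members of $\mathcal S$ are edge-disjoint subgraphs of $G_M$. Hence $\sum_{C\in\comp_{odd}} r_C\le 2W_{OPT}$. The delicate step is the passage from trails to elementary sub-trails: a trail need not be a simple path and may revisit odd components, so one must check both that the cutting keeps the pieces edge-disjoint (needed for $\sum_S w(S)\le w(G_M)$) and that each odd component still owns at least one \emph{open} elementary sub-trail afterwards — the latter is precisely where the parity of $\deg_{G_M}(C)$ enters. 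Everything else is bookkeeping.
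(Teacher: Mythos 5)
Your proof is correct and follows essentially the same route as the paper: build the auxiliary multigraph on components from the optimal matching, observe that the odd-degree vertices are exactly the odd components, decompose the edges into edge-disjoint trails between odd components, extract sub-trails with only even interior vertices to certify upper bounds on $D$, and sum with the factor of $2$ coming from each (sub-)trail having two endpoints. The bookkeeping differs slightly — you cut every trail at all internal odd components and use a parity count of trail-ends, whereas the paper carves one even-interior sub-path per endpoint out of each of $k_{odd}/2$ trails — but the underlying argument is the same.
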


Finally, we give the following somewhat simple but really useful lemma. This lemma follows from the fact that $D$ is the distance obtained by $d$ from compressing some components to a single point, therefore $D$ never exceeds $d$:
\begin{lemma}\label{lemma:comp-distance}
    Let $C_1$ and $C_2$ be two components in a component decomposition. For any $u \in C_1, v\in C_2$:
    \begin{align*}
        D(C_1, C_2) \leq d(u,v)
    \end{align*}
\end{lemma}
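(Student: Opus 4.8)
The plan is to exhibit the single cheapest ``one-hop'' path in the definition of the compressed distance $D$ and observe that it already does the job. Concretely, in the definition of $D(C_1,C_2)$ take $k=2$ and the trivial path $C_1 = C_1, C_2 = C_2$. The requirement that the intermediate components $C_2,\dots,C_{k-1}$ be even is vacuous here, since for $k=2$ there are no indices strictly between $2$ and $k-1=1$, so this path is admissible regardless of whether $C_1$ or $C_2$ is even or odd. Its weight is $\sum_{i=1}^{k-1} d(C_i,C_{i+1}) = d(C_1,C_2)$.

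Next I would invoke the definition of $D$ as a minimum over all admissible paths: since the admissible one-hop path has weight $d(C_1,C_2)$, we get $D(C_1,C_2) \le d(C_1,C_2)$. (One should note in passing that this minimum is well-defined: a component decomposition partitions a finite vertex set $V$, so there are finitely many components, and among admissible paths it suffices to consider the ones that visit each component at most once, of which there are finitely many, since revisiting a component only adds non-negative terms $d(\cdot,\cdot)\ge 0$; hence the infimum is attained.)

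Finally, unfolding $d(C_1,C_2) = \min_{v_1 \in C_1,\, v_2 \in C_2} d(v_1,v_2)$ and using that $u \in C_1$ and $v \in C_2$ form one particular pair in that minimization, we conclude
\begin{align*}
D(C_1,C_2) \le d(C_1,C_2) = \min_{v_1 \in C_1,\, v_2 \in C_2} d(v_1,v_2) \le d(u,v),
\end{align*}
which is the claim.

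There is essentially no hard step here: the statement is immediate once one checks the vacuous-evenness subtlety for the two-node path and the well-definedness of the minimum defining $D$. The only thing to be careful about is not to over-think the ``$D$ is not a metric'' caveat — that caveat concerns the triangle inequality and is irrelevant to this monotonicity-type bound, which follows purely from $D$ being a shortest-path-style minimum that always admits the direct edge as a candidate.
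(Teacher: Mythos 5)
Your proof is correct and is essentially a careful unpacking of the paper's own (one-sentence) justification: the paper observes that $D$ is obtained from $d$ by compressing even components, so $D$ never exceeds $d$; you make this precise by exhibiting the admissible two-node path in the definition of $D$ and then minimizing over $u,v$. The vacuous-evenness remark and the well-definedness aside are fine but not needed beyond what the paper already implicitly assumes.
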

\section{Warmup}

In this section, we will briefly describe the algorithms and ideas our online procedure is based on as they appear in the offline setting. We note that the main difficulty in this paper lies in adapting these algorithms to the online setting and not their offline version.

\subsection{Greedy matching}

We consider the following greedy algorithm for the offline matching:
\begin{algobox}{\textsc{GREEDY}}
    \algoHead{Greedy perfect matching for a metric graph $G$}
    \begin{algorithmic}[1]
        \While{there are unmatched vertices}
            \State Let $u$ and $v$ be the closest unmatched vertices in $G$
            \State Match $u$ and $v$
        \EndWhile
    \end{algorithmic}
\end{algobox}

Although this algorithm is simple, it is also inefficient. In fact, Reingold and Tarjan \cite{SJC:RT81} showed that there exist metric graphs where \textsc{GREEDY} returns a matching of weight $\Theta(n^{\log 1.5}) \cdot OPT$ where $n$ is the number of vertices and $OPT$ the weight of the optimal matching, this bound is known to be tight. In their paper, Azar and Jacob-Fanani \cite{TCS:YJ20} use a slightly modified version of \textsc{GREEDY} to design an algorithm for matching with delay with competitiveness $\mathcal{O}(\varepsilon^{-1} \cdot n^{\log (1.5 + \varepsilon)}) \approx \mathcal{O}(n^{0.59})$ . Our goal is to obtain a poly-logarithmic competitive algorithm, hence using an approach based on \textsc{GREEDY} seems like a bad idea. However, we show that the output of \textsc{GREEDY} can be bounded in an efficient way by the weight of the optimal traveling salesman tour:

\begin{theorem}\label{theorem:greedy-offline-bound}
    Consider a metric graph $G$. Let $M$ be a matching returned by \textsc{GREEDY} and $w(M)$ its weight. Let $OPT_{TSP}$ be the optimal weight of the traveling salesman tour for graph $G$, then:
    \begin{align*}
        w(M) \leq \frac{1}{4}(\lceil \log n \rceil + 1) \cdot OPT_{TSP}
    \end{align*}
\end{theorem}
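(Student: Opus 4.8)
\textsc{GREEDY} produces its edges in the order it creates them, say $e_1,\dots,e_{n/2}$, and since each step picks a closest unmatched pair, $w(e_1)\le w(e_2)\le\dots\le w(e_{n/2})$ and $w(M)=\sum_t w(e_t)$. Two elementary facts drive the proof. First, just before $e_t$ is added, the set $U_t$ of still‑unmatched vertices is exactly the set of endpoints of $e_t,\dots,e_{n/2}$ (so $|U_t|=n-2t+2$), and by the greedy rule every two vertices of $U_t$ are at distance $\ge w(e_t)$. Second, deleting the endpoints of $e_1,\dots,e_q$ and rerunning \textsc{GREEDY} on the remaining set $U_{q+1}$ reproduces exactly $e_{q+1},\dots,e_{n/2}$; hence $U_{q+1}$ is again a \textsc{GREEDY} instance, on $n-2q$ points, whose optimal TSP tour has weight at most $OPT_{TSP}$ (shortcut the optimal tour of $G$).

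\textbf{The crucial inequality.} I will show that in any instance on $N$ points the $\lfloor N/4\rfloor$ cheapest greedy edges have total weight at most $\tfrac14 OPT_{TSP}$. Fix an optimal tour $\tau$ and, for a vertex $v$, let $\mathrm{step}(v)=i$ where $v\in e_i$. For an edge $\{u,v\}$ of $\tau$ set $s=\min(\mathrm{step}(u),\mathrm{step}(v))$; then $u,v\in U_s$, so $w(\{u,v\})\ge w(e_s)$. Summing over the $N$ edges of $\tau$ gives $OPT_{TSP}\ge\sum_s c_s\,w(e_s)$, where $c_s$ is the number of tour edges whose $\min$-step equals $s$, and $\sum_s c_s=N$. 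The key point is that an edge has $\min$-step $\le t$ exactly when it is incident to one of the $2t$ vertices matched by $e_1,\dots,e_t$, so $\sum_{s\le t}c_s\le 2\cdot 2t=4t$ for every $t$. Writing $w(e_i)=\sum_{k\le i}\delta_k$ with $\delta_k=w(e_k)-w(e_{k-1})\ge 0$ and swapping summation order, $\sum_s c_s w(e_s)=\sum_k\delta_k\big(\sum_{s\ge k}c_s\big)\ge\sum_k\delta_k\max\!\big(0,\,N-4(k-1)\big)\ge\sum_{k\le\lfloor N/4\rfloor}\delta_k\big(N-4(k-1)\big)=4\sum_{s\le\lfloor N/4\rfloor}w(e_s)$, using $4\lfloor N/4\rfloor\le N$ in the last step. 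This proves the claim.

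\textbf{Recursion.} Apply the claim to $G$ itself: $w(e_1)+\dots+w(e_{\lfloor n/4\rfloor})\le\tfrac14 OPT_{TSP}$. The remaining greedy edges match the point set $U_{\lfloor n/4\rfloor+1}$, which is a \textsc{GREEDY} instance on $n-2\lfloor n/4\rfloor\le\tfrac n2+1$ points with TSP optimum $\le OPT_{TSP}$; apply the claim again to its cheapest quarter, and so on. Each round shrinks the instance roughly by half and pays $\le\tfrac14 OPT_{TSP}$, until only two points remain, where the single remaining greedy edge equals half the two‑point tour and thus has weight $\le\tfrac12 OPT_{TSP}$. Checking the instance sizes shows the process uses $\lceil\log n\rceil$ rounds in total — $\lceil\log n\rceil-1$ "quarter" rounds and one final "half" round — so $w(M)\le\big(\tfrac14(\lceil\log n\rceil-1)+\tfrac12\big)OPT_{TSP}=\tfrac14(\lceil\log n\rceil+1)\,OPT_{TSP}$.

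\textbf{Main obstacle.} The delicate steps are the counting bound $\sum_{s\le t}c_s\le 4t$ — which is what supplies the factor $\tfrac14$ rather than the naive $\tfrac12$ one gets from bounding each $w(e_t)$ separately — and verifying that the recursion really terminates in $\lceil\log n\rceil$ rounds when $n$ is not a power of two (the "$+1$" in the bound absorbs the floor/rounding slack). Everything else follows directly from the two setup observations.
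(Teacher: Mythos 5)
Your proof is correct, and it takes a genuinely different route from the paper's. The paper invokes Lemma (1) of Rosenkrantz, Stearns, and Lewis as a black box: assign to each vertex a label $l_p$ equal to the weight of its matched edge, verify the two hypotheses of that lemma (pairwise min-label lower bound on distances, and $l_p \le \tfrac12 OPT_{TSP}$), and read off $\sum l_p \le \tfrac12(\lceil \log n\rceil+1)OPT_{TSP}$, hence the claimed bound after observing $\sum l_p = 2w(M)$. You instead give a direct, self-contained argument: sort the greedy edges by weight, charge the $\lfloor N/4\rfloor$ cheapest to the optimal tour via a min-step charging scheme (the counting bound $\sum_{s\le t}c_s\le 4t$ is exactly what produces the $\tfrac14$ constant), then delete those edges' endpoints and observe the remainder is itself a valid \textsc{GREEDY} instance on a point set whose TSP optimum can only be smaller; recurse. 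The round count checks out cleanly: writing $n_i = 2m_i$, one round maps $m_i \mapsto \lceil m_i/2\rceil$, so $m_0 = n/2$ reaches $1$ after exactly $\lceil\log(n/2)\rceil = \lceil\log n\rceil - 1$ quarter-rounds, leaving one final two-point round contributing $\tfrac12 OPT_{TSP}$, which yields precisely $\tfrac14(\lceil\log n\rceil+1)OPT_{TSP}$. In effect you are re-deriving, specialized to greedy matchings, the content of the Rosenkrantz--Stearns--Lewis lemma; the advantage is transparency about where the $\log n$ factor comes from and no reliance on an external result, at the cost of length. (One minor wording nit: the last step of your ``crucial inequality'' chain, labeled as an equality and justified by $4\lfloor N/4\rfloor\le N$, is really an inequality $\ge$ unless $4\mid N$; the direction is still the one you need, so nothing breaks.)
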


\begin{proof}
    Our proof relies on a lemma used to analyze TSP heuristics by Rosenkrantz et al \cite{SJC:RSL77}:

    \begin{lemma}[Lemma (1) of \cite{SJC:RSL77}] \label{lemma:metric-tsp}
         Suppose that for a metric graph $G = (V,g)$ with $n$ nodes, there is a mapping assigning each node $p$ a number $l_p$ such that the following two conditions hold:
         \begin{itemize}
             \item $g(p,q) \geq \min(l_p, l_q) \ \forall p,q \in V \ \text{with} \ p \ne q$
             \item $l_p \leq \frac{1}{2} OPT_{TSP} \ \forall p \in V$
         \end{itemize}
         Then $\sum l_p \leq  \frac{1}{2}(\lceil \log n \rceil + 1) OPT_{TSP}$
    \end{lemma}
    Let $p \in V$, we consider $q \in V$ the node matched with $p$ in $M$. We set $l_p = l_q = g(p,q)$ the length of the matched edge. Let us show that this definition satisfies the requirements of \cref{lemma:metric-tsp}:
    \begin{itemize}
        \item Let $p,q$ be two distinct nodes in $G$. Without loss of generality, let us assume that \textsc{GREEDY} matched $p$ before $q$ (if they got matched together any order is fine). Because right before $p$ was matched, both $p$ and $q$ were unmatched and the greedy takes the unmatched edge of smallest weight, it means that the edge used to match $p$ has weight at most $g(p,q)$. Therefore $g(p,q) \geq l_p \geq \min(l_p, l_q)$.
        \item Let $p \in V$, we assume $p$ got matched to some $q \in V$, so $l_p = g(p,q)$. The TSP consists of two paths between $p$ and $q$. Given that we are in a metric graph, we get $OPT_{TSP} \geq 2 g(p,q)$ and therefore $l_p \leq \frac{1}{2} OPT_{TSP}$
    \end{itemize}
    We can thus apply \cref{lemma:metric-tsp} on $(l_p)$:
    \begin{align*}
        \sum l_p \leq  \frac{1}{2}(\lceil \log n \rceil + 1) OPT_{TSP}
    \end{align*}
    We remark that by the definition of $(l_p)$, the weight of every matched edge appears exactly twice (one for each endpoint) in $(l_p)$, therefore $\sum l_p = 2 w(M)$. The theorem follows from the last two equations.
\end{proof}

The fact that we can bound the cost of the greedy matching in some way by the length of the optimal TSP tour is a key idea for our overall algorithm. Because we are in a metric graph, the length of this TSP tour can be bounded by some spanning tree over the vertices of $G$. After adapting this algorithm to the online setting, the main challenge that remains is to make sure we only run the greedy on subgraphs such that the sum of spanning trees covering them can be bounded in some way by the optimal online matching cost.

\subsection{Component based matching}

The previous bound on the greedy algorithm relies on the fact that we can give a proper upper bound on the weight of spanning trees covering all requests. We remark that the approximation algorithm for minimum weighted matching given by Wattenhofer and Wattenhofer \cite{ENDM:WW04} has such property. Given a forest $\mathcal{F}$, the component decomposition based on it is the one where each tree in the forest is its own component. We give a simplified version of their protocol. By "closest odd component", we mean closest using the distance $D$ which compresses even components to a single point. We note that removing edges until there are no cycles is only relevant if there are ties when considering $P$.

\begin{algobox}{\textsc{OFFLINE\_COMPONENT\_MATCHING}}
    \begin{algorithmic}[1]
        \State $\mathcal{F}_0 \gets \emptyset$
        \For{$i = 0...\lfloor \log |V| \rfloor$}
            \State Let $\comp_i$ be the component decomposition based on $\mathcal{F}_i$
            \For{each component $C \in \comp_i$}
                \State Match vertices in $V(C)$ until there is at most one unmatched vertex left
            \EndFor
            \State $\mathcal{F}_{i+1} \gets \mathcal{F}_{i}$
            \For{each odd component $C \in \comp_i$}
                \State Let $P$ be a shortest path from $C$ to its closest distinct odd component
                \State $\mathcal{F}_{i+1} \gets \mathcal{F}_{i+1} \cup P$
            \EndFor
            \State Remove edges in $\mathcal{F}_{i+1}$ until it contains no more cycles
        \EndFor
    \end{algorithmic}
\end{algobox}

This algorithm keeps merging odd components between each other until there are only even components left. The main idea for the analysis is that using \cref{lemma:decomp-shortest-opt}, one can prove that the weight of the edges added to the forest in each iteration is at most $W_{OPT}$. The authors of \cite{ENDM:WW04} match vertices using an Euler tour technique. However, because we can bound the weight of a spanning tree on the vertices, we can instead use our greedy matching algorithm. One can prove that with this approach, we get a $\mathcal{O}(\log^3 |V|)$-approximation for the min-cost perfect weighted matching problem.

The advantage of using the greedy matching instead of an Euler tour, and the approximation algorithm from \cite{ENDM:WW04} instead of the blossom algorithm, is that these algorithms use simple techniques. To be more precise, both of them only require knowledge of the closest vertex/component of a given vertex/component to produce a matching. This is something that can potentially be achieved with an online algorithm without too much overhead, compared to approaches based on alternating paths, which are difficult to adapt online. 

The rest of this paper is therefore dedicated to explaining how to adapt these two algorithms to the online setting to obtain a $\mathcal{O}(\log^5 m)$-competitive algorithm.

\section{Greedy algorithm for online matching}

Similarly to what Azar and Jacob-Fanini did \cite{TCS:YJ20}, a natural idea to adapt this offline greedy algorithm to the online setting would be to wait for $d(u,v)$ units of time after an edge $e=(u,v)$ appears before matching its endpoints. However, this approach fails because we are not guaranteed to match the closest pair of unmatched vertices. Instead we show that by slightly modifying the greedy algorithm and waiting twice that duration: $2d(u,v)$, we get the desired property.

\begin{algobox}{\textsc{GREEDY\_ITERATION}(t)}
    \algoHead{Iteration of an online perfect matching algorithm at time $t$}
    \begin{algorithmic}[1]
        \If{$G$ has at least $2$ unmatched requests}
            \For{each unmatched request $u \in S'$}
                \State Let $v$ be the closest unmatched request to $u$ according to $d$
                \If{$t \geq t(u) + \textbf{2} \cdot d(u,v)$}
                    \State Match $u$ and $v$
                \EndIf
            \EndFor
        \EndIf
    \end{algorithmic}
\end{algobox}

\begin{algobox}{GREEDY\_ONLINE}
    \algoHead{Online perfect matching for a metric space}
    \begin{algorithmic}[1]
        \State \textbf{At every moment t}
        \State Run \textsc{GREEDY\_ITERATION}(t)
    \end{algorithmic}
\end{algobox}

We show that this online greedy algorithm achieves properties similar to the offline version (\cref{theorem:greedy-offline-bound}), but with a worse constant factor. The formal proof uses the same idea as the offline version and is given in the appendix.

\begin{restatable}{lemma}{GreedyUpperBound}\label{theorem:greedy-upper-bound}
    Let $C_{GREEDY}$ be the total cost of the matching achieved by \textsc{GREEDY\_ONLINE} and let $OPT_{TSP}$ be the weight of the optimal TSP tour for $d$ over all requests matched by the algorithm. Then:
    \begin{align*}
        C_{GREEDY} \leq \frac{5}{2}(\lceil \log n \rceil + 1) OPT_{TSP}
    \end{align*}
\end{restatable}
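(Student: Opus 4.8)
The plan is to mirror the offline argument of Theorem~\ref{theorem:greedy-offline-bound}, assigning to each request a "level" $l_p$ and invoking Lemma~\ref{lemma:metric-tsp}, but the levels must now account for the waiting time that \textsc{GREEDY\_ONLINE} pays in addition to the spatial distance. For a request $p$ eventually matched to $q$ at some time $t_{pq}$, the natural choice is $l_p = l_q = d(u,v)$ where $u = (x(p), t(p))$ and $v = (x(q), t(q))$ are the time-augmented points, i.e.\ $l_p$ is the time-augmented distance between the matched requests. First I would verify the separation hypothesis of Lemma~\ref{lemma:metric-tsp} in the time-augmented metric $\mathcal{M}'$: given two distinct matched requests $p$ and $q$, suppose without loss of generality that \textsc{GREEDY\_ONLINE} matches $p$ (to some partner) no later than it matches $q$. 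At the moment $p$ is matched, both $p$ and $q$ have arrived (by Lemma~\ref{lemma:arrival-time}, since $p$'s matching happens at time $\ge t(p) + 2 d(u, \cdot)$, so in particular after $t(p)$, and we will need that $q$ has arrived too — this requires a short argument that the partner chosen for $p$ is at time-augmented distance at least something, or simply that both are present; I expect one needs the condition $t \ge t(u) + 2 d(u,v)$ together with Lemma~\ref{lemma:arrival-time} to guarantee $q$ has arrived) and both are unmatched. Since \textsc{GREEDY\_ITERATION} matches $p$ only to its \emph{closest} currently-unmatched request $v'$, we get $d(\text{point of }p, \text{point of }v') \le d(u,v)$, hence $l_p \le d(u,v)$, giving $d(u,v) \ge l_p \ge \min(l_p, l_q)$ in $\mathcal{M}'$.

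Next I would check the second hypothesis, $l_p \le \tfrac12 OPT_{TSP}$: any TSP tour over the time-augmented requests contains two disjoint paths between the points of $p$ and its partner $q$, and by the triangle inequality in $\mathcal{M}'$ each has length at least $d(u,v)$, so $OPT_{TSP} \ge 2 d(u,v) = 2 l_p$. Applying Lemma~\ref{lemma:metric-tsp} then yields $\sum_p l_p \le \tfrac12(\lceil \log n \rceil + 1) OPT_{TSP}$, and since each matched edge contributes its time-augmented length twice, $\sum_p l_p = 2 \sum_{\text{edges}} d(u_i, v_i)$, i.e.\ the total time-augmented length of the matching is at most $\tfrac14(\lceil \log n \rceil + 1) OPT_{TSP}$. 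The remaining gap to the claimed $\tfrac52$ constant is exactly the cost of waiting: when $p$ is matched to $q$ at time $t_{pq}$, the algorithm pays $|t_{pq} - t(p)| + |t_{pq} - t(q)|$ in delay on top of $g(x(p), x(q))$, whereas $d(u,v) = g(x(p),x(q)) + |t(p) - t(q)|$.

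The main obstacle — and the part I would spend the most care on — is bounding the waiting overhead. When \textsc{GREEDY\_ITERATION} matches $u$ and $v$ at time $t$, it does so the first instant $t \ge \max\big(t(u) + 2 d(u,v'), t(v) + 2 d(v, u')\big)$ where $v', u'$ are the respective closest unmatched requests at earlier times; but once they are each other's mutual nearest unmatched request the relevant bound is $t = \max(t(u), t(v)) + 2 d(u,v)$ (one has to argue the match fires essentially as soon as both timers permit, using that the nearest-neighbor distance only decreases over time as vertices get removed, so the "closest unmatched" distance at the firing time is $\le d(u,v)$, hence $t \le \max(t(u),t(v)) + 2d(u,v)$; conversely it is $\ge$ that). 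Writing $a = t(u), b = t(v)$ with $a \le b$ say, the delay cost is $(t - a) + (t - b) \le (b - a + 2 d(u,v)) + 2 d(u,v) = |t(u)-t(v)| + 4 d(u,v) \le 5 d(u,v)$, using $|t(u)-t(v)| \le d(u,v)$. Summing over all matched pairs and combining with the $\tfrac14(\lceil\log n\rceil+1) OPT_{TSP}$ bound on $\sum d(u_i,v_i)$ (noting $g(x(p),x(q)) \le d(u,v)$, which is already subsumed), the total cost is at most $\big(\tfrac14(\lceil\log n\rceil+1) \cdot 5 + \text{lower-order}\big) OPT_{TSP}$; one has to bookkeep carefully so the spatial-plus-delay cost per edge, summed, comes out to exactly $\tfrac52(\lceil \log n\rceil + 1) OPT_{TSP}$ rather than something larger — in particular one should relate the delay not to $5 d(u,v)$ per edge globally but carry the $\tfrac12(\lceil\log n\rceil+1)$ factor from Lemma~\ref{lemma:metric-tsp} through the whole cost accounting at once (each $l_p$ already "is" a matched half-edge length, and cost per edge $\le 5 d(u_i,v_i)$, so total cost $\le 5 \sum_i d(u_i,v_i) = \tfrac52 \sum_p l_p \le \tfrac52 \cdot \tfrac12(\lceil\log n\rceil+1)OPT_{TSP}$ — wait, that gives $\tfrac54$; the correct route is total cost $\le 5\sum_i d(u_i,v_i)$ and $2\sum_i d(u_i,v_i) = \sum_p l_p \le \tfrac12(\lceil\log n\rceil+1)OPT_{TSP}$, so total cost $\le \tfrac52(\lceil\log n\rceil+1)OPT_{TSP}$, matching the claim). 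Verifying that every match indeed fires by the time its mutual-nearest timer allows, and handling the edge case where the global matching halts with requests still present is precluded by the even-$m$ assumption, are the technical points to nail down.
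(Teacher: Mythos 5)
The core of your plan is the right one — apply Lemma~\ref{lemma:metric-tsp} in the time-augmented metric and separately bound the waiting overhead per edge — but there is a real gap in the separation argument, together with a bookkeeping inconsistency that partially masks it.

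You set $l_p = d(u,v)$, the full time-augmented length of $p$'s matched edge. Your justification of the hypothesis $d(P,Q) \geq \min(l_P, l_Q)$ is: at the moment $P$ gets matched, $Q$ is unmatched and present, and \textsc{GREEDY\_ITERATION} matches $P$ to its \emph{closest} unmatched request, so $l_P \leq d(P,Q)$. This only holds when $P$ plays the role of the scanning request $u$ in \textsc{GREEDY\_ITERATION}. If $P$ is instead the request $v$ chosen by \emph{someone else's} search — say $a$ scanned, found $P$ closest, waited $2d(a,P)$ and fired — then $P$'s matched edge has length $d(a,P)$, which need not be at most $d(P,Q)$. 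Concretely, at the firing time $t \geq t(a) + 2d(a,P) \geq t(P) + d(a,P)$, the only thing one can infer about an unmatched $Q$ is that $P$'s own timer for $Q$ hasn't yet elapsed, i.e.\ $t(P) + 2d(P,Q) \geq t$; combined with $t \geq t(P) + d(a,P)$ this gives $d(P,Q) \geq d(a,P)/2$, \emph{not} $d(P,Q) \geq d(a,P)$. So the inequality you need fails by a factor of $2$ precisely in this case. The paper fixes this by taking $l_p = d(u,v)/2$ and doing the explicit two-case analysis (first vs.\ second endpoint); the $1/2$ is not a cosmetic choice, it is exactly what makes the second case close.

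The bookkeeping at the end is also internally inconsistent with your own definition. With $l_p = d(u,v)$ you have $\sum_p l_p = 2\sum_{\text{edges}} d(u_i,v_i)$, so the Rosenkrantz bound $\sum_p l_p \leq \tfrac12(\lceil\log n\rceil+1)\,OPT_{TSP}$ gives $\sum_{\text{edges}} d \leq \tfrac14(\lceil\log n\rceil+1)\,OPT_{TSP}$ and hence total cost $\leq 5\sum_{\text{edges}} d \leq \tfrac54(\lceil\log n\rceil+1)\,OPT_{TSP}$ — the $\tfrac52$ you claim does not follow from your own inequalities. The stated constant $\tfrac52$ only comes out if $\sum_p l_p = \sum_{\text{edges}} d$, i.e.\ if $l_p$ is a \emph{half}-edge length, which is the paper's choice and the one that makes the separation step go through. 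The per-edge delay bound $\leq 5d(u,v)$ is slightly looser than the paper's $\leq 4d(u,v)$ (the paper shows the match fires at \emph{exactly} $t(u) + 2d(u,v)$ with $t(u)\leq t(v)$), but either suffices; the genuine missing piece is the $1/2$ in the level definition and the two-case argument that it makes possible.
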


We described an online matching algorithm whose cost is bounded, up to a logarithmic factor, to the weight of a tour on its input. The rest of this paper focuses on building on this foundation to get a $\mathcal{O}(\log^5 m)$-competitive algorithm.
\section{Algorithm}

We now give a proper description, along with pseudocode of our main online matching algorithm. We introduce many notations which, while not useful for the algorithm itself, are necessary for its analysis. As a brief description of our algorithm, it manages and updates a component decomposition $\comp$ over the set of requests which arrived so far. Requests are then merged within components using the greedy algorithm described previously.

\subsection{Overall structure}

We now describe the overall architecture of our algorithm. For simplicity, our algorithm is described as running continuously (i.e at all times). But at any given time, excluding new requests coming, only a finite amount of known events are planned to happen. Therefore, this algorithm can be adapted to run only at some finite set of points in time.

\begin{algobox}{\textsc{ONLINE\_MATCHING}}
    \algoHead{Deterministic algorithm for online matching with delay}
    \begin{algorithmic}[1]
        \State $\comp \gets \emptyset$, $\forall r \geq 0:  S_r \gets \emptyset, R_r \gets \emptyset$
        \State \textbf{At every moment $t$}
        \State \textsc{RECEIVE\_NEW\_REQUESTS}(t)
        \State \textbf{repeat}
        \State \hspace{\algorithmicindent} \textsc{COMBINE\_COMPONENTS}(t)
        \State \hspace{\algorithmicindent} \textsc{PRUNE\_WAITING\_TREES}(t)
        \State \textbf{until} no merge happened
        \State \textsc{RUN\_GREEDY}(t)
    \end{algorithmic}
\end{algobox}

We will provide a summary of each function used in this algorithm:
\begin{itemize}
\item \textsc{RECEIVE\_NEW\_REQUESTS}: the algorithm looks for requests that have just arrived at time $t$. For each of these requests, it creates a new component of rank $0$ containing this request as a singleton.
\item \textsc{COMBINE\_COMPONENTS}: also called the combining step, the algorithm looks at each odd component, then it identifies the closest compatible component to this component. If they are close enough related to the current time and the rank of the target component is at least as big as the current component's rank, then they merge together.
\item \textsc{PRUNE\_WAITING\_TREES}: also called the pruning step, if the closest component to an odd component is another odd component with lower rank, the former component must wait for the latter's rank to be big enough before merging. The pruning step ensures that for each active (non-waiting) component, at most a logarithmic amount of odd components are waiting on it.
\item \textsc{RUN\_GREEDY}: As soon as a component has two unmatched requests, we run the greedy algorithm on these requests. Due to the previous component decomposition, we are guaranteed that we can bound the space and time cost when running the greedy algorithm in this setting.
\end{itemize}

\subsection{Handling new requests}

As explained above, when receiving a new request, we create a simple component of rank $0$ containing it, then add it to the set of components.

\begin{algobox}{RECEIVE\_NEW\_REQUESTS(t)}
    \algoHead{Handle arrival of new requests}
    \begin{algorithmic}[1]
        \For{each new request $v$}
            \State Let $C$ be a new component such that $V(C) = \{(v, t)\}$, $\rank(C) = 0$, $\nrank(C) = \bot$ and $repr(C) = (v,t)$
            \State $\comp \gets \comp \cup \{C\}$
        \EndFor
    \end{algorithmic}
\end{algobox}

\subsection{Merging components}

\begin{figure}[h]
\centering
\includegraphics[scale=0.6]{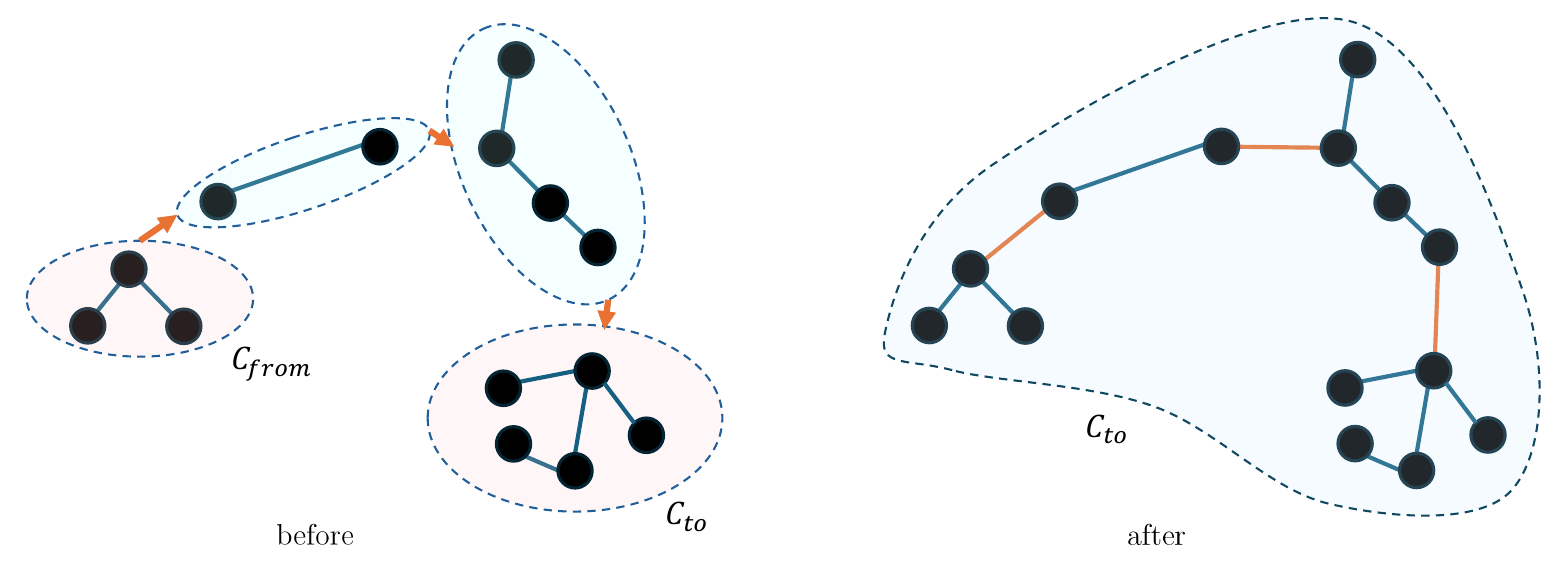}
\caption{Merging $C_{from}$ into $C_{to}$. The newly added edges (in orange) are given the rank $r$ of the merge}
\label{figure:merging}
\end{figure}

Merging components is a subprocedure of the combining and pruning steps. This procedure is always invoked when a component's set of vertices or spanning tree is updated. It takes as input a source $C_{from}$, a destination $C_{to}$ and a rank $r$. We say that $C_{from}$ is merged into $C_{to}$. Moreover, there might also be even components along the path $P$ from $C_{from}$ to $C_{to}$ which are merged into $C_{to}$. $P$ is chosen as the shortest path from $C_{from}$ to $C_{to}$ according to the distance $D$.

Edges newly added to the component decomposition by a merge of rank $r$ are called edges of rank $r$. The merging procedure has two variants whose pseudocode is described below:
\begin{itemize}
    \item A special merge: Edges added by a special merge of rank $r$ are added to the set $S_r$ and are called special edges. We use special merges in a way that the weight of special edges is minimal compared to the weight of the components being merged (see \cref{lemma:bound-special}).
    \item A regular merge: Edges added by a regular merge of rank $r$ are added to the set $R_r$ and are called regular edges. We use regular merges in a way that the weight of the edges can be bounded by the optimal matching cost (see \cref{lemma:bound-regular}). A regular merge performs an additional modification compared to a special one: given $l = D(C_{from}, C_{to})$, it makes sure all components within distance $l / (r + 1)$ have their nearby rank set to at least $r$.
\end{itemize}

\begin{algobox}{SPECIAL\_MERGE($C_{from}, C_{to}, r$)}
    \algoHead{Merge component $C_{from}$ into $C_{to}$ with special edges}
    \begin{algorithmic}[1]
        \State Let $P$ be the shortest path from $C_{from}$ to $C_{to}$
        \State Let $D$ be the set of even components along $P$ (excluding endpoints)
        \State Let $E$ be the set of edges in $P$ outside of components
        \State $V(C_{to}) \gets V(C_{to}) \cup V(C_{from}) \cup \bigcup_{C \in D} V(C)$
        \State $T(C_{to}) \gets T(C_{to}) \cup T(C_{from}) \cup E \cup \bigcup_{C \in D} T(C)$
        \State $S_{r} \gets S_{r} \cup E$
        \State $\comp \gets \comp \setminus (D \cup \{C_{from}\})$
    \end{algorithmic}
\end{algobox}

\begin{algobox}{REGULAR\_MERGE($C_{from}, C_{to}, r$)}
    \algoHead{Merge component $C_{from}$ into $C_{to}$ with regular edges}
    \begin{algorithmic}[1]
        \State Let $P$ be the shortest path from $C_{from}$ to $C_{to}$
        \State Let $D$ be the set of even components along $P$ (excluding endpoints)
        \State Let $E$ be the set of edges in $P$ outside of components
        \State $H \gets \{C \in \comp, C \ne C_{from}, D(C_{from}, C) < D(C_{from}, C_{to}) / (r + 1)\}$
        \For{$C \in H \setminus D$}
            \State $\nrank(C) \gets \max (\nrank(C), r)$
        \EndFor
        \State $V(C_{to}) \gets V(C_{to}) \cup V(C_{from}) \cup \bigcup_{C \in D} V(C)$
        \State $T(C_{to}) \gets T(C_{to}) \cup T(C_{from}) \cup E \cup \bigcup_{C \in D} T(C)$
        \State $R_{r} \gets R_{r} \cup E$
        \State $\comp \gets \comp \setminus (D \cup \{C_{from}\})$
    \end{algorithmic}
\end{algobox}

We can now describe the combining step. A crucial point for the combining step is the notion of compatible components. This concept replaces the notion of odd components used by the offline algorithm when looking to merge components with each other. Let $C_1$ be an odd component. We say that another component $C_2 \ne C_1$ is \textbf{compatible} with $C_1$ if at least one of the following three conditions is satisfied:
\begin{itemize}
    \item $C_2$ is odd: we can always try to merge with an odd component
    \item $\rank(C_2) \geq \rank(C_1)$: $C_2$ used to be odd previously but we missed it because $C_1$ was "late", this can be seen as a way to catch up on it.
    \item $\nrank(C_2) > \rank(C_1)$: we know there is a component with rank at least $\nrank(C_2)$ next to $C_2$, we can merge with $C_2$ then merge with this nearby component at a low cost.
\end{itemize}
We note that if $C_2$ is not compatible with $C_1$, it implies that it is even and $\rank(C_2) < \rank(C_1)$. 

During the combining step, we try to merge odd components to their closest compatible component. Let $C_1$ be an odd component and $C_2$ be its closest compatible component. To simplify the algorithm, if $C_1$ has no compatible component (it can happen if the number of requests which arrived so far is odd), we define $C_2$ as being a fake component with distance $+ \infty$ from $C_1$. Let $l = D(C_1, C_2)$, the first step is to wait until time $t_{max}(C_1) + 2l$ to make sure we did not miss any closer component which arrived late. The second step depends on the nature of $C_2$ and the surroundings of $C_1$:
\begin{itemize}
    \item If there exists a component $C_3 \ne C_1$ such that $C_3$ is close enough to $C_1$ but $t_{max}(C_3)$ is really high, we cannot ignore $C_3$ as it might cause us to miss out on better compatible components which did not arrive yet. So we use a special merge to merge $C_3$ with $C_1$.
    \item If $\nrank(C_2) > \rank(C_1)$, we merge $C_1$ into $C_2$ with a regular merge. Then we call the nearby fixup procedure which keeps merging $C_2$ with its closest big enough component using special merges until the resulting component has nearby rank $\bot$.
    \item If $\rank(C_2) \geq \rank(C_1)$, we can immediately merge $C_1$ into $C_2$.
    \item Otherwise, if none of the cases above apply, it implies that $C_2$ is an odd component and $\rank(C_2) < \rank(C_1)$. We cannot merge $C_1$ into $C_2$ yet as it would break some invariants we rely on for the analysis of our algorithm. Instead, we say that there is a \textit{waiting} edge from $C_1$ to $C_2$, which will be considered during the pruning step and take no further action regarding $C_1$ in the combining step.
\end{itemize}

\begin{algobox}{COMBINE\_COMPONENTS(t)}
    \algoHead{Merge components that are close to each other }
    \begin{algorithmic}[1]
        \For{each odd component $C_1 \in \comp$}
            \State $C_2 \gets \arg \min \{ D(C_1, C), C \in \comp$, \text{$C$ compatible with $C_1$}\}
            \State $l \gets D(C_1, C_2)$
            \If{$t \geq t_{max}(C_1) + 2 \cdot l$}
                \State $H \gets \{ C \in \comp, C \ne C_1, D(C_1, C) < l / (\rank(C_1) + 2) \}$
                \If{$\exists \ C_3 \in H, t_{max}(C_3) \geq t_{max}(C_1) + l$}
                    \State \textsc{SPECIAL\_MERGE}($C_3$, $C_1$, $\rank(C_1)$)
                \ElsIf{$\nrank(C_2) > \rank(C_1)$}
                    \State \textsc{REGULAR\_MERGE}($C_1$, $C_2$, $\nrank(C_2)$)
                    \State \textsc{NEARBY\_FIXUP}($C_2$)
                \ElsIf{$\rank(C_2) \geq \rank(C_1)$}
                    \If{$\rank(C_2) = \rank(C_1)$}
                        \State $\rank(C_2) \gets \rank(C_2) + 1$
                    \EndIf
                    \State \textsc{REGULAR\_MERGE}($C_1$, $C_2$, $\rank(C_2)$)
                \Else
                    \State Do nothing ($C_1 \to C_2$ is a waiting edge)
                \EndIf
            \EndIf
        \EndFor
    \end{algorithmic}
\end{algobox}

\begin{algobox}{NEARBY\_FIXUP($C_1$)}
    \algoHead{Merge a component to its bigger nearby component while necessary}
    \begin{algorithmic}[1]
        \While{$\nrank(C_1) \ne \bot$}
            \State Let $C_2$ be the closest component to $C_1$ with rank at least $\nrank(C_1)$ or nearby rank at least $\nrank(C_1) + 1$
            \State \textsc{SPECIAL\_MERGE}($C_1$, $C_2$, $\max( \rank(C_2), \nrank(C_2))$)
            \State $C_1 \gets C_2$
        \EndWhile
    \end{algorithmic}
\end{algobox}

\subsection{Pruning waiting trees}

We now describe the pruning step. We remark that in the combining step, for an odd component $C_1$ and its closest compatible one $C_2$, there is one case where we can pass the threshold time but do nothing. In this case we consider the oriented edge $C_1 \to C_2$ as a \textit{waiting} edge. We remark in this case that $C_2$ is odd and $\rank(C_2) < \rank(C_1)$, so \textit{waiting} edges are only directed towards a component of strictly lower rank (i.e there are no cycles). 

We can consider $\mathcal{W}$ as the set of all waiting edges between components. Because there are no cycles and each component has out-degree at most $1$ (each component is waiting on at most one other component), $\mathcal{W}$ is a forest. The pruning step looks at each tree in this forest, If there are two components with the same rank in the same tree, they are merged together with a regular merge along the tree path and their rank is increased. This ensures that not too many components are waiting, which could be detrimental to the competitiveness of our algorithm. An example of tree pruning can be seen with \cref{figure:tree-pruning}.

\begin{algobox}{PRUNE\_WAITING\_TREES(t)}
    \algoHead{Merge nodes inside a waiting tree if possible}
    \begin{algorithmic}[1]
        \State Let $\mathcal{W}$ be the forest of waiting edges
        \For{each waiting tree $\mathcal{T}$ in $\mathcal{W}$}
            \If{there exists $C_1, C_2 \in \mathcal{T}$ such that $\rank(C_1) = \rank(C_2)$}
                \State $r \gets \rank(C_1)$
                \State let $C_3$ be the LCA of $C_1$ and $C_2$ in $\mathcal{T}$
                \State $\rank(C_3) \gets r + 1$ 
                \State $H \gets \{ C \in \mathcal{T}, \rank(C) \leq \rank(C_1), C_3 \text{\ is an ancestor of C} \}$
                \For{each $C \in H \setminus \{ C_3 \}$ by non-decreasing rank order}
                    \State \textsc{REGULAR\_MERGE}($C$, $C_3$, $r + 1$)
                \EndFor
            \EndIf
        \EndFor
    \end{algorithmic}
\end{algobox}

\begin{figure}[h]
\centering
\includegraphics[scale=0.6]{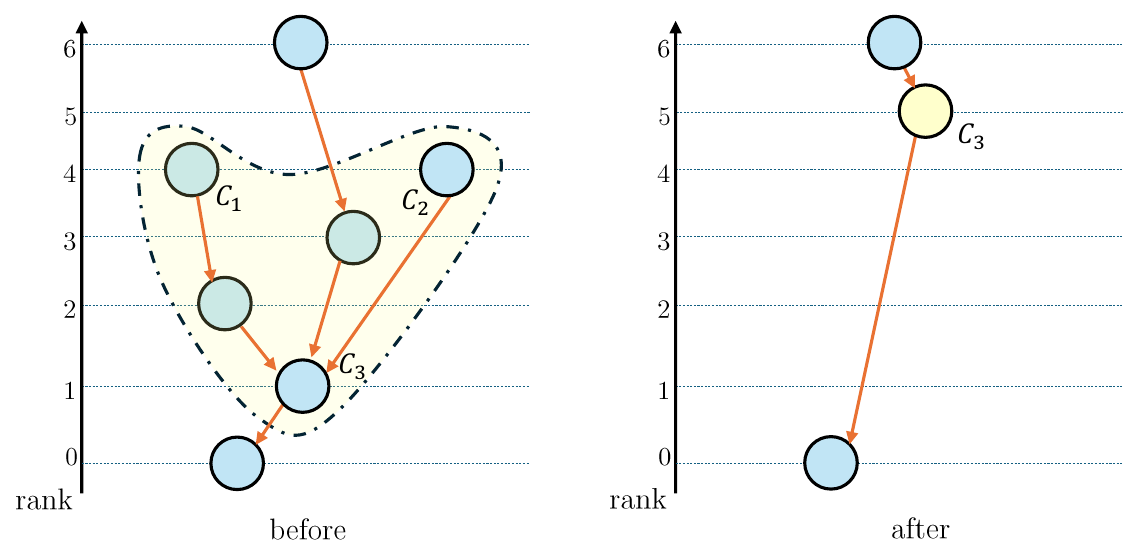}
\caption{Tree pruning: Components $C_1$ and $C_2$ have the same rank $4$, their least common ancestor is $C_3$. All components with ancestor $C_3$ and rank at most $4$ get merged together and the resulting component has rank $5$. The orange edges inside the new component get added to $T(C_3)$ with rank $5$.}
\label{figure:tree-pruning}
\end{figure}

\subsection{Greedy algorithm}

The last step for our online algorithm is to match requests together. This task is done by our greedy algorithm described above. We run multiple greedy instances, one for each component, and a greedy instance is uniquely identified by the representative $repr(C)$ of each component $C$. When a component $C$ has two or more unmatched requests not attached to any greedy instance, these requests are added two at a time to the greedy instance associated with $repr(C)$.

We remark that our greedy algorithm needs to assume that it can see requests as soon as they arrive. This is not the case here, so we modify the arrival time of the requests as seen by the greedy procedure to be the time they joined this instance instead of their real arrival time. Due to our design, each greedy instance always has an even number of requests.

\begin{algobox}{RUN\_GREEDY(t)}
    \algoHead{Update the greedy instances with new request then run them}
    \begin{algorithmic}[1]
        \For{each component $C \in \comp$}
            \While{$C$ has at least two unmatched requests not attached to any greedy instance}
                \State Let $r_1, r_2$ be two such requests
                \State Identify $r_1$ with $r_1' = (x(r_1), t)$
                \State Identify $r_2$ with $r_2' = (x(r_2), t)$
                \State Attach $r'_1$ and $r'_2$ to the greedy instance on $repr(C)$ 
            \EndWhile
        \EndFor
        \For{each greedy instance with at least two unmatched requests}
            \State Run \textsc{GREEDY\_ITERATION}(t) on the requests in the greedy instance
        \EndFor
    \end{algorithmic}
\end{algobox}
\section{Analysis of the algorithm}

We consider running the algorithm on a fixed input. Let $OPT$ be the optimal online matching with delay cost (which contains both the time and space cost) an algorithm knowing in advance this input can achieve. We note that because of \cref{lemma:online-offline-same}, $OPT$ is also the minimal weight of an offline perfect matching on the time-augmented input. If there is any ambiguity, for $t \geq 0$, we denote by $D_t$ the function $D$ applied on the algorithm's component decomposition at time $t$. In this section, we provide a detailed analysis of the algorithm, this analysis is divided into multiple parts:

\begin{itemize}
    \item \textit{Correctness}: We prove the correctness of the algorithm (it is well-defined and produces a perfect matching).
    \item \textit{Algorithm properties}: We describe and prove multiple properties and invariants satisfied by our algorithm. These are then built upon to provide the following results.
    \item \textit{Bounding edge weights}: We show that the size of the forest (spanning tree inside each component) generated by the algorithm can be bounded by $OPT$, up to a polylog factor.
    \item \textit{Bounding the algorithm's cost}: Using our previous bound on the forest weight, we can use it to get an upper bound on the time requests spend before joining a greedy invocation and finally bound the total cost of each greedy invocation.
\end{itemize}

\subsection{Correctness}

We first remark that our algorithm contains loops. We first show that these will never cause the algorithm to loop infinitely:
\begin{lemma}
    \textsc{ONLINE\_MATCHING} never gets stuck in a loop.
\end{lemma}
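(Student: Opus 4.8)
The only loop in \textsc{ONLINE\_MATCHING} is the \textbf{repeat ... until no merge happened} block, so the claim amounts to showing that the combining and pruning steps cannot keep merging components forever within a single time step $t$. The plan is to exhibit a potential function on the component decomposition $\comp$ that strictly decreases with every merge. The natural candidate is the number of components $|\comp|$ together with a secondary term for rank increases: every call to \textsc{SPECIAL\_MERGE} and \textsc{REGULAR\_MERGE} removes at least one component from $\comp$ (namely $C_{from}$, plus possibly some even components along $P$), and never adds a component. Hence any single merge strictly decreases $|\comp|$. Since $|\comp|$ is a nonnegative integer bounded by the number of requests that have arrived by time $t$ (a finite quantity — no new requests arrive during the repeat loop since \textsc{RECEIVE\_NEW\_REQUESTS} is called before it), only finitely many merges can occur, so the repeat loop terminates.

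There is a subtlety: I should also check that the inner \textbf{while}/\textbf{for} loops inside the subprocedures terminate. The \textbf{while} loop in \textsc{NEARBY\_FIXUP} runs only while $\nrank(C_1) \ne \bot$, and each iteration performs a \textsc{SPECIAL\_MERGE} that absorbs $C_1$ into a strictly larger component $C_2$; since $|\comp|$ strictly decreases each time and is bounded below, this loop terminates as well — and in fact this is already covered by the same potential argument. Similarly the \textbf{while} loop in \textsc{RUN\_GREEDY} processes pairs of currently-unmatched, currently-unattached requests, and each iteration attaches two of them, so it runs at most $|V(C)|/2$ times; \textsc{RUN\_GREEDY} is outside the repeat loop so it does not affect the argument but should be mentioned for completeness. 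The \textbf{for} loops over components or over waiting trees are over finite sets.

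The one place that warrants genuine care — and what I expect to be the main obstacle — is arguing that the \emph{body} of each repeat iteration is itself well-defined and finite, in particular that \textsc{COMBINE\_COMPONENTS} and \textsc{PRUNE\_WAITING\_TREES} each do a bounded amount of work: the \textbf{for} loop in \textsc{COMBINE\_COMPONENTS} iterates over the odd components present at the start of that call, but a merge inside the loop can change $\comp$, so one must specify (or observe) that iterating over a snapshot of the odd components is fine, and that each $\arg\min$ is over a finite nonempty set (nonempty because of the fake component with distance $+\infty$). Once it is established that one pass of the repeat body performs finitely many merges, the global potential argument on $|\comp|$ closes the proof. I would therefore structure the proof as: (1) each merge strictly decreases $|\comp|$; (2) each subprocedure call performs finitely many merges and terminates; (3) hence each repeat body terminates and strictly decreases $|\comp|$ whenever a merge happens; (4) since $|\comp|$ is a bounded nonnegative integer, the repeat loop exits after finitely many iterations.
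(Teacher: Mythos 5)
Your proposal is correct and uses essentially the same argument as the paper: each merge strictly decreases the number of components, which is a nonnegative integer bounded by the (finite) number of requests that have arrived, so there can be at most finitely many merges, and hence both the main \textbf{repeat} loop and the \textsc{NEARBY\_FIXUP} loop must terminate. Your additional remarks about \textsc{RUN\_GREEDY} and iterating over snapshots of $\comp$ are fine but not needed beyond what the paper states.
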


\begin{proof}
    We first remark that each component starts by containing a single request and each merge decreases the total number of components by at least one. Therefore, because there are in total $m$ requests, there can always be at most $m-1$ merges during the whole execution of the algorithm. So the main loop of \textsc{ONLINE\_MATCHING}, which performs an additional iteration every time a merge happens, will never loop infinitely. In a similar way, the loop of the \textsc{NEARBY\_FIXUP} procedure, which contains a special merge in its body, cannot iterate infinitely. These are the only two loops our algorithm has.
\end{proof}

We also remark that our algorithm makes assumptions about the existence of a component with specific properties in the \textsc{NEARBY\_FIXUP} procedure. The following lemma shows that these assumptions always hold, so our algorithm is well-defined. Because it requires additional observations about the nearby rank of a component, we defer its proof to later in the analysis.

\begin{restatable}{lemma}{AlgoWellDefined}
     The assumptions made in the \textsc{NEARBY\_FIXUP} procedure always hold.
\end{restatable}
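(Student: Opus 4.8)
The plan is to establish that both calls to ``the closest component to $C_1$ with rank at least $\nrank(C_1)$ or nearby rank at least $\nrank(C_1)+1$'' in \textsc{NEARBY\_FIXUP} refer to a component that actually exists, and that the $\max(\rank(C_2),\nrank(C_2))$ passed as the merge rank is well-defined (i.e. not $\max(\bot,\bot)$). The core idea is an invariant tying the meaning of $\nrank$ to the actual state of the component decomposition: whenever $\nrank(C) = r \ne \bot$ for some component $C$, there genuinely exists a distinct component $C'$ in the current decomposition with $\rank(C') \ge r$ that lies ``close enough'' to $C$ (specifically, within the distance radius that \textsc{REGULAR\_MERGE} used when it set $\nrank(C) \gets \max(\nrank(C), r)$). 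I would state this as a lemma: \emph{at all times, if $\nrank(C) = r \ne \bot$ then there is a component $C' \ne C$ with $\max(\rank(C'),\nrank(C')) \ge r$}; the disjunctive form $\rank(C') \ge r$ \emph{or} $\nrank(C') \ge r$ (note: $\nrank(C') \ge \nrank(C_1)+1$ is what \textsc{NEARBY\_FIXUP} looks for, but a witness with $\nrank(C') \ge r$ would itself have such a witness, so a minimal-rank witness has $\rank \ge r$) is what makes the induction close.

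The key steps, in order: (1) Prove the invariant is established correctly — immediately after \textsc{REGULAR\_MERGE}($C_{from}, C_{to}, r$) sets $\nrank(C) \gets \max(\nrank(C), r)$ for each $C \in H \setminus D$, the component $C_{to}$ (which now has rank at least $r$, since in every call site the merge rank $r$ satisfies $r \le \rank(C_{to})$ after the merge, or in the \textsc{NEARBY\_FIXUP}/special-merge cases is inherited from an existing high-rank neighbor) is a valid witness at distance $< D(C_{from},C_{to})/(r+1)$ from $C$. (2) Prove the invariant is preserved by every other operation: new requests (rank $0$, $\nrank = \bot$, vacuous), special merges (they only remove components and absorb them into $C_{to}$ — need to check that absorbing a witness into another component keeps a witness, using that $C_{to}$ after a special merge has rank $\ge$ the absorbed component's relevant rank; and that distances can only shrink, or that the merged component is still within radius), rank increments (only increase ranks, helping witnesses), and the clearing $\nrank(C) \gets \bot$ inside \textsc{NEARBY\_FIXUP} (trivially fine since it removes a hypothesis). (3) Conclude: when \textsc{NEARBY\_FIXUP}($C_1$) runs its loop with $\nrank(C_1) = r \ne \bot$, the invariant hands us a witness $C_2 \ne C_1$ with $\max(\rank(C_2),\nrank(C_2)) \ge r$; among all such witnesses pick the closest, and if its rank is $< r$ it has nearby rank $\ge r$, so it qualifies — hence the component the algorithm asks for exists, and $\max(\rank(C_2),\nrank(C_2)) \ge r \ge 1$ is a genuine positive integer, so the recursive call is well-defined.

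The main obstacle I expect is step (2) for special merges combined with the loop inside \textsc{NEARBY\_FIXUP} itself: \textsc{NEARBY\_FIXUP} repeatedly performs special merges, each of which consumes the current witness (it merges $C_1$ into its witness $C_2$), and we must be sure that after $C_1 \gets C_2$ the \emph{new} $C_1$ either has $\nrank = \bot$ (loop ends) or still has a valid witness. Since \textsc{SPECIAL\_MERGE}($C_1, C_2, \max(\rank(C_2),\nrank(C_2))$) sets $C_2$'s rank to at least $\max(\rank(C_2),\nrank(C_2)) \ge \nrank(C_1)$ — wait, it passes that value as the rank argument but doesn't itself bump $\rank(C_2)$; I need to check carefully whether $\nrank(C_2)$ after the merge still points to something valid, which is exactly where the disjunctive ``or nearby rank at least $\nrank(C_1)+1$'' clause and the choice of $C_2$ as having $\max(\rank,\nrank) \ge \nrank(C_1)$ (rather than just $\nrank(C_1)$) earns its keep: a witness selected with slack ensures the chain of special merges is strictly making progress up the rank hierarchy and terminates with $\nrank = \bot$. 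Threading the radius bounds through a sequence of special merges (distances to the evolving component only decrease, and the radius $l/(r+1)$ from the original \textsc{REGULAR\_MERGE} is generous enough) is the bookkeeping-heavy part, but it is routine once the invariant is phrased to quantify over the decomposition rather than over fixed components. I would also remark that this lemma is exactly \cref{lemma:algo-well-defined} and defer no further dependencies, since the invariant only uses facts about rank monotonicity and the merge procedures already established.
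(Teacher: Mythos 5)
Your step (1) contains a concrete error. You claim that immediately after \textsc{REGULAR\_MERGE}$(C_{from},C_{to},r)$ sets $\nrank(C) \gets \max(\nrank(C),r)$, the destination $C_{to}$ serves as a valid witness because ``$r \le \rank(C_{to})$ after the merge.'' This is false in exactly the call site that matters: in \textsc{COMBINE\_COMPONENTS}, when $\nrank(C_2) > \rank(C_1)$, the call is \textsc{REGULAR\_MERGE}$(C_1, C_2, \nrank(C_2))$ with $C_{to} = C_2$, and \cref{lemma:nearby-greater} gives $\rank(C_2) < \nrank(C_2) = r$ at that moment. A regular merge does not bump $\rank(C_{to})$, so $C_{to}$ is \emph{not} a component of rank $\ge r$; this is precisely why \textsc{NEARBY\_FIXUP} is invoked on $C_2$ immediately afterward. (Your distance claim there is also off: $C_{to}$ lies at distance $\approx D(C_{from},C_{to})$ from the components in $H$, not within the radius $D(C_{from},C_{to})/(r+1)$.) Beyond this, your invariant of the form $\max(\rank(C'),\nrank(C')) \ge r$ is strictly weaker than what the algorithm requires ($\rank \ge r$ or $\nrank \ge r{+}1$); you gesture at chasing witness chains to repair the mismatch, but do not carry the argument out, and its termination is nontrivial for exactly the same reason as above. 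Finally, tracking radii through sequences of special merges is wasted effort: \textsc{NEARBY\_FIXUP} only assumes \emph{existence} of a suitable component, not proximity, so any local invariant is strictly harder than what is needed.

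The paper's proof sidesteps all of this with a global monotonicity argument. From \cref{lemma:merge-rank-bigger,lemma:merge-bigger-component}: whenever a merge of rank $r$ occurs, $C_{to}$ ends up (within the same iteration) inside a component of rank $\ge r$, and ranks never decrease, so the maximum rank over live components is nondecreasing in time. Since $\nrank(C_1)$ was assigned during a regular merge of that rank, \cref{lemma:merge-bigger-component} yields a component of rank $\ge \nrank(C_1)$ by the end of that iteration, hence one still exists at the time \textsc{NEARBY\_FIXUP} looks for it. By \cref{lemma:nearby-greater}, $\rank(C_1) < \nrank(C_1)$, so this high-rank component is distinct from $C_1$ and satisfies the first disjunct of the selection criterion. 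No distance bookkeeping, no local invariant, and no witness chain are needed. If you want to salvage your local-invariant route, you would have to prove as a preliminary that \emph{some} component (not $C_{to}$) of rank $\ge r$ exists when the invariant is established, at which point you have essentially reproved the paper's monotonicity observation and the rest of your machinery becomes unnecessary.
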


We can now give the main result on the correctness, which shows that our algorithm solves the \mpmd problem:

\begin{lemma}
    \textsc{ONLINE\_MATCHING} produces a perfect matching.
\end{lemma}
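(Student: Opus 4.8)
The plan is to show that every request is eventually matched, by combining two facts: first, that the component decomposition eventually stabilizes with only even components, and second, that inside any component with an even number of unmatched requests, \textsc{RUN\_GREEDY} eventually matches all of them. For the first part, I would argue by contradiction: suppose some odd component $C$ persists forever. Every odd component has a compatible component --- unless the total number of requests is odd, which cannot happen since $m$ is guaranteed even and each merge preserves parity of the total, so at any time the number of components of odd size is even and in particular nonzero whenever some odd component exists. So $C$ has a finite distance $l = D(C,C_2)$ to its closest compatible component. Once we pass time $t_{max}(C) + 2l$ (and $t_{max}$ can only change finitely often, since merges are finite by the previous lemma), the combining step fires: either a special merge, a regular merge, or $C$ becomes the tail of a waiting edge. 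In the first three cases $C$ is merged and disappears, contradiction. In the waiting-edge case, $C$ waits on a strictly-lower-rank odd component; following the chain of waiting edges (which is acyclic and finite) we reach a sink component $C'$ which is odd, non-waiting, and whose combining step must therefore have executed one of the three merge branches --- contradiction again. Hence eventually no odd component remains.

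Once the decomposition has stabilized into even components only (which happens in finite time, as merges are bounded by $m-1$), fix any final component $C$ with $|V(C)|$ even. All of its unmatched requests get attached two at a time to the greedy instance on $repr(C)$; since $|V(C)|$ is even and requests are attached in pairs, the instance always holds an even number of requests, and every request of $V(C)$ is eventually attached. It then remains to check that \textsc{GREEDY\_ITERATION}, run repeatedly on a fixed even set of requests, eventually matches all of them: as long as there are at least two unmatched requests, pick any unmatched $u$ with closest unmatched neighbor $v$; after a finite wait of $2\,d(u,v)$ past the (re-assigned) arrival time, $u$ gets matched, so the number of unmatched requests strictly decreases, and since it is even it reaches $0$. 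Thus every request in every final component is matched, and since the final components partition all $m$ requests, \textsc{ONLINE\_MATCHING} produces a perfect matching.

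The main obstacle I anticipate is making the ``stabilizes in finite time'' and ``$t_{max}$ changes finitely often'' reasoning fully rigorous in continuous time: the algorithm is described as running at all moments $t$, so one must be careful that the countably many events (arrivals, merges, greedy matches) do not accumulate. The cleanest fix is to invoke the remark already made in the paper that only finitely many planned events occur before the next arrival, together with the bound of at most $m-1$ merges total, so the whole execution decomposes into finitely many ``epochs'' each containing finitely many discrete events; within the last epoch after the last arrival and last merge, the purely-greedy dynamics above take over. A secondary subtlety is confirming that in the waiting-edge sink argument the sink component is genuinely still odd and still has a compatible component at the relevant time --- this follows because a component only stops being a waiting-edge tail when it is itself merged (removing it) or its target's rank rises (after which its own combining step is re-evaluated and must branch into a merge, since the waiting branch requires a strictly-lower-rank target).
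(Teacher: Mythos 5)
Your proof is correct and follows essentially the same route as the paper: show that the component decomposition eventually stabilizes with no odd component remaining, then argue that each greedy instance on an even component terminates with a perfect matching. The one cosmetic difference is how you locate a non-waiting odd component --- you follow the chain of waiting edges to a sink, whereas the paper simply takes the odd component of globally minimum rank (which can never be a waiting tail, since waiting edges point to strictly lower rank); the two devices are interchangeable and the rest of the argument coincides.
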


\begin{proof}
    We first notice that assuming its input contains an even number of requests, the greedy algorithm produces a perfect matching within a finite amount of time. Let $t_{max}$ the maximal time of arrival of a request in the greedy input. Let $d_{max}$ the maximal distance between two requests in the greedy input. Then at time $t_{max} + 2d_{max}$, the algorithm would immediately match in pairs any requests not matched yet. Moreover, once two requests have been matched, they are not considered anymore. Therefore, by time $t_{max} + 2d_{max}$, the greedy algorithm computes a perfect matching.

    Looking at the overall algorithm, we note that it adds two requests to the greedy at a time, this ensures that the input of any greedy instance will only contain an even amount of requests. Therefore, all requests in an even component will be part of a greedy instance and odd components will have exactly one request not part of a greedy instance. It thus suffices to prove that some time after all requests arrive, there will be only even components remaining to conclude.

    As shown in the previous proof, the total number of merges the algorithm can perform is at most $m - 1$, so finite. Thus, we can find some time $t_{end}$ after all requests arrived such that the component structure does not change afterwards. We also consider $t_{last} \leq t_{end}$ the last arrival time of a request.

    Let us assume that after $t_{end}$, there is still at least one odd component. We note that the parity of $m$ must be the same as the number of odd components. Because $m$ is even, there must be an even number of odd components. Because there is at least one odd component, it means there are at least two such components. Let $C_1$ be the odd component with minimum rank and let $C_2$ be any other odd component. We note that a component can only wait on an odd component with strictly lower rank, therefore $C_1$ cannot wait on another component. Let $l = d(C_1, C_2)$, using \cref{lemma:comp-distance}, $l \geq D(C_1, C_2)$. Therefore after $t_{end}$ and by time $t_{last} + 2l$, we are guaranteed the algorithm will merge $C_1$ to its nearest compatible component, being $C_2$ or some other component. This is a contradiction because we assumed no merge happened after $t_{end}$. Therefore, there are no odd components remaining after $t_{end}$ and all even components will have a perfect matching done over their requests within a finite amount of time. This concludes the proof.
\end{proof}

\subsection{Algorithm Properties}

We follow with multiple observations that lay the foundation for our analysis of the algorithm.

\begin{claim}
    If a component has rank $r$, then it contains at least $2^r$ requests.
\end{claim}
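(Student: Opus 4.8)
The plan is to prove, by induction on the sequence of operations that modify the decomposition $\comp$ or any rank, the invariant that \emph{every} component $C$ present in $\comp$ satisfies $|V(C)| \geq 2^{\rank(C)}$ at the end of each step; the claim is the special case of this invariant. The base case is the creation of a component in \textsc{RECEIVE\_NEW\_REQUESTS}: such a component is a singleton with rank $0$, and $1 = 2^{0}$. For the inductive step I would enumerate every line that can change $\comp$ or a rank and check that the invariant survives it.

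Most operations are immediate. Both \textsc{SPECIAL\_MERGE} and \textsc{REGULAR\_MERGE} only enlarge $V(C_{to})$ (possibly by vertices of even components absorbed along the path) and delete the absorbed components, and \emph{neither procedure changes $\rank(C_{to})$} — the rank parameter is used only for edge labelling and, for regular merges, for updating $\nrank$. Hence if the invariant held before such a merge it still holds for the strictly larger $C_{to}$ afterwards, and deleted components impose no constraint. The same applies to the \textsc{SPECIAL\_MERGE} calls inside \textsc{NEARBY\_FIXUP} and to the branch of \textsc{COMBINE\_COMPONENTS} that calls \textsc{REGULAR\_MERGE}$(C_1, C_2, \nrank(C_2))$, none of which touch ranks. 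So the only two places where the rank of an existing component increases are: (i) in \textsc{COMBINE\_COMPONENTS}, when $\rank(C_2) = \rank(C_1) = r$, where we set $\rank(C_2) \gets r+1$ and then merge $C_1$ into $C_2$; and (ii) in \textsc{PRUNE\_WAITING\_TREES}, where we set $\rank(C_3) \gets r+1$ for the LCA $C_3$ of two distinct waiting-tree nodes $C_1, C_2$ of common rank $r$. In case (i), the induction hypothesis gives $|V(C_1)| \geq 2^{r}$ and $|V(C_2)| \geq 2^{r}$ just before the bump, and since these vertex sets are disjoint the merged component has $|V(C_2)| \geq 2^{r} + 2^{r} = 2^{r+1}$, matching the new rank.

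The delicate case is (ii). Here I would first record the structural fact that a waiting edge $A \to B$ is created only when $B$ is odd and $\rank(B) < \rank(A)$, so rank strictly decreases along any directed path in the waiting forest; consequently two distinct equal-rank nodes of a waiting tree cannot have one an ancestor of the other, which forces their LCA $C_3$ to be a strict ancestor of both $C_1$ and $C_2$. Therefore $C_1, C_2 \in H \setminus \{C_3\}$ (each has rank $r \leq \rank(C_1)$ and $C_3$ as an ancestor), so both are absorbed into $C_3$ by the ensuing \textsc{REGULAR\_MERGE} calls — and since all nodes of a waiting tree are odd and $C_1, C_2$ are never a merge destination in this loop, they are not consumed earlier as even components along someone else's path and are still present when processed. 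Hence $|V(C_3)| \geq |V(C_1)| + |V(C_2)| \geq 2^{r+1}$, again matching $\rank(C_3) = r+1$. I expect this case to be the main obstacle: the bookkeeping that merges only grow vertex sets and that I have listed all rank-changing lines is routine, but establishing that \emph{both} $C_1$ and $C_2$ (not just one of them) land inside $C_3$ requires the waiting-tree structure above. The only other subtlety, harmless to note, is that $\rank(C_3)$ is raised a few lines before the merges complete, so the invariant is to be read as holding once the pruning step (equivalently, the enclosing \textbf{repeat} iteration) finishes, not during that transient.
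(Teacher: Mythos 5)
Your proposal is correct and follows the same induction as the paper: track the operations that can raise a component's rank and verify that each such event forms the new component from at least two disjoint components of the previous rank. You simply supply more detail than the paper's terse proof on the pruning-step case (arguing via the strict-rank-decrease along waiting edges that the LCA $C_3$ is distinct from $C_1, C_2$, so both land in $H\setminus\{C_3\}$ and are genuinely absorbed), which the paper asserts without elaboration.
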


\begin{proof}
    The proof is done by induction. When a request arrives, a component of rank $0$ is created for it, which satisfies the property. Subsequently, the rank of a component can only be increased in two ways. The first one is when two components of rank $r$ merge in the combining step, causing the resulting component to have rank $r+1$. The second one is in the pruning step, which happens when we have two components in one tree with the same rank $r$. These get merged together to get a component of rank $r+1$. In both cases, a component of rank $r+1$ is formed by a merge containing at least two components of rank $r$. Hence, the claim follows immediately by induction.
\end{proof}

\begin{corollary}\label{coro:max-comp-rank}
    The maximum rank a component can have is $\lfloor \log m \rfloor$.
\end{corollary}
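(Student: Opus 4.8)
The plan is to read this off directly from the preceding claim. Fix any run of the algorithm on an input with $m$ requests, and suppose some component $C$ ever attains rank $r$. By the claim, at that moment $V(C)$ contains at least $2^r$ requests. On the other hand, $V(C)$ is one block of the component decomposition $\comp$, which is at all times a partition of the set of requests received so far; since the total number of requests is $m$ (i.e. $|V_{req}| = m$), we get $|V(C)| \le m$. Chaining these, $2^r \le |V(C)| \le m$.

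Taking base-$2$ logarithms gives $r \le \log m$, and since the rank of a component is always a nonnegative integer, $r \le \lfloor \log m \rfloor$. As ranks are only ever increased (in the combining step when two rank-$r$ components merge, and in the pruning step when a least-common-ancestor's rank is bumped), this bound holds for every component throughout the whole execution, so $\lfloor \log m \rfloor$ is indeed the maximum rank attainable.

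There is essentially no obstacle: the single point worth stating explicitly is why $|V(C)| \le m$, which follows because \textsc{SPECIAL\_MERGE} and \textsc{REGULAR\_MERGE} only move vertices out of a component into another and then delete the emptied components, so the invariant that $\comp$ partitions the arrived requests is preserved; combined with the fact that at most $m$ requests ever arrive, no component can cover more than $m$ vertices. Everything else is a one-line computation from the claim.
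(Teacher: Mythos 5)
Your argument is correct and is exactly the intended derivation: the paper states this as a corollary of the preceding claim with no separate proof, precisely because $2^r \le |V(C)| \le m$ and integrality of the rank give $r \le \lfloor \log m \rfloor$. Your extra remarks about the merges preserving the partition and ranks being monotone are fine elaborations but not a different route.
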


We observe that previous work using a component-based approach in the offline setting \cite{STOC:SupPlaRei80, ENDM:WW04} could bound the maximum rank by $\log_3(3m/2)$. This is not the case here as we have to relax the conditions for merging to handle the online setting.

\begin{claim} \label{lemma:regular-invariants}
    A regular merge from $C_{from}$ into $C_{to}$ of rank $r$ is only called with $r > rank(C_{from})$ and $C_{to}$ being the closest compatible component to $C_{from}$.
\end{claim}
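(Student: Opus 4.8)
The plan is to verify both assertions by inspecting every site at which \textsc{REGULAR\_MERGE} is invoked. There are exactly three: the two calls \textsc{REGULAR\_MERGE}$(C_1,C_2,\nrank(C_2))$ and \textsc{REGULAR\_MERGE}$(C_1,C_2,\rank(C_2))$ in \textsc{COMBINE\_COMPONENTS}, and the call \textsc{REGULAR\_MERGE}$(C,C_3,r+1)$ inside the loop of \textsc{PRUNE\_WAITING\_TREES}; \textsc{NEARBY\_FIXUP} only performs special merges and contributes nothing. For each site I would check (i) that the rank passed as third argument strictly exceeds $\rank(C_{from})$ and (ii) that the second argument is, at the moment of the call, the component closest to $C_{from}$ among those compatible with it.

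For the two calls in \textsc{COMBINE\_COMPONENTS} both points are immediate from the guards. In both cases $C_{from}=C_1$ is the odd component currently being processed and $C_{to}=C_2$ is, by the line $C_2\gets\arg\min\{D(C_1,C):C\in\comp,\ C\text{ compatible with }C_1\}$, the closest compatible component to $C_1$; since these calls sit in a branch exclusive with the special-merge branch, no merge touches $C_1$ or $C_2$ beforehand in this iteration, and the only change is a possible increment of $\rank(C_2)$, which leaves $C_2$ compatible with and still closest to $C_1$. As for the rank: the first branch is guarded by $\nrank(C_2)>\rank(C_1)$, so $r=\nrank(C_2)>\rank(C_{from})$; the second is guarded by $\rank(C_2)\ge\rank(C_1)$, and the preceding conditional raises $\rank(C_2)$ to $\rank(C_1)+1$ whenever equality held, so the value $r=\rank(C_2)$ actually passed satisfies $r\ge\rank(C_1)+1>\rank(C_{from})$.

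For the call in \textsc{PRUNE\_WAITING\_TREES} the rank condition is again easy: every $C\in H$ has $\rank(C)\le\rank(C_1)=r$ by the definition of $H$, and a regular merge never changes the rank of a non-destination component, so at the time of the call $\rank(C_{from})=\rank(C)\le r<r+1$. The substantive point is that $C_3$ is the closest component compatible with $C$. I would first record two structural facts about the waiting forest $\mathcal{W}$: (a) every node of a waiting tree is an odd component — a waiting edge leaves an odd $C_1$ toward its closest compatible $C_2$, and in the do-nothing branch that $C_2$ can only be compatible by being odd, since there $\rank(C_2)<\rank(C_1)$ and $\nrank(C_2)\le\rank(C_1)$ — and (b) along every waiting edge the parent has strictly smaller rank than the child, so ranks strictly decrease on any root-ward path. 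Combining (b) with the processing order ``by non-decreasing rank'', every strict ancestor of $C$ lying below $C_3$ has rank $<\rank(C)\le r$, hence lies in $H$ and is processed before $C$; so by the time \textsc{REGULAR\_MERGE}$(C,C_3,r+1)$ runs, all ancestors of $C$ up to $C_3$ — in particular its waiting-parent $D_1$, which was the closest compatible component to $C$ when $\mathcal{W}$ was computed — have been absorbed into $C_3$. One then argues, using \cref{lemma:comp-distance} and tracking how the intervening merges affect the compressed distance, that $C_3$ is no farther from $C$ than $D_1$ was and that no compatible component has become strictly closer, while $\rank(C_3)=r+1>\rank(C)$ makes $C_3$ compatible with the odd component $C$; hence $C_3$ is the closest compatible component to $C_{from}=C$.

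The main obstacle is exactly that last step of the \textsc{PRUNE} case: one must rule out that, as a side effect of the regular merges performed earlier in the same call of \textsc{PRUNE\_WAITING\_TREES} (including those carried out for other waiting trees, which can raise other components' ranks and shorten compressed distances), some component other than $C_3$ has become strictly closer to $C$ while remaining compatible with it. I expect this to be handled by invoking the separation guarantees built into the combining step — the neighbourhood test at radius $l/(\rank(C_1)+2)$ and the propagation of the nearby rank to everything within $l/(r+1)$ by a regular merge — which keep the surroundings of a waiting component under control; formalizing this, rather than the bookkeeping above, is where the real work lies.
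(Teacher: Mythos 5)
Your proof takes essentially the same route as the paper's: enumerate the three \textsc{REGULAR\_MERGE} call sites, read off the rank inequality from the guards, and appeal to the $\arg\min$ defining $C_2$ (resp.\ the defining property of a waiting edge) for the closest-compatible claim. The combining-step cases are handled identically. For the pruning case, the paper's proof is in fact terser than yours: it simply says the merges are ``along edges of the waiting tree'' and that a waiting edge $C_A \to C_B$ by definition had $C_B$ as the closest compatible component to $C_A$, leaving implicit exactly the bookkeeping you spell out --- that the non-decreasing-rank processing order causes each component's waiting-parent (and the whole chain up to $C_3$) to have been absorbed into $C_3$ before that component is itself merged, so that $C_3$ inherits the closest-compatible role. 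The subtlety you flag at the end --- whether concurrent merges in the same \textsc{PRUNE} call, or merges for other trees, could bring some unrelated component strictly closer to $C$ while leaving it compatible --- is a genuine one that the paper's proof does not explicitly address either; the paper simply asserts the waiting-edge property and moves on. So you are not missing anything relative to the paper; you have reconstructed its argument and, if anything, localized more precisely where additional care would be needed to make it fully rigorous.
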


\begin{proof}
    We consider the three different cases where a regular merge happens:
    \begin{itemize}
        \item In the combining step, with $C_1$ and $C_2$. By definition $C_2$ is indeed the closest compatible component to $C_1$. In the first case we have $r = \nrank(C_2) > \rank(C_1)$. In the second case, we have $r = \rank(C_2) \geq \rank(C_1)$. Moreover, in the latter case, if $\rank(C_2)$ was equal to $\rank(C_1)$, it gets incremented, therefore $r > \rank(C_1)$.
        \item In the pruning step, along edges of the waiting tree. But by definition, for a waiting edge $C_A \rightarrow C_B$ to exist, $C_B$ must be the closest compatible component to $C_A$. Moreover, a regular merge is only called with components from $H$, which by definition have rank at most $\rank(C_1)$ while $r = \rank(C_1) + 1$ which therefore satisfy the property.
    \end{itemize}
\end{proof}

\begin{claim}\label{lemma:nearby-greater}
    Let $C$ be a component such that $\nrank(C) \neq \bot$, then $C$ is an even component and $\nrank(C) > \rank(C)$
\end{claim}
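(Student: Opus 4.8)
The plan is to prove the statement as an invariant maintained by every step of the algorithm that can change a component's rank, nearby rank, or parity. First I would observe that the only place a nearby rank is ever set to a non-$\bot$ value is the loop over $H\setminus D$ inside \textsc{REGULAR\_MERGE}, so it suffices to check that this loop — and the operations that change ranks and parities — preserve the property. As a preliminary fact I would show that every component incident to a waiting edge is odd: a waiting edge $C_1\to C_2$ is created only in the last case of \textsc{COMBINE\_COMPONENTS}, where $C_1$ is odd (we iterate over odd components) and $C_2$ is compatible with $C_1$ while $\rank(C_2)<\rank(C_1)$ and (Case B having failed) $\nrank(C_2)\le\rank(C_1)$; the first two compatibility conditions then fail, so by the characterization "not compatible $\Rightarrow$ even with smaller rank" stated after the definition of compatible components, $C_2$ must be odd. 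Hence every node of every waiting tree is odd.

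The core of the argument is the \textsc{REGULAR\_MERGE} update. In every call site of \textsc{REGULAR\_MERGE}$(C_{from},C_{to},r)$ the component $C_{from}$ is odd (it is the odd component $C_1$ in \textsc{COMBINE\_COMPONENTS}, and a waiting-tree node in \textsc{PRUNE\_WAITING\_TREES}, which is odd by the preliminary fact). By \cref{lemma:regular-invariants}, $C_{to}$ is the closest compatible component to $C_{from}$ and $r>\rank(C_{from})$. Any $C\in H=\{C'\ne C_{from}: D(C_{from},C')<D(C_{from},C_{to})/(r+1)\}$ is strictly closer to $C_{from}$ than the closest compatible component $C_{to}$, hence $C$ is not compatible with $C_{from}$; since $C_{from}$ is odd, $C$ must be even with $\rank(C)<\rank(C_{from})<r$. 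Thus setting $\nrank(C)\gets\max(\nrank(C),r)$ keeps $C$ even and forces its nearby rank to be at least $r>\rank(C)$, while $\rank(C)$ is untouched, so the invariant is preserved for every component whose nearby rank is modified. (Also $C_{to}\notin H$ since $D(C_{from},C_{to})<D(C_{from},C_{to})/(r+1)$ is impossible, so no merge ever changes $\nrank(C_{to})$.)

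It remains to control rank increments and parity flips that leave $\nrank$ fixed. Rank is incremented only on $C_2$ in the third case of \textsc{COMBINE\_COMPONENTS} — a branch reached only when $\nrank(C_2)\le\rank(C_1)$ or $\nrank(C_2)=\bot$, which by the inductive hypothesis forces $\nrank(C_2)=\bot$ (otherwise $\nrank(C_2)>\rank(C_2)\ge\rank(C_1)$, contradiction) — and on $C_3$ in \textsc{PRUNE\_WAITING\_TREES}, which is a waiting-tree node and hence odd, so $\nrank(C_3)=\bot$ by the hypothesis; in both cases the affected component has $\nrank=\bot$ and the invariant stays vacuous for it. For parity flips from merges: in the first case of \textsc{COMBINE\_COMPONENTS} the target is the odd $C_1$ (so $\nrank=\bot$ by hypothesis) and in the third case the target $C_2$ has $\nrank(C_2)=\bot$ as just shown, so parity is irrelevant there; the only delicate case is the second case of \textsc{COMBINE\_COMPONENTS}, where \textsc{REGULAR\_MERGE}$(C_1,C_2,\nrank(C_2))$ makes the (by hypothesis even) component $C_2$ absorb the odd $C_1$, turning $C_2$ into an odd component with non-$\bot$ nearby rank. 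This is immediately undone by the ensuing \textsc{NEARBY\_FIXUP}$(C_2)$: I would track this unique "offending" component through the loop, showing that each iteration's \textsc{SPECIAL\_MERGE} folds it into the chosen $C'$, whose nearby rank is either $\bot$ (the loop then halts and nothing is offending) or, by the selection rule together with the hypothesis $\nrank(C')>\rank(C')$, strictly larger than the current offending nearby rank — in which case $C'$ (previously even) becomes the new offending component and the loop continues with a strictly larger nearby rank. Since \textsc{NEARBY\_FIXUP} terminates (as already established), the chain ends at a component with $\nrank=\bot$ and no offending component survives; the remaining operations (\textsc{RECEIVE\_NEW\_REQUESTS}, \textsc{RUN\_GREEDY}) only create fresh rank-$0$ components with $\nrank=\bot$. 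Hence the invariant holds at every point at which \textsc{NEARBY\_FIXUP} is not mid-execution, which is what the claim asserts.

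The main obstacle is exactly this last point: the invariant is transiently violated inside the second case of \textsc{COMBINE\_COMPONENTS}, and one must argue precisely that \textsc{NEARBY\_FIXUP} always repairs it — i.e., that the "odd with non-$\bot$ nearby rank" anomaly is correctly propagated along the fixup chain and extinguished. A secondary subtlety is that this claim and the deferred well-definedness of \textsc{NEARBY\_FIXUP} are interdependent, so strictly speaking both should be carried out together by a single induction on the sequence of algorithm operations.
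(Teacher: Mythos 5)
Your proof is correct and follows essentially the same approach as the paper: show the property holds when a nearby rank is first assigned (via the non-compatibility of components in $H$ with the odd $C_{from}$, giving evenness and $\rank < r$), and then argue that rank increments and parity flips preserve it, with the odd/non-$\bot$ anomaly created in the second case of \textsc{COMBINE\_COMPONENTS} being resolved by \textsc{NEARBY\_FIXUP}. You are somewhat more explicit than the paper about the transient violation inside \textsc{NEARBY\_FIXUP} and the interdependence with its well-definedness, but the underlying argument is the same.
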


\begin{proof}
    We first consider the time a component is given a non-$\bot$ nearby rank, we can find components $C_{from}$, $C_{to}$ and a rank $r$ such that it happens in the regular merge from $C_{from}$ to $C_{to}$ at rank $r$. Using \cref{lemma:regular-invariants}, $C_{to}$ is the closest compatible component to $C_{from}$. If a component $C$ gets its nearby rank modified by this part of the algorithm, it implies that it is in $H$ and therefore that $D(C_{from}, C) < D(C_{from}, C_{to}) / (\rank(C_{from}) + 2) \leq D(C_{from}, C_{to})$. By minimality of $C_{to}$, it implies that $C$ is not compatible with $C_{from}$. Therefore, $C$ is an even component and $\rank(C) < \rank(C_{from})$. Using \cref{lemma:regular-invariants} again, we get that $\rank(C_{from}) < r$,  which proves that $\rank(C) < r = \nrank(C)$.

    We claim that if any component merges into $C$, then it immediately merges into a higher rank component. Indeed, because $C$ is even, no component can merge into it in the pruning step as waiting edges are only between odd components. Moreover, in the combining, given that $C$ is even, for it to be compatible with $C_1$, either its nearby rank is at least $\rank(C_1) + 1$ or its rank is at least $\rank(C_1)$ in which case its nearby rank is also at least $\rank(C_1) + 1$. Therefore, the nearby fixup procedure gets called and $C$ gets eventually merged into a component of rank at least its nearby rank. This proves that as soon as a component is given a non-$\bot$ nearby rank, it stays as an even component and its rank never increases.
\end{proof}

We can now get similar invariants for special merges compared to \cref{lemma:regular-invariants}
\begin{claim}\label{lemma:special-invariants}
    A special merge from $C_{from}$ into $C_{to}$ of rank $r$ is only called with $r > \rank(C_{from})$. Moreover, all even components (excluding endpoints) on the shortest path from $C_{from}$ to $C_{to}$ have rank strictly less than $r$.
\end{claim}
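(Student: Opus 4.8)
The plan is to check the two---and only two---places where $\textsc{SPECIAL\_MERGE}$ is invoked and to verify both assertions at each. The first is the line \textsc{SPECIAL\_MERGE}$(C_3, C_1, \rank(C_1))$ in \textsc{COMBINE\_COMPONENTS}, where $C_3$ lies in $H = \{C \in \comp : C \ne C_1,\ D(C_1, C) < l/(\rank(C_1)+2)\}$ with $l = D(C_1, C_2)$ and $C_2$ the closest compatible component to $C_1$. The second is the line \textsc{SPECIAL\_MERGE}$(C_1, C_2, \max(\rank(C_2), \nrank(C_2)))$ in \textsc{NEARBY\_FIXUP}$(C_1)$, reached only while $\nrank(C_1) \ne \bot$, with $C_2$ the closest component to $C_1$ satisfying $\rank(C_2) \ge \nrank(C_1)$ or $\nrank(C_2) \ge \nrank(C_1)+1$. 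In both cases I use the following elementary fact about shortest $D$-paths: if $P = (C_{from} = C'_1, \dots, C'_k = C_{to})$ realizes $D(C_{from}, C_{to})$ (so $C'_2, \dots, C'_{k-1}$ are even), then for each $j$ the prefix $(C'_1, \dots, C'_j)$ and the suffix $(C'_j, \dots, C'_k)$ are themselves admissible $D$-paths, whence $D(C_{from}, C'_j) \le \sum_{i<j} d(C'_i,C'_{i+1}) \le D(C_{from}, C_{to})$ and symmetrically $D(C'_j, C_{to}) \le D(C_{from}, C_{to})$; moreover the first inequality is strict once $C'_j \ne C_{to}$, as long as no two distinct requests coincide (see the obstacle below).

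For the first call site, since $\rank(C_1)+2 \ge 2$ and (because $H \ne \emptyset$ for this branch to be taken) $l > 0$, membership in $H$ yields $D(C_1, C_3) < l/2 < l = D(C_1, C_2)$; thus $C_3$ is strictly closer to $C_1$ than its closest compatible component, so $C_3$ is not compatible with $C_1$, and by the observation immediately following the definition of compatibility $C_3$ is even with $\rank(C_3) < \rank(C_1) = r$. For any even interior component $C$ of the shortest path from $C_3$ to $C_1$, the fact above (using symmetry of $D$) gives $D(C_1, C) \le D(C_1, C_3) < l = D(C_1, C_2)$, so the same argument shows $C$ is not compatible with $C_1$, hence $\rank(C) < \rank(C_1) = r$. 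This settles both assertions in this case.

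For the second call site, \cref{lemma:nearby-greater} gives that $C_1$ is even with $\nrank(C_1) > \rank(C_1)$. By the defining property of $C_2$, either $\rank(C_2) \ge \nrank(C_1)$ or $\nrank(C_2) \ge \nrank(C_1)+1$, and in either subcase $r = \max(\rank(C_2), \nrank(C_2)) \ge \nrank(C_1) > \rank(C_1)$, which is the first assertion. For an even interior component $C$ of the shortest path from $C_1$ to $C_2$, the fact above gives $D(C_1, C) \le D(C_1, C_2)$; since $C_2$ is the \emph{closest} component meeting the selection criterion, if $C$ also met the criterion we would need $D(C_1, C) \ge D(C_1, C_2)$, forcing equality and, by the prefix/suffix identity, a zero-length suffix $D(C, C_2) = 0$, which is impossible under the non-coincidence assumption. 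Hence $C$ fails the criterion, i.e. $\rank(C) < \nrank(C_1) \le r$, which is the second assertion.

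The step I expect to be the main obstacle is exactly this strictness in the second case: ruling out an even interior component $C$ of the shortest $D$-path from $C_1$ to $C_2$ that sits at the same $D$-distance from $C_1$ as $C_2$ (so $D(C,C_2)=0$) and carries large rank. This cannot occur when all requests occupy distinct points of $S'$---distinct components are then at strictly positive $D$-distance---but two requests sharing the same position and time generate components at distance $0$ that are never merged once both become even, so this degenerate case must be excluded via a general-position assumption, an infinitesimal perturbation, or a tie-breaking rule on the choice of $C_2$ consistent with path prefixes. At the first call site no such subtlety arises, since the definition of $H$ already builds in a multiplicative slack of $\rank(C_1)+2 \ge 2$ between $D(C_1,C_3)$ and $l$.
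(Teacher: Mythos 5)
Your proof is correct and takes essentially the same route as the paper's: split into the two \textsc{SPECIAL\_MERGE} call sites, use non-compatibility of $C_3$ and the interior components with $C_1$ at the first site, and use \cref{lemma:nearby-greater} plus the minimality of $C_2$'s selection criterion at the second site. The tie-breaking concern you raise at the end (an interior even component at compressed distance $0$ from $C_2$, i.e.\ coincident requests) is a genuine degenerate case, but it is one the paper's own proof also glosses over — the published argument simply asserts that the fixup procedure ``did not consider'' the interior components, which tacitly presumes that being on the shortest path forces strictly smaller $D$-distance. So your observation is a fair critique of the paper rather than a gap in your proof; in practice it is resolved by a general-position assumption or a consistent tie-breaking rule in the selection of $C_2$.
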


\begin{proof}
    Special merges are only used at two locations in our algorithm. The first one is within the combining step, to merge component $C_3$ into $C_1$. We remark that $C_3 \in H$, therefore $D(C_1, C_3) < D(C_1, C_2)$. Because $C_2$ is the closest compatible component to $C_1$, it means that $C_3$ is not compatible with $C_1$, which implies that $\rank(C_3) < \rank(C_1)$ and $\rank(C_1)$ is the rank used for the special merge, which proves the lemma in this case. Moreover, all even components on the shortest path from $C_1$ to $C_3$ are also not compatible with $C_1$, therefore their ranks are also strictly less than $\rank(C_1)$. The other location is in the nearby fixup procedure, where $C_1$ is merged into $C_2$ with rank $r = \max(\rank(C_2), \nrank(C_2))$. We remark that we chose $C_2$ such that its rank or nearby rank is at least $\nrank(C_1)$, i.e $r \geq \nrank(C_1)$. Using \cref{lemma:nearby-greater}, $r \geq \nrank(C_1) > \rank(C_1)$, which proves the first point. Regarding even components on the path from $C_1$ to $C_2$, the fixup procedure did not consider them, thus their rank must be strictly less than $\nrank(C_1) \leq r$, this satisfies the second point.
\end{proof}

\begin{claim}\label{lemma:merge-rank-bigger}
    If a regular or special merge from $C_{from}$ into $C_{to}$ of rank $r$ is called, with $D$ being the set of components merged on the shortest path from $C_{from}$ to $C_{to}$, then: $$r \geq \max(\rank(C_{from}), \rank(C_{to}), \max_{C \in D} \rank(C))$$
\end{claim}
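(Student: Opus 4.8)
The plan is to bound each of the three quantities $\rank(C_{from})$, $\rank(C_{to})$, and $\max_{C \in D} \rank(C)$ separately by $r$, reusing the invariants already established in Claims~\ref{lemma:regular-invariants} and~\ref{lemma:special-invariants}. For the first term, both of those claims state directly that $r > \rank(C_{from})$, so $r \geq \rank(C_{from})$ is immediate regardless of which type of merge occurred. The work therefore concentrates on the destination $C_{to}$ and the intermediate even components in $D$.

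For the even components on the path (i.e.\ the set $D$), Claim~\ref{lemma:special-invariants} already gives exactly what we need in the special-merge case: all even components on the shortest path (excluding endpoints) have rank strictly less than $r$. For the regular-merge case I would argue the same way the proof of Claim~\ref{lemma:nearby-greater} does: by Claim~\ref{lemma:regular-invariants}, $C_{to}$ is the closest compatible component to $C_{from}$, so any component $C$ strictly closer than $C_{to}$ — in particular every even component strictly interior to the shortest path, which sits at compressed distance $< D(C_{from},C_{to})$ — is not compatible with $C_{from}$, hence is even with $\rank(C) < \rank(C_{from}) < r$. So in both variants $\max_{C \in D}\rank(C) < r$.

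The remaining, and slightly more delicate, term is $\rank(C_{to})$. I would split by the call site. In the combining step's regular merge: if the merge happens via the $\nrank(C_2) > \rank(C_1)$ branch then $r = \nrank(C_2) > \rank(C_2)$ by Claim~\ref{lemma:nearby-greater}; if via the $\rank(C_2) \geq \rank(C_1)$ branch then either $\rank(C_2) > \rank(C_1)$ and $r = \rank(C_2)$, or the ranks were equal and $\rank(C_2)$ is incremented to $r$ before the merge — either way $r \geq \rank(C_2)$. In the pruning step, the destination is the LCA $C_3$, whose rank is set to $r = \rank(C_1)+1$ before any regular merges into it occur, so $r \geq \rank(C_3)$. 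In the nearby-fixup special merge, $C_{to} = C_2$ is chosen with $\max(\rank(C_2),\nrank(C_2)) \geq \nrank(C_1)$ and $r$ is set to exactly $\max(\rank(C_2),\nrank(C_2)) \geq \rank(C_2)$. Finally, in the combining-step special merge, $C_{to} = C_1$ and $r = \rank(C_1)$, so trivially $r \geq \rank(C_{to})$. Combining the three bounds yields the claim.

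The main obstacle is bookkeeping rather than mathematical depth: one must verify that at the moment each merge is invoked, the destination's rank has already been updated (the increment in the combining step and the $\rank(C_3) \gets r+1$ assignment in the pruning step both precede the corresponding \textsc{REGULAR\_MERGE} calls), and that the ``closest compatible component'' reasoning correctly rules out intermediate even components having large rank. Once the case analysis over the (at most) five merge call sites is laid out carefully, each case closes by direct appeal to the earlier claims.
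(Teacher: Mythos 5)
Your proof is correct and follows essentially the same route as the paper's: bounding each of the three ranks separately, pulling $r > \rank(C_{from})$ directly from Claims~\ref{lemma:regular-invariants} and~\ref{lemma:special-invariants}, getting $\max_{C\in D}\rank(C) < r$ from the ``closest compatible component'' property for regular merges and from Claim~\ref{lemma:special-invariants} for special ones, and handling $\rank(C_{to})$ by the same call-site case analysis using Claim~\ref{lemma:nearby-greater} and the pruning-step rank update. Your explicit note that the rank increments happen before the merge calls is a nice clarification but not a deviation from the paper's argument.
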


\begin{proof}
    The property $r > \rank(C_{from})$ is a direct consequence of \cref{lemma:regular-invariants,lemma:special-invariants}. Moreover, \cref{lemma:regular-invariants} proves that components in $D$ are not compatible with $C_{from}$, so their rank is less than $\rank(C_{from})$. Meanwhile, \cref{lemma:special-invariants} directly proves that $r > \max_{C \in D} \rank(C)$ for special merges.  We now prove that $r \geq \rank(C_{to})$.

    For regular merges, in the combining step, the rank used is either $\rank(C_{to})$ or $\nrank(C_{to})$ which using \cref{lemma:nearby-greater} is at least $\rank(C_{to})$. In the pruning step, the rank used is $\rank(C_1) + 1$, which is the rank of $C_{to} = C_3$.
    For special merges, the rank used is either $\rank(C_{to})$ or $\max(\rank(C_{to}), \nrank(C_{to}))$, which satisfies the property.
\end{proof}

\begin{claim}\label{lemma:merge-bigger-component}
    If a regular or special merge from $C_{from}$ into $C_{to}$ of rank $r$ is called, then within the same iteration, $C_{to}$ will be part of a component of rank at least $r$.
\end{claim}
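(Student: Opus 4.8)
The plan is to go through the only five places where a regular or special merge is invoked: the two regular merges and the one special merge inside \textsc{COMBINE\_COMPONENTS}, the regular merges inside \textsc{PRUNE\_WAITING\_TREES}, and the special merges inside the loop of \textsc{NEARBY\_FIXUP}. Three of these are immediate, because at the instant the merge is made $\rank(C_{to})$ already equals $r$ and no merge procedure ever changes the rank of $C_{to}$: in \textsc{PRUNE\_WAITING\_TREES} the destination $C_3$ has its rank explicitly set to $r+1$ just before the merges of rank $r+1$; in the ``$\rank(C_2)\ge\rank(C_1)$'' branch of \textsc{COMBINE\_COMPONENTS} the value $r$ handed to \textsc{REGULAR\_MERGE} is exactly the (possibly just incremented) value $\rank(C_2)=\rank(C_{to})$; and in the ``$C_3\in H$'' branch it is $\textsc{SPECIAL\_MERGE}(C_3,C_1,\rank(C_1))$ with $C_{to}=C_1$ and $r=\rank(C_1)$.

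The work is in the other two call sites: the regular merge of the ``$\nrank(C_2)>\rank(C_1)$'' branch (immediately followed by $\textsc{NEARBY\_FIXUP}(C_2)$), and the special merges run inside a \textsc{NEARBY\_FIXUP} loop. If $\nrank(C_{to})=\bot$ there, then for the \textsc{NEARBY\_FIXUP} special merge $r=\max(\rank(C_{to}),\nrank(C_{to}))=\rank(C_{to})$ and we conclude as before; so I would assume $\nrank(C_{to})\neq\bot$, which by \cref{lemma:nearby-greater} forces $C_{to}$ even and $r=\nrank(C_{to})>\rank(C_{to})$. Now the statement genuinely needs the nearby-fixup cascade, so I would track it: starting from $C_{to}$, \textsc{NEARBY\_FIXUP} keeps moving along components $A_0=C_{to},A_1,\dots,A_k$, where each $A_{j+1}$ is the one it selects as ``closest to $A_j$ with rank at least $\nrank(A_j)$ or nearby rank at least $\nrank(A_j)+1$'', performing $\textsc{SPECIAL\_MERGE}(A_j,A_{j+1},\cdot)$ each time; since the \textsc{NEARBY\_FIXUP} loop terminates (already shown), this sequence is finite and ends with $\nrank(A_k)=\bot$.

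Setting $f(C)=\max(\rank(C),\nrank(C))$ with the convention $\max(p,\bot)=p$, the selection rule gives $f(A_{j+1})\ge f(A_j)$ for every $j<k$: if $A_{j+1}$ qualifies by its rank then $\rank(A_{j+1})\ge\nrank(A_j)=f(A_j)$, and if it qualifies by its nearby rank then $\nrank(A_{j+1})\ge\nrank(A_j)+1>f(A_j)$. Hence $\rank(A_k)=f(A_k)\ge f(A_0)=\nrank(C_{to})=r$, while each $\textsc{SPECIAL\_MERGE}(A_j,A_{j+1},\cdot)$ puts $V(A_j)$ inside $V(A_{j+1})$ and leaves $A_{j+1}$ in the decomposition, so $V(C_{to})=V(A_0)\subseteq V(A_k)$. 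Thus, right after this cascade---which runs within the same iteration that triggered the merge---$C_{to}$ lies inside the component $A_k$ of rank at least $r$, as wanted. The one delicate point to check is that the $\rank/\nrank$ labels are not disturbed mid-cascade: \textsc{SPECIAL\_MERGE} (and, for the first of the two call sites, the preceding \textsc{REGULAR\_MERGE}, which leaves $\nrank(C_{to})$ alone because $C_{to}\notin H$) touches only the vertex sets, spanning trees and the edge sets $S_r,R_r$, so the $f(A_j)$ values above are precisely the ones the procedure acts on. I expect this cascade bookkeeping to be the only real obstacle; the rest is reading off the pseudocode.
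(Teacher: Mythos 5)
Your proof is correct and follows the same high-level strategy as the paper: a case analysis over the call sites of the merge procedures, reducing the nontrivial cases to the observation that the subsequent \textsc{NEARBY\_FIXUP} cascade lands in a component of rank at least $\nrank(C_{to})$. You are, however, noticeably more careful than the paper at the one spot where care is actually needed. The paper's proof asserts that ``the only exception'' to $r = \rank(C_{to})$ is the regular merge with rank $\nrank(C_2)$ in the combining step, but the special merges inside \textsc{NEARBY\_FIXUP} itself also use $r = \max(\rank(C_2),\nrank(C_2))$, which exceeds $\rank(C_{to})$ whenever $\nrank(C_{to}) \ne \bot$; the paper glosses over this, while you explicitly split on $\nrank(C_{to}) = \bot$ versus $\ne \bot$. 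Your potential function $f(C) = \max(\rank(C),\nrank(C))$ and the monotonicity check $f(A_{j+1}) \ge f(A_j)$ along the cascade is exactly the bookkeeping the paper silently relies on when it says the fixup ``keeps merging $C_{to}$ until it reaches a component of rank at least $\nrank(C_{to})$''; making it explicit, together with the remark that \textsc{SPECIAL\_MERGE} never mutates $\rank$ or $\nrank$ and that $C_{to} \notin H$ in the preceding \textsc{REGULAR\_MERGE}, is a genuine improvement in rigor over the published argument.
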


\begin{proof}
    We consider the different locations in the algorithm where these merges happen. For most of these cases, this is done with $r = \rank(C_{to})$ which implies this property. The only exception is in the combining step, where a regular merge is done with rank $\nrank(C_{to})$. However, right after, the fixup procedure keeps merging $C_{to}$ until it reaches a component of rank at least $\nrank(C_{to})$, which satisfies this property. 
\end{proof}

We can now prove the well-behavior of our algorithm:
\AlgoWellDefined*

\begin{proof}
    We first remark that \cref{lemma:merge-rank-bigger,lemma:merge-bigger-component} show that the rank of a component is only increasing: if at some time $t$ a component of rank $r$ exists, then at any time $t' \geq t$, we are guaranteed that a component of rank $r' \geq r$ will exist.

    We now consider the fixup procedure. We have $\nrank(C_1) \neq \bot$. $C_1$ was given its nearby rank by a previous regular merge of rank $\nrank(C_1)$. Using the previous observation and \cref{lemma:merge-bigger-component}, it means there exists a component $C'$ of rank at least $\nrank(C_1)$. Moreover, using \cref{lemma:nearby-greater}, we have $\rank(C_1) < \nrank(C_1)$, therefore $C_1 \ne C'$. Thus, there always exists at least one other component with rank at least $\nrank(C_1)$, so the assumptions made by the fixup procedure hold.
\end{proof}

\begin{claim}\label{lemma:rank-inside-comp}
    Let $C$ be a component of rank $r$. If any regular or special edge $e$ with an endpoint in $V(C)$ gets added, its rank $r'$ will be at least $r$. Moreover, if $r' > r$, then component $C$ will be merged into a component of rank at least $r'$ within the same algorithm iteration.
\end{claim}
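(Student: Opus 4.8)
The plan is to track how edges with an endpoint in $V(C)$ can be added over time, and argue that every such addition corresponds to a merge of rank at least $r = \rank(C)$. The key structural fact I would use is the one already established in the proof of the \emph{well-behavior} lemma: ranks of (the component containing) a fixed vertex are non-decreasing over time, i.e.\ if a component of rank $r$ exists at time $t$, then at every $t' \ge t$ the component covering those vertices has rank $\ge r$. So at the moment the new edge $e$ is added, the component in the current decomposition that contains $V(C)$ — call it $C^\ast$ — still has rank at least $r$.

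First I would observe that a regular or special edge with an endpoint in $V(C) \subseteq V(C^\ast)$ is added only when $C^\ast$ participates in a merge, and in one of three roles: as $C_{from}$, as $C_{to}$, or as one of the even components $C \in D$ along the shortest path that gets absorbed. In all three roles, \cref{lemma:merge-rank-bigger} gives $r' \ge \max(\rank(C_{from}), \rank(C_{to}), \max_{C \in D}\rank(C)) \ge \rank(C^\ast) \ge r$, which is the first assertion. (If $e$ were added to $T(C^\ast)$ as an internal edge of a pruning-step subtree, the rank of that merge is $\rank(C_1)+1$ where $C_1$ is a same-rank pair member; since $C^\ast$ lies in the set $H$ of components with rank $\le \rank(C_1)$ being merged, again $r' = \rank(C_1)+1 > \rank(C^\ast) \ge r$, consistent with the bound.)

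For the second assertion, suppose $r' > r$. Then in particular $r'$ exceeds $\rank(C^\ast)$ is not automatic, so I would split: if $\rank(C^\ast) \ge r'$ the conclusion is immediate since $C^\ast \supseteq V(C)$ already has rank $\ge r'$ before the merge, and rank only grows. Otherwise $\rank(C^\ast) < r'$, and now I invoke \cref{lemma:merge-bigger-component}: whichever of the three merge variants added $e$, that claim guarantees that within the same algorithm iteration, $C_{to}$ — and hence the vertices of $C^\ast$, which after the merge are all inside the component grown from $C_{to}$ — ends up in a component of rank at least $r'$. (In the pruning case, $C^\ast$ is merged into $C_3$, which is assigned rank $r'=\rank(C_1)+1$ outright.) Since $V(C) \subseteq V(C^\ast)$, component $C$ is merged into a component of rank at least $r'$ within the same iteration, as claimed.

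The main obstacle I anticipate is bookkeeping the case split cleanly: ensuring that every syntactic place where a regular or special edge with an endpoint in $V(C)$ can be created is covered (combining step's $C_3 \to C_1$ special merge, the $\nrank$ regular merge plus its \textsc{NEARBY\_FIXUP} chain of special merges, the $\rank(C_2)\ge\rank(C_1)$ regular merge, and the pruning-step regular merges along tree paths), and in each checking that $C^\ast$ genuinely plays one of the roles $C_{from}$, $C_{to}$, or a path-internal even component so that \cref{lemma:merge-rank-bigger,lemma:merge-bigger-component} apply. The monotonicity-of-rank observation is what glues these together, so stating it crisply up front is what makes the rest routine.
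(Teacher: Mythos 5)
Your argument hinges on the claim that, at the moment edge $e$ is added, the component $C^\ast$ covering $V(C)$ still has rank at least $r$, which you justify by asserting that the rank of the component containing a fixed vertex set is non-decreasing over time. This is stronger than what the paper establishes (the well-behavior proof only says that \emph{some} component of rank at least $r$ exists at any later time, not that the one covering those vertices does), and it is in fact false mid-iteration. Concretely, in the combining step an odd component $C$ of rank $r$ may execute a regular merge into a component $C_{to}$ on the basis of $\nrank(C_{to}) > r$; by \cref{lemma:nearby-greater}, $C_{to}$ is even with $\rank(C_{to}) < \nrank(C_{to})$, and nothing prevents $\rank(C_{to}) < r$. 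The subsequent \textsc{NEARBY\_FIXUP} call then performs special merges with this low-rank component in the role of $C_{from}$, each adding edges with an endpoint in $V(C) \subseteq V(C_{to})$. For those edges your chain $r' \ge \rank(C^\ast) \ge r$ fails at the second inequality, since there $\rank(C^\ast) = \rank(C_{to})$ may be strictly less than $r$.

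The statement nonetheless holds, but the correct lower bound for these fixup edges comes from the nearby rank rather than the rank: each fixup merge has rank $\max(\rank(C'_2), \nrank(C'_2)) \ge \nrank(C'_1)$, and $\nrank(C'_1) > r$ is preserved (indeed strictly increases) across fixup iterations, so every edge added there has rank $> r$, and the component covering $V(C)$ exits the loop with rank $> r$. The quantity that behaves monotonically across the fixup chain is $\max(\rank,\nrank)$ of the component covering $V(C)$, not $\rank$ alone. Your proof would need to carry this stronger invariant, or alternatively handle the \textsc{NEARBY\_FIXUP} chain as a separate case using \cref{lemma:nearby-greater}. The paper's one-line appeal to \cref{lemma:merge-rank-bigger,lemma:merge-bigger-component} is itself terse, but it implicitly relies on the fact that every merge of rank $r'$ resolves within the same iteration into a component of rank at least $r'$ covering all its participants, which is precisely the piece your monotonicity shortcut glosses over.
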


\begin{proof}
    Because edges are only added inside a merge procedure, this is a direct consequence of \cref{lemma:merge-rank-bigger,lemma:merge-bigger-component}.
\end{proof}

\begin{corollary}\label{coro:compo-edge-rank}
    A component $C$ of rank $r$ only contains edges of rank at most $r$.
\end{corollary}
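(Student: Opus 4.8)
The plan is to establish the statement as an invariant maintained across the execution of \textsc{ONLINE\_MATCHING}: at every resting point of the algorithm (whenever the \textbf{repeat} loop of the main procedure has just stabilized), every edge $e$ contained in $T(C)$ for some component $C$ satisfies $\rank(e) \le \rank(C)$; Corollary~\ref{coro:compo-edge-rank} is exactly this invariant. First I would record the monotonicity fact already noted in the proof that \textsc{NEARBY\_FIXUP} is well-defined: the rank of a component is non-decreasing, and more precisely, if a component of rank $\rho$ ever exists, then at every later time there is a component of rank at least $\rho$ containing a superset of its vertices. This follows from \cref{lemma:merge-rank-bigger,lemma:merge-bigger-component}, which together say that any merge touching a component of rank $\rho$ — as source, destination, or an even component on the merge path — uses merge-rank at least $\rho$ and leaves the destination inside a component of rank at least $\rho$ by the end of the same iteration.

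Next I would track a fixed edge $e$ from the moment it is created. Edges are introduced only inside \textsc{SPECIAL\_MERGE} or \textsc{REGULAR\_MERGE}; say $e$ is added by a merge of rank $r'$, where it is placed in $T(C_{to})$ with both of its endpoints then lying in $V(C_{to})$. By the second part of \cref{lemma:rank-inside-comp}, either $\rank(C_{to}) = r'$ at that instant, or $\rank(C_{to}) < r'$ but $C_{to}$ is merged, within the same iteration, into a component of rank at least $r'$. Hence by the end of the iteration in which $e$ is created, $e$ lies in the spanning tree of some component of rank at least $r'$; this is the base case.

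For the inductive step, suppose that at some resting point $e$ lies in $T(C)$ with $\rank(C) \ge \rank(e) = r'$, and consider the next iteration. The only events affecting the pair $(e,C)$ are: $C$'s rank being incremented (handled by monotonicity); $C$ acting as the destination $C_{to}$ of a merge (its rank does not decrease and $e$ stays inside it); or $C$ being absorbed as the source $C_{from}$ or as an even path component of some merge — in which case \cref{lemma:merge-rank-bigger} gives merge-rank at least $\rank(C) \ge r'$, the edge $e$ passes into $T(C_{to})$, and \cref{lemma:merge-bigger-component} places $C_{to}$ inside a component of rank at least $r'$ by the end of the iteration. In every case, at the following resting point $e$ again lies in a component of rank at least $r'$, which closes the induction and yields the corollary.

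The main obstacle — and the reason this bookkeeping is needed rather than a one-line appeal to \cref{lemma:rank-inside-comp} — is that within a single iteration a component may transiently carry a rank smaller than that of some edge it already contains (for instance right after a regular merge of rank $\nrank(C_2)$ and before \textsc{NEARBY\_FIXUP} finishes raising its rank), and a component may be absorbed into a larger one with no new edge being added to witness the rank increase. Handling this cleanly requires combining the "catch-up within the same iteration" guarantees of \cref{lemma:rank-inside-comp,lemma:merge-bigger-component} with the "merge-rank dominates every participating rank" guarantee of \cref{lemma:merge-rank-bigger}, so that the invariant, although possibly violated mid-iteration, is restored by the time the \textbf{repeat} loop terminates.
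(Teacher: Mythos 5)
Your argument is correct and matches what the paper leaves implicit: the paper presents the corollary with no proof at all, treating it as an immediate consequence of \cref{lemma:rank-inside-comp}, and your write-up supplies exactly the bookkeeping (monotonicity of the rank of the component containing a fixed edge, established by induction on resting points via \cref{lemma:merge-rank-bigger,lemma:merge-bigger-component}) that this ``direct consequence'' quietly relies on. Your remark about the transient mid-iteration violation (after a regular merge of rank $\nrank(C_2)$, before \textsc{NEARBY\_FIXUP} lifts $C_2$ into a higher-rank component) correctly identifies why a naive one-step appeal to \cref{lemma:rank-inside-comp} alone would not close the argument; the two auxiliary claims are what restore the invariant by the end of each iteration, which is precisely how the paper's surrounding development is structured.
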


\begin{lemma}\label{lemma:request-in-odd}
    Consider a component $C$ at some time $t$, if $C$ is odd, then there is a request in $V(C)$ which has always been part of an odd component until now. If $C$ is even, then there is a request in $V(C)$ for which the component it was in was always odd as long as its rank was strictly less than $rank(C)$.
\end{lemma}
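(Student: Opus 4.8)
The plan is to prove a single invariant by induction on the sequence of elementary steps the algorithm performs — one step per request arrival and one per individual call to \textsc{SPECIAL\_MERGE}/\textsc{REGULAR\_MERGE}, where the lone rank increment that immediately precedes a regular merge (the line $\rank(C_2) \gets \rank(C_2)+1$ in \textsc{COMBINE\_COMPONENTS}) is bundled into the same step so that no ill-formed transient state is ever exposed. The invariant is exactly the two statements of the lemma, asserted for every component present after the step: every odd component contains a request whose whole chain of containing components so far has been odd, and every even component $C$ contains a request $v$ such that, throughout the initial stretch of $v$'s chain during which $v$'s component had rank $< \rank(C)$, that component was odd. I will use freely that the rank of the component containing a fixed request is non-decreasing in time, which follows from \cref{lemma:merge-rank-bigger,lemma:merge-bigger-component}; hence the ``rank $< \rank(C)$'' stretch really is an initial segment of $v$'s chain, so the second statement is well-posed.

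The base case is immediate: a new request forms a singleton component of rank $0$, which is odd, and the request is its own witness for the first statement; the second is vacuous. For the inductive step, consider a merge that absorbs $C_{from}$, $C_{to}$, and the even components $D_1,\dots,D_p$ on the connecting path into a single component $C$ (retaining $C_{to}$'s identity); the components $C_{from}$ and $D_1,\dots,D_p$ then cease to exist, so there is nothing to maintain for them, and the only side effects elsewhere are nearby-rank updates (which change neither parity nor rank) and rank increments on components that stay odd (whose witnesses survive). Since $\sum_j|V(D_j)|$ is even, the parity of $C$ equals the parity of $|V(C_{from})|+|V(C_{to})|$. If $C$ is odd, exactly one of $C_{from},C_{to}$ was odd; taking its inductive witness $v$, the chain of $v$ was entirely odd before the merge and $v$'s new component $C$ is odd, so the first statement holds for $C$.

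The crux is the even case, and the key sub-claim I would isolate and prove first is: \emph{no merge has both $C_{from}$ and $C_{to}$ even}, so every even component arises from merging two odd components (plus some even ones). This is checked site by site. In \textsc{COMBINE\_COMPONENTS}, $C_{from}$ is always odd (it is $C_1$) except for the special merge of $C_3$ into $C_1$, where $C_3$ is even (it would otherwise be a compatible component closer than $C_2$) but $C_{to}=C_1$ is odd. In \textsc{PRUNE\_WAITING\_TREES}, $C_{from}$ is an odd component of a waiting tree. In \textsc{NEARBY\_FIXUP}, along a fixup chain $C^{(0)}\to C^{(1)}\to\cdots\to C^{(k)}$ every target absorbed at a non-final step carries a non-$\bot$ nearby rank (otherwise the loop would have stopped there), hence is even by \cref{lemma:nearby-greater}, so the parity of $C^{(i)}$ does not flip for $i<k$; since $C^{(0)}$ is odd (it is the merge of the even $C_2$ from \textsc{COMBINE\_COMPONENTS} with the odd $C_1$), all of $C^{(0)},\dots,C^{(k-1)}$ are odd, so the $C_{from}$ of every fixup merge is odd. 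Granting the sub-claim: if $C$ is even then $C_{from}$ and $C_{to}$ are both odd, so take the inductive witness $v$ of $C_{to}$; its chain was entirely odd before the merge, $v\in V(C)$, and $\rank(C)$ is at least the merge rank $r\ge\rank(C_{to})$ by \cref{lemma:merge-rank-bigger,lemma:merge-bigger-component}, so every component in $v$'s chain of rank $<\rank(C)$ is a pre-merge one and hence odd — this is the second statement for $C$. The lemma is then the invariant read off at time $t$.

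The main obstacle is exactly this \textsc{NEARBY\_FIXUP} sub-claim together with the transient-state bookkeeping: within the \textsc{PRUNE\_WAITING\_TREES} and \textsc{NEARBY\_FIXUP} loops a component's parity flips merge by merge, and a component can momentarily be odd while still carrying a non-$\bot$ nearby rank, so \cref{lemma:nearby-greater} must be understood as a statement about a component once its enclosing loop has finished. Choosing the granularity of the induction correctly — treating the single ``increment then regular-merge'' as one atomic step, and otherwise checking the invariant after each individual merge — is what keeps these transients from causing trouble.
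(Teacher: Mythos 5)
Your proof is correct and takes essentially the same approach as the paper's: both first establish that every merge has at least one odd endpoint among $\{C_{from}, C_{to}\}$ (your sub-claim that ``no merge has both $C_{from}$ and $C_{to}$ even'' is exactly the contrapositive), and then propagate the inductive witness from that odd endpoint through the merge history. The paper handles both parities of the resulting component with one uniform statement and witness, whereas you split into cases and separately verify the rank bound for the even result, but this is a presentational rather than a substantive difference.
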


\begin{proof}
    We first claim that when doing a regular or special merge, one of the two endpoints is an odd component. Indeed, in the combining step, one of $C_{from}$ or $C_{to}$ is always $C_1$ which is an odd component. When doing the nearby fixup procedure, we keep merging $C_1$ which originally was odd to a component $C_2$ until the nearby rank of $C_2$ is $\bot$. Because of \cref{lemma:nearby-greater}, components with a non-$\bot$ nearby rank are always even, so we keep merging an odd component to an even component, which results in an odd component.
    In the pruning step, we observe that any component in the waiting tree is odd, so all merges follows this property.

    We now prove this lemma by induction on the time of the last merge operation. We consider a component $C$ at some time $t$ right after its last merge operation. If the component never had a merge, it means it has rank $0$, and is therefore a singleton. The request it contains immediately satisfies the property. Otherwise, assume this property is satisfied by all components until the previous merge. As stated above, one of the two components during this merge is an odd component $C'$, with rank at most $rank(C)$. Therefore, using the property on $C'$, we can find a request $v$ which always has been inside an odd component up to this point. Whether $C$ is odd or even, $v$ satisfies the desired property for $C$.
\end{proof}

\begin{lemma}\label{lemma:nearby-after-edges}
    Let $C$ be a component such that $\nrank(C) \ne \bot$ at some time $t$. Then all edges adjacent to $V(C)$ added afterwards have rank at least $\nrank(C)$.
\end{lemma}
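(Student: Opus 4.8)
The plan is to follow the vertex set $V(C)$ forward in time. Write $p=\nrank(C)$ for the nearby rank at time $t$; by \cref{lemma:nearby-greater}, $C$ is even and $p>\rank(C)$. For every $t'\ge t$, let $C'$ denote the component that contains $V(C)$ at time $t'$, and aim to maintain the invariant that either \emph{(i)} $\rank(C')\ge p$, or \emph{(ii)} $C'$ is even with $\rank(C')<p$ and $\nrank(C')\ge p$, and that no edge of rank $<p$ incident to a vertex of $V(C)$ has been added in $[t,t']$. Since \cref{lemma:merge-rank-bigger,lemma:merge-bigger-component} imply that the rank of the component carrying any fixed request is non-decreasing in time, and since by \cref{lemma:rank-inside-comp} (together with \cref{coro:compo-edge-rank}) every regular or special edge incident to a rank-$\rho$ component has rank at least $\rho$, the invariant already yields the lemma whenever we are in case (i). So the real content is to verify the invariant by induction on the sequence of merges performed after $t$, concentrating on the transitions out of case (ii).

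So suppose $C'$ is in case (ii). The component containing $V(C)$ changes, and edges incident to $V(C)$ are added, only inside a merge in which $C'$ participates, so I would go through the possible roles of $C'$. If $C'$ is the destination $C_{to}$ of a regular merge in \textsc{COMBINE\_COMPONENTS}: since $C'$ is even and $\nrank(C')\ge p>\rank(C')$, whenever $C'$ is the closest compatible component of an odd $C_1$ one necessarily has $\nrank(C')>\rank(C_1)$, so the second branch fires, the merge has rank $\nrank(C')\ge p$, and the ensuing \textsc{NEARBY\_FIXUP}$(C')$ absorbs $V(C)$ into a component of rank at least $\nrank(C')\ge p$ (all intermediate special merges using rank $\max(\rank(\cdot),\nrank(\cdot))\ge\nrank(C')\ge p$), moving us to case (i); this is essentially the argument already used in the proof of \cref{lemma:nearby-greater}. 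If $C'$ (being even) is a source $C_{from}$, the only possibilities are that $C'$ is the component $C_3$ of the special merge in \textsc{COMBINE\_COMPONENTS} -- where $C'$ is not compatible with $C_1$, so $\nrank(C')\ge p$ forces $\rank(C_1)\ge p$, the merge has rank $\ge p$, and $C'$ lands inside $C_1$ (case (i)) -- or $C'$ is the running component inside \textsc{NEARBY\_FIXUP}, where the merge has rank $\max(\rank(C_2),\nrank(C_2))\ge\nrank(C')\ge p$ and $C'$ lands inside a component of rank $\ge p$ or nearby rank $\ge p$ (case (i) or (ii)). If $C'$ is the destination of a special merge, that only happens inside \textsc{NEARBY\_FIXUP} with rank $\max(\rank(C'),\nrank(C'))\ge p$, after which $C'$ becomes the running component and is handled as above.

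The last role is $C'$ being a strict intermediate (necessarily even) component on the shortest path of a merge. In \textsc{COMBINE\_COMPONENTS}, $C'$ then lies strictly between an odd $C_1$ and its closest compatible $C_2$, so $C'$ is not compatible with $C_1$ (breaking ties so that the strict intermediate is not itself the chosen $C_2$), hence $\nrank(C')\ge p$ forces $\rank(C_1)\ge p$, the merge rank is $\ge\rank(C_1)\ge p$, and by \cref{lemma:merge-bigger-component} $C_{to}$ is part of a rank-$\ge p$ component within the iteration. Likewise, if $C'$ is a strict intermediate on a special-merge path inside \textsc{NEARBY\_FIXUP}, then $\nrank(C')\ge\nrank(C_1^{\mathrm{cur}})+1$ would make $C'$ a legal and strictly closer candidate for the target $C_2$, contradicting minimality, so $\nrank(C')\le\nrank(C_1^{\mathrm{cur}})\le\max(\rank(C_2),\nrank(C_2))$ and the merge rank is again $\ge p$. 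Collecting the cases, the invariant is restored and no edge of rank $<p$ is ever added incident to $V(C)$, which is the claim.

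I expect the genuine obstacle to be this last role specialized to \textsc{PRUNE\_WAITING\_TREES}: the case where $C'$ is a strict intermediate even component on the $D$-shortest path of a \textsc{REGULAR\_MERGE}$(C,C_3,r+1)$, with $C$ and $C_3$ waiting-tree nodes and $r$ the common rank of the two colliding nodes. Here \cref{lemma:merge-rank-bigger} only gives $r+1\ge\rank(C')$, not $r+1\ge\nrank(C')=p$. When $\nrank(C')\le\rank(C)$ we are fine, since $r+1>\rank(C)\ge\nrank(C')$; the delicate sub-case is $\nrank(C')>\rank(C)$, in which $C'$ is compatible with $C$ and (by \cref{lemma:comp-distance} and the shortest-path property) strictly closer to $C$ than $C_3$. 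One would like to contradict the fact that $C$'s waiting edge points to its closest \emph{compatible} component, which has rank $<\rank(C)$ and nearby rank $\le\rank(C)$ -- but turning this into a clean contradiction means controlling how the $D$-shortest path from a waiting-tree node to one of its ancestors can run near a high-nearby-rank even component, which is awkward exactly because $D$ need not obey the triangle inequality. I would expect to handle it either with a dedicated structural invariant about waiting trees, or by showing outright that such a configuration cannot occur.
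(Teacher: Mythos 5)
Your proof takes the same basic route as the paper's: reduce to the first merge in which $V(C)$ participates, then case-split on the role $C$ plays in that merge. In fact you cover several cases more carefully than the paper does (e.g.\ $C$ as a strict intermediate on a \textsc{NEARBY\_FIXUP} path, which the paper passes over silently by saying ``a merge can only happen with $C$ as $C_{to}$''), and your invariant-based framing of why case (i) is terminal is clean.

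The one place you stop short is exactly where the paper stops short too, but the paper is entitled to: the intermediate even component $C'$ on a pruning-step \textsc{REGULAR\_MERGE}$(C_{from},C_3,r{+}1)$ path. You correctly observe that \cref{lemma:merge-rank-bigger} alone does not rule out $\nrank(C')>\rank(C_{from})$. But \cref{lemma:regular-invariants} is stated for \emph{all} regular merges, including those issued by \textsc{PRUNE\_WAITING\_TREES}, and it asserts precisely that $C_{to}=C_3$ is the closest compatible component to $C_{from}$ at the moment of the call. Since $C'$ is a strict intermediate on the $D$-shortest path, $D(C_{from},C')<D(C_{from},C_3)$, so $C'$ is \emph{not} compatible with $C_{from}$; hence $\nrank(C')\le\rank(C_{from})$, and then \cref{lemma:regular-invariants} again gives $r{+}1>\rank(C_{from})\ge\nrank(C')$, closing the case. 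The ``dedicated structural invariant about waiting trees'' you were anticipating is exactly \cref{lemma:regular-invariants}; its pruning-step case is justified because $H$ is processed by non-decreasing rank, so by the time $C_{from}$ is merged its waiting target (its former closest compatible, of strictly lower rank) already lies inside $C_3$. You had the lemma available and already cite it for the combining step -- you just need to apply it to the pruning step as well, which dissolves the obstacle without any new machinery or case-splitting on $\nrank(C')$ vs.\ $\rank(C_{from})$.
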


\begin{proof}
    We remark that if $C$ gets merged into a component of rank at least $\nrank(C)$, then using \cref{lemma:rank-inside-comp}, all edges added after this time will have rank at least $\nrank(C)$. Therefore, we only need to consider edges added before $C$ gets merged into a component with such rank.

    We first consider the case where $C$ gets merged as an even component on the path between $C_{from}$ and $C_{to}$ for a regular merge. This implies that $C$ is not compatible with $C_{from}$, so $\nrank(C) \leq \rank(C_{from})$. Because of \cref{lemma:regular-invariants}, the edge rank is strictly more than $\rank(C_{from})$, which satisfies our requirement. If a special merge is done inside the combining step, then $C$ is not compatible with $C_{to}$ which using the same argument as before implies that the edges added have rank at least $\nrank(C)$. Because $C$ is even, it cannot be the first endpoint for a merge in the combining step or any endpoint for a merge in the pruning step because all endpoints are odd components. Because of \cref{lemma:nearby-greater}, we have $\nrank(C) > \rank(C)$, therefore a merge can only happen with $C$ as $C_{to}$, and either right before the nearby fixup procedure, for which the edge rank is at least $\nrank(C)$ or inside the nearby fixup, for which the edge rank is also at least $\nrank(C)$. This completes the proof.
\end{proof}

\subsection{Bounding edge weights}

For $r \in [\![0, \lfloor \log m\rfloor ]\!]$, we define the following:
\begin{align*}
    \mathcal{F}_{r} = \bigcup_{k=1}^r ( R_k \cup S_k)
\end{align*}
$\mathcal{F}_{r}$ is the set of all edges with rank at most $r$. We note that \cref{coro:max-comp-rank,coro:compo-edge-rank} implies that the maximum rank an edge can have is $\lfloor \log m\rfloor$. So we can also define $\mathcal{F} = \mathcal{F}_{\lfloor \log m\rfloor }$ to be the set of all edges added to components. We start with a first observation:

\begin{lemma}
    $\mathcal{F}$ is a forest.
\end{lemma}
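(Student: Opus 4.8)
The plan is to argue that $\mathcal{F}$, the union of all edges ever added to components, never contains a cycle. The natural approach is to track the invariant that at every point during the execution, the set of edges currently present in the component spanning trees (namely $\bigcup_{C \in \comp} T(C)$) forms a forest, and moreover that edges are only ever \emph{added}, never removed, so that $\mathcal{F}$ equals the monotone union of these forests and is itself a forest. First I would establish by inspection of \textsc{RECEIVE\_NEW\_REQUESTS}, \textsc{SPECIAL\_MERGE}, \textsc{REGULAR\_MERGE}, \textsc{NEARBY\_FIXUP}, \textsc{PRUNE\_WAITING\_TREES} and \textsc{RUN\_GREEDY} that no step ever deletes an edge from any $T(C)$ (the greedy matching is maintained separately and does not touch spanning trees), so it suffices to prove the invariant "$\bigcup_{C\in\comp} T(C)$ is a forest and the $V(C)$ partition the requests" is preserved by every merge.

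The heart of the argument is the merge step itself. I would proceed by induction on the number of merges performed. Initially $\comp$ is empty and after each \textsc{RECEIVE\_NEW\_REQUESTS} each component is a singleton with empty tree, so the invariant holds. For the inductive step, consider a merge (special or regular) of $C_{from}$ into $C_{to}$ along the shortest $D$-path $P$. By the inductive hypothesis the components partition the vertex set and each $T(C)$ is a tree on $V(C)$. The path $P$ visits $C_{from} = C_1, C_2, \dots, C_k = C_{to}$ with $C_2,\dots,C_{k-1}$ even, and the new tree $T(C_{to})$ is formed as the union of $T(C_{from})$, $T(C_{to})$, the intermediate $T(C_i)$, and the set $E$ of "connector" edges of $P$ lying strictly between consecutive components. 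Since the $V(C_i)$ are pairwise disjoint and each connector edge in $E$ joins two \emph{distinct} components along $P$ (one endpoint in $V(C_i)$, the next endpoint in $V(C_{i+1})$, by the definition of the $D$-distance path), and there are exactly $k-1$ connector edges joining $k$ previously-disjoint trees, the union is a tree on $V(C_{from}) \cup \bigcup_{i} V(C_i) \cup V(C_{to})$; no cycle is created because each added connector edge strictly reduces the number of connected components among the pieces being glued. All the old components appearing on $P$ are removed from $\comp$, so the partition property is also preserved. Since the invariant is preserved at every step and $\mathcal{F} = \bigcup_{t} \bigl(\bigcup_{C \in \comp_t} T(C)\bigr)$ by the no-deletion observation, $\mathcal{F}$ is acyclic, i.e.\ a forest.

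The main obstacle I anticipate is being careful about the case of ties and the "remove edges until no cycles" remark from the offline algorithm, which has been dropped here: I must confirm that in the \emph{online} algorithm, because $P$ is always taken to be \emph{a} shortest $D$-path (a simple path through a sequence of distinct components), the connector edges genuinely join consecutive distinct components and cannot inadvertently re-connect a component to itself or skip a component, so no cycle-removal is needed. A second point requiring care is the \textsc{NEARBY\_FIXUP} and pruning merges, where several merges happen in sequence within one iteration; but each is an instance of the same merge primitive applied to the current (updated) decomposition, so the single-merge inductive step applies verbatim to each in turn. I would also note in passing that \cref{coro:max-comp-rank} and \cref{coro:compo-edge-rank} guarantee $\mathcal{F} = \mathcal{F}_{\lfloor \log m\rfloor}$, so "all edges added to components" is a well-defined finite set, but this is not needed for acyclicity itself.
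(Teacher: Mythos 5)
Your proof is correct and takes essentially the same approach as the paper, which compresses the whole argument into a single sentence: every added edge lies on a shortest $D$-path between two distinct components, so no cycle is introduced. Your version simply makes explicit the invariant the paper is implicitly relying on, that at every moment the components partition the requests and each $T(C)$ is a tree, together with the observation that edges are never deleted, so the final union $\mathcal{F}$ is the union of vertex-disjoint trees glued by $k-1$ connector edges joining $k$ disjoint pieces per merge.
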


\begin{proof}
    We note that all regular or special edges we add are part of the shortest path (according to $D$) between two distinct components. As a consequence, no cycles are introduced when adding regular or special edges.
\end{proof}

For $i \in [\![0, \lfloor \log m\rfloor ]\!]$, $\mathcal{F}_{i}$ is a subset of $\mathcal{F}$, so is also a forest. We consider the offline component decomposition $\dec_i$ where each component $C$ is a connected component in the forest $\mathcal{F}_{\leq i}$, with $V(C)$ being the vertices of a tree in this forest and $T(C)$ its edges. An example of such decomposition is given in \cref{figure:online-decomposition}.

\begin{figure}[h]
\centering
\includegraphics[scale=0.75]{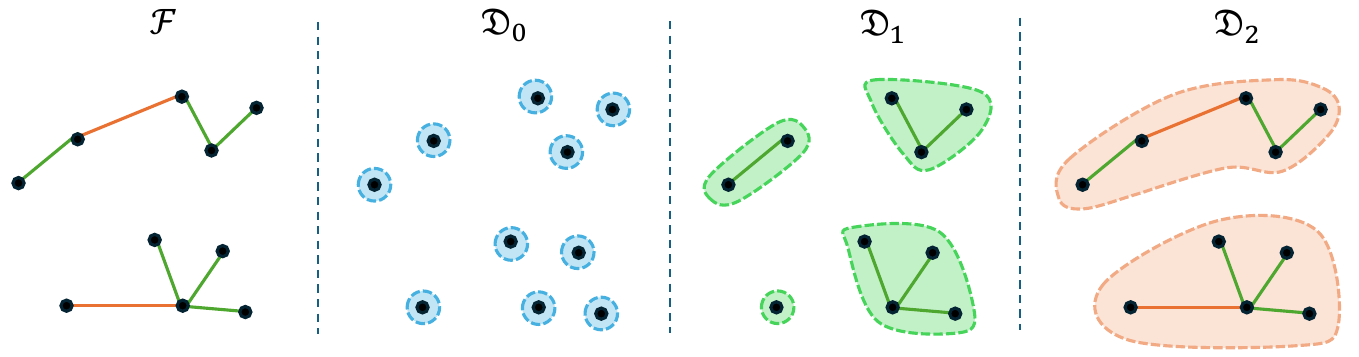}
\caption{Component decompositions for a given set of requests and edges. Green edges have rank $1$ while orange edges have rank $2$.}
\label{figure:online-decomposition}
\end{figure}

We claim that for any component in $\dec_i$, we can find a time in the execution of the online algorithm where this exact component existed. Moreover this time can be chosen right before it got merged into a higher-rank component or at the end of the execution of the algorithm:

\begin{lemma}\label{lemma:component}
    Let $i \in [\![0, \lfloor \log m\rfloor ]\!]$ and $C \in \dec_i$, then there is a time $t \geq 0$ during the execution of the algorithm where this exact component, with the same vertices and spanning tree, and rank at most $i$, existed during the execution of the algorithm. Moreover, we can choose this time $t$ to be either at the end of the execution of the online algorithm if the component is never used again, right before this component merges into a higher-rank component or right before its rank increases. 
\end{lemma}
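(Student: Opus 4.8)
Fix $C \in \dec_i$. The plan is a direct argument; an induction on $i$ would also work but is not needed.

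If $T(C)=\emptyset$, then $V(C)$ is a singleton $\{(v,t(v))\}$: the algorithm created exactly this rank-$0$ component in \textsc{RECEIVE\_NEW\_REQUESTS} at time $t(v)$. If this component is ever touched again, the first such event is a merge (with, necessarily, a higher-rank destination) or a rank increase (rank $0$ becomes positive); otherwise it survives unchanged to the end. So assume $T(C)\neq\emptyset$. Let $e^*$ be the \emph{last-added} edge of $T(C)$ and $\mu^*$ the merge that adds it. Since the edges added by a rank-$\rho$ merge are exactly the rank-$\rho$ edges, the rank $\rho^*$ of $\mu^*$ equals $\rank(e^*)$, so $\rho^*\le i$. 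Let $t$ be the instant immediately after $\mu^*$ performs its updates to $V(C_{to})$ and $T(C_{to})$ but before any \textsc{NEARBY\_FIXUP} call it triggers or any continuation of a pruning for-loop, and let $C^{\dagger}\in\comp$ be the online component containing $e^*$ at that instant. I claim $C^{\dagger}$ is exactly $C$ and has rank $\le i$.

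For $V(C^{\dagger})\supseteq V(C)$: all edges of $T(C)$ are present by time $t$, the online algorithm never deletes an edge, and $V(C)$ is connected by $T(C)$, so all of $V(C)$ lies in one online component, which must be $C^{\dagger}$ since $e^*\in T(C)$. For $V(C^{\dagger})\subseteq V(C)$: a case analysis over the branch that executed $\mu^*$ — a combining-step regular merge with $\rank(C_2)\ge\rank(C_1)$, the $\nrank$-driven combining-step regular merge, the combining-step special merge of $C_3$ into $C_1$, a \textsc{NEARBY\_FIXUP} special merge, or a pruning-step regular merge — shows, using \cref{lemma:regular-invariants,lemma:special-invariants,lemma:nearby-greater} to bound the ranks of the old endpoints and of the even components dragged along the path, that every edge of $T(C^{\dagger})$ has rank $\le\rho^*\le i$. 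Hence $T(C^{\dagger})\subseteq\mathcal{F}_{\le i}$; as $\mathcal{F}_{\le i}$ is a forest and $C^{\dagger}$ is connected, $V(C^{\dagger})$ lies in a single connected component of $\mathcal{F}_{\le i}$, which is $V(C)$ because it contains the endpoints of $e^*$. Thus $V(C^{\dagger})=V(C)$, and since $\mathcal{F}$ is a forest no higher-rank edge can join two vertices of $V(C)$, so the edges of $\mathcal{F}$ on $V(C)$ are exactly $T(C)$ and $T(C^{\dagger})=T(C)$. The same case analysis gives $\rank(C^{\dagger})\le i$: in the "clean" branches the destination's rank is set to (or already equals) the merge rank $\rho^*\le i$, while in the $\nrank$-driven branches \cref{lemma:nearby-greater} forces the destination's recorded rank to be strictly below $\rho^*\le i$.

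Finally, because $e^*$ was the last-added edge of $T(C)$ and $C\in\dec_i$ is a \emph{maximal} connected component of $\mathcal{F}_{\le i}$, the component $C^{\dagger}$ can never again receive an edge of rank $\le i$ (such an edge, being incident to $V(C^{\dagger})=V(C)$, would either enlarge the $\mathcal{F}_{\le i}$-component of $V(C)$ or create a cycle in the forest $\mathcal{F}$). Hence the next event involving $C^{\dagger}$ — the continuation of a pruning for-loop, the triggered \textsc{NEARBY\_FIXUP}, or a later combining/pruning operation — adds an edge of rank $>i$, which by \cref{lemma:merge-rank-bigger,lemma:merge-bigger-component,lemma:rank-inside-comp} either merges $C^{\dagger}$ into a component of rank $>i$ or raises its rank past $i$; if no such event ever occurs, $C^{\dagger}$ survives unchanged to the end of the execution. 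Either way, $t$ realizes one of the three allowed choices, completing the proof. The main obstacle is precisely this rank bookkeeping, and most acutely the $\nrank$-driven regular merge: right after it the destination momentarily holds the rank-$i$ edge $e^*$ while its recorded rank is still below $i$ (so the invariant of \cref{coro:compo-edge-rank} is transiently broken and is restored only by the ensuing \textsc{NEARBY\_FIXUP}), which is why one must pin down the exact instant "after $\mu^*$, before the triggered fixup" and verify (i) every edge entering $C^{\dagger}$ has rank $\le i$ and (ii) $\rank(C^{\dagger})\le i$ there, across all five branches in which $\mu^*$ can occur.
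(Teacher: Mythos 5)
Your proof is correct and fleshes out in detail what the paper compresses into a one-line pointer to \cref{lemma:rank-inside-comp,coro:compo-edge-rank}: you pick the instant right after the merge $\mu^*$ adding the last edge of $T(C)$, use the rank invariants to show the online component at that instant is exactly $C$ with rank $\le i$, and use the same invariants for the ``moreover'' clause. The one place you (explicitly and correctly) stop short is carrying out the five-branch case check that $T(C^\dagger)\subseteq\mathcal{F}_{\le i}$ despite \cref{coro:compo-edge-rank} being transiently violated during \textsc{NEARBY\_FIXUP}; that verification is straightforward (in each branch one bounds $C_{from}$'s and $C_{to}$'s edge ranks by $\rho^*$ using \cref{lemma:merge-rank-bigger} and the observation that an endpoint broken by a prior in-iteration merge still only holds edges of rank at most that prior merge's rank, which is $\le\rho^*$), but it is the only nontrivial content of the lemma and should be written out rather than asserted.
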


\begin{proof}
    This is a direct consequence of \cref{lemma:rank-inside-comp,coro:compo-edge-rank}.
\end{proof}

We can now give an upper bound on the weight of regular edges:

\begin{lemma}\label{lemma:bound-regular}
    Let $i \in [\![1, \lfloor \log m\rfloor ]\!]$, then:
    \begin{align*}
        w(R_i) \leq 2(i+1) \cdot OPT
    \end{align*}
\end{lemma}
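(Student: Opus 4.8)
Every edge of $R_i$ is created by a rank-$i$ regular merge, and such a merge only adds the edges of the shortest $D$-path $P$ from $C_{from}$ to $C_{to}$ that lie outside components; their total weight is at most $D(C_{from},C_{to})$. Hence $w(R_i)$ is bounded by the sum of $D(C_{from},C_{to})$ over all rank-$i$ regular merges performed by the algorithm, and it suffices to bound this sum by $2(i+1)\cdot OPT$.

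The next step is to pin down the sources $C_{from}$ of rank-$i$ regular merges. By \cref{lemma:regular-invariants}, $C_{from}$ is merged into its closest compatible component $C_{to}$ and has rank strictly less than $i$; in the combining step $C_{from}=C_1$ is odd, and in the pruning step $C_{from}$ is a node of a waiting tree, hence also odd. Since odd components are always compatible, $D(C_{from},C_{to})$ is at most the compressed distance from $C_{from}$ to the nearest other odd component in the decomposition at that instant. By \cref{lemma:merge-bigger-component} the vertices of $C_{from}$ are immediately absorbed into a component of rank at least $i$, and ranks are non-decreasing (a consequence of \cref{lemma:merge-rank-bigger,lemma:merge-bigger-component}, as already used for the well-definedness of \textsc{NEARBY\_FIXUP}); hence these vertices never again lie in a component of rank $<i$, so no vertex is the source of more than one rank-$i$ regular merge, and the sources of rank-$i$ regular merges are pairwise vertex-disjoint. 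Finally, if $C_{from}$ has rank $j\le i-1$, then by \cref{coro:compo-edge-rank} its internal edges have rank $\le j$, and by \cref{lemma:rank-inside-comp} no further edge of rank $\le i-1$ can ever be incident to $V(C_{from})$; hence $V(C_{from})$ together with its spanning tree is a full connected component of $\mathcal F_{j}$, i.e.\ an odd component of the offline decomposition $\dec_{j}$.

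I would then split the rank-$i$ regular merges by $j=\rank(C_{from})\in\{0,\dots,i-1\}$ and bound each of the (at most $i+1$) resulting groups by $2\cdot OPT$. Fix $j$: by the previous paragraph its sources are pairwise distinct odd components of $\dec_{j}$, hence a sub-collection of $(\dec_{j})_{odd}$. Applying \cref{lemma:decomp-shortest-opt} to $\dec_{j}$, and using \cref{lemma:online-offline-same} to identify $W_{OPT}$ with $OPT$, gives $\sum_{C\in(\dec_{j})_{odd}} r_C\le 2\cdot OPT$, where $r_C$ is the compressed distance inside $\dec_{j}$ from $C$ to its nearest odd partner. So it remains to show that for each source $C_{from}$ in group $j$ the merge distance $D_{t}(C_{from},C_{to})$, measured in the online decomposition at the merge time $t$, is at most $r_{C_{from}}$, measured in the static decomposition $\dec_{j}$. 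For this I would take an optimal $\dec_{j}$-path $C_{from}=B_1,\dots,B_k$ witnessing $r_{C_{from}}$ (with $B_2,\dots,B_{k-1}$ even and $B_k$ odd in $\dec_{j}$), locate the vertices of each $B_\ell$ in the decomposition at time $t$, and use \cref{lemma:comp-distance} and the definition of $D$ to extract from it a path of no larger weight from $C_{from}$ to some odd component present at time $t$.

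The crux — and the step I expect to be hardest — is exactly this last inequality. The decomposition at time $t$ need not equal $\dec_{j}$: some rank-$\le j$ edges may not yet be present (so $\dec_{j}$-components can appear split), while other regions may already have merged up into components of rank $>j$, of either parity. Both effects interact badly with $D$, which is not a metric and is sensitive to exactly which components are even (and hence compressed to a point). Making the comparison with $r_{C_{from}}$ robust to this is the real work, and it is plausibly where the extra factor of order $i$ over the offline intuition that one round of merging costs $OPT$ comes from. If a direct comparison proves too fragile, the fallback is to process the group-$j$ merges in time order and charge each, via \cref{lemma:component}, to a suitable snapshot of the execution, with $O(1)$ snapshots per value of $j$, each charged $O(OPT)$.
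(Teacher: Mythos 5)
Your high-level skeleton matches the paper's: charge $w(R_i)$ to the compressed merge distances $D_t(C_{from},C_{to})$, observe that the sources of rank-$i$ regular merges are pairwise-disjoint odd components of an offline decomposition, and apply \cref{lemma:decomp-shortest-opt}. But the inequality you isolate as the crux, $D_t(C_{from},C_{to})\le r_{C_{from}}$ with $r_{C_{from}}$ measured in $\dec_j$, is too strong and will not hold. The entire reason this lemma carries a factor $2(i+1)$ rather than $2$ is that the online algorithm genuinely overshoots: at merge time $t$ the nearest odd component of the \emph{offline} decomposition may be invisible, because its requests are still spread across several components that are all even (hence incompatible) at time $t$, or because a nearby even component $C_3$ has recent $t_{max}(C_3)$ so that requests even closer to $C_{from}$ may still arrive after $t$. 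The paper therefore only proves $l_C\le(i+1)\,r_C$, and the factor $i+1$ is precisely the threshold $l/(i+1)$ in \textsc{REGULAR\_MERGE}; the proof requires the set $H$ of components within compressed distance $l_C/(i+1)$ at merge time, uses \cref{lemma:nearby-after-edges} plus \cref{coro:compo-edge-rank} to pin those components into $\dec_{i-1}$ with their parity intact, and uses the special-merge guard in \textsc{COMBINE\_COMPONENTS} to upper-bound $t_{max}$ over $H$ by $t_{max}(C_{from})+l_C$, ruling out close late arrivals. None of this machinery appears in your sketch, and it cannot be replaced by path-chasing between the two decompositions, since their parities simply do not line up at time $t$.

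Your workaround, splitting sources by $j=\rank(C_{from})$ and bounding each group by $2\cdot OPT$ so the factor of roughly $i$ comes from the number of groups, moves in the wrong direction. Passing from $\dec_{i-1}$ to the finer $\dec_j$ can only make $r_{C_{from}}$ smaller (every even $\dec_{i-1}$-component near $C_{from}$ splits into $\dec_j$-components, some of which may be odd), so the per-source comparison becomes harder, not easier. In the paper the factor comes from the per-source inequality $l_C\le(i+1)\,r_C$, not from multiplicity of groups: there is a single decomposition $\dec_{i-1}$, a single application of \cref{lemma:decomp-shortest-opt}, and all sources live there simultaneously because (as you correctly noted) no edge of rank $\le i-1$ is ever added to $V(C_{from})$ after the rank-$i$ merge. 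To repair your proof you would need to reproduce the $H$-set and nearby-rank argument; the grouping by $j$ is an unnecessary detour.
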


\begin{proof}
    An edge is part of $R_i$ if it was added by a regular merge from some component $C_{from}$ to $C_{to}$ with rank $i$. Let $A$ be the set of all components which were used as $C_{from}$ for a regular merge of rank $i$. We remark, using \cref{lemma:regular-invariants}, that all components in $A$ are odd and have rank strictly less than $i$. Moreover, after this merge, they are part of a component of rank at least $i$, therefore all components in $A$ are disjoint (by which we mean the set of vertices of each component are disjoint). Furthermore, using \cref{lemma:component}, every component in $A$ exactly matches one component in $\dec_{i-1}$, i.e $A \subseteq \dec_{i-1}$.

    For every component $C \in A$, we note $r_C$ the compressed distance from $C$ to the closest odd component in $\dec_{i-1}$. Moreover, we note $l_C$ the total weight of the edges of rank $i$ added from the regular merge with source $C$ and rank $i$. We note because of \cref{lemma:regular-invariants} that $l_C$ is the compressed distance from $C$ to the closest compatible component at the time of merging. We have:
    \begin{align*}
        w(R_i) = \sum_{C \in A} l_C
    \end{align*}
    Moreover, using \cref{lemma:decomp-shortest-opt}, we get:
    \begin{align*}
        \sum_{C \in A} r_C \leq 2 OPT
    \end{align*}

    Let $C \in A$, we want to prove that $l_C \leq (i+1) \cdot r_C$ which will conclude the proof. Let $H$ be the set of components whose compressed distance was strictly less than $l_C / (i + 1)$ at the time of merging. Because their compressed distance is strictly less than $l_C$, it implies they were not compatible with $C$ and therefore are even and of rank strictly less than $\rank(C)$. Moreover, during the regular merge, their nearby rank is set to be at least $i$. Finally, by design of the algorithm, we note that the maximum arrival time of any request in $H$ is at most $t_{max}(C) + l_C$. 
    
    We want to prove that all components within compressed distance $l_C / (i + 1)$ of $C$ in $\dec_{i-1}$ are even, this would imply that $r_C$, the shortest distance from $C$ to another odd component in $\dec_{i-1}$ is at least $l_C / (i + 1)$ away from $C$. Because the algorithm assigns all components in $H$ a nearby rank of at least $i$, \cref{lemma:nearby-after-edges,coro:compo-edge-rank} implies that components in $H$ have exactly matching components in $\dec_{i-1}$, which are therefore also even. We consider the shortest path between $C$ and its closest odd component in $\dec_{i-1}$, which has weight $r_C$. Let us consider the first edge along this path between a component in $\{C\}\cup H$ and a component outside of $\{C\}\cup H$. $P$ starts from $C$ and ends with an odd component different from $C$. Given that $H$ contains only even components, $P$ ends in a component not in $\{C\}\cup H$, so such an edge $e$ always exists. Let $e = (u,v)$, with $u$ being a request in $\{C\}\cup H$ and $v$ being a request outside of it. We have two cases:
    \begin{itemize}
        \item Request $v$ has already arrived when $C$ merged. Let $C'$ be the component $v$ belonged to when $C$ merged. Because $v$ is not a request in $H$, this means that $C' \notin H$ and by definition of $H$, at the time of merge, $D(C, C') \geq l_C / (i+1)$. However, we observe that the prefix of $P$ until $v$ is a path from $C$ to a request in $C'$ which only goes through components in $H$, which existed at the time of merge and were even. So by definition, this prefix of $P$ must have weight at least $l_C / (i+1)$, so $w(P) \geq l_C / (i + 1)$.
        \item Request $v$ arrived after $C$ merged. Let $t_{merge}$ be the time when $C$ merged. By design of the algorithm, we know that $t_{merge} \geq t_{max}(C) + 2l_C$. Moreover, using \cref{lemma:regular-invariants}, $\rank(C) < i$, so all components $C' \in H$ verify $D(C,C') < l_C / (\rank(C) + 2)$. As a consequence, the algorithm ensured in the combining step that all components $C' \in H$ satisfy $t_{max}(C') \leq t_{max}(C) + l_C$. Therefore:
        \begin{align*}
            w(e) &= d(u,v) \\*
            & \geq |t(v) - t(u)| \\
            &= t(v) - t(u) \\
            &\geq t_{merge} - \max (t_{max}(C), \max_{C' \in H} t_{max}(C')) \\
            &\geq t_{max}(C) + 2l_C - (t_{max}(C) + l_C) \\
            &\geq l_C
        \end{align*}
        Therefore, $w(P) \geq w(e) \geq l_C \geq l_C / (i + 1)$
    \end{itemize}
    In both cases, $r_C = w(P) \geq l_C / (i + 1)$, therefore $l_C \leq (i+1) \cdot r_C$ and:
    \begin{align*}
         w(R_i) &= \sum_{C \in A} l_C \\
         & \leq \sum_{C \in A} (i+1)\cdot r_C \\
         & = (i+1) \cdot \sum_{C \in A} r_C \\
         & \leq (i+1)\cdot 2OPT
    \end{align*}
\end{proof}

We now need to bound the weight of the special edges. As mentioned previously, we designed the algorithm in a way that special edges have a small cost compared to the existing size of the components. To prove this, we need re-partition the set of special edges $(S_i)_{i \in [\![0, \lfloor \log m\rfloor ]\!]}$ into a new partition $(S'_i)_{i \in [\![0, \lfloor \log m\rfloor ]\!]}$. Let $e$ be a special edge, it was added by the special merge procedure. We remark that this procedure is only called at two different locations by our algorithm. The first one being in the combining step, when merging a components $C_3 \in H$ into $C_1$ with edge rank $\rank(C_1)$. In this case, we add this edge to $S'_{\rank(C_1)}$, we remark that this edge is also part of $S_{\rank(C_1)}$. The second case is during the nearby fixup procedure, merging some component $C_1$ into $C_2$, where the rank or nearby rank of $C_2$ is at least $\nrank(C_1)$. In this case we add the edge to $S'_{\nrank(C_1)}$ and we remark that by design of the algorithm $\nrank(C_1) \leq \max (\rank(C_2), \nrank(C_2))$ which is the edge rank. We notice that in both cases, an edge originally in some $S_i$ with $i \geq 0$ is placed in a set $S'_k$ with $k \leq i$. 

For $i \in [\![0, \lfloor \log m\rfloor ]\!]$, we now define:
\begin{align*}
    \mathcal{F}'_{i} = \bigcup_{k=1}^i ( R_k \cup S'_k)
\end{align*}

Because of the previous point, we have $\mathcal{F}'_{\lfloor \log m \rfloor} = \mathcal{F}_{\lfloor \log m \rfloor} = \mathcal{F}$. Moreover, for all $i \in [\![0, \lfloor \log m\rfloor ]\!], \mathcal{F}_{i} \subseteq \mathcal{F}'_{i}$.

\begin{lemma}\label{lemma:bound-special}
    Let $i \in [\![1, \lfloor \log m\rfloor ]\!]$, then:
    \begin{align*}
        w(S'_i) \leq \left(w(\mathcal{F}'_{i-1}) + w(R_i) \right) / i
    \end{align*}
\end{lemma}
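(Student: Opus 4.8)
The plan is to split $S'_i$ according to which call of the special-merge subroutine produced each edge, to bound the weight of the path added by each such call by $1/i$ times the weight of a ``witness'' sub-forest that already lies inside $\mathcal{F}'_{i-1}\cup R_i$, and then to argue that the witnesses of different calls are pairwise disjoint. Two kinds of calls occur: \textbf{(A)} combining-step merges of some $C_3\in H$ into an odd component $C_1$ with $\rank(C_1)=i$ (the combining-step special edges of rank $i$), and \textbf{(B)} \textsc{NEARBY\_FIXUP} merges of a component whose nearby rank equals $i$. For $i=1$ both families are empty --- a type-(A) $C_3$ would need rank $0$, hence be a singleton, hence odd, hence compatible with $C_1$, contradicting $C_3\in H$; and by \cref{lemma:nearby-greater} a component with nonzero nearby rank is even, hence of rank $\ge 1$, hence of nearby rank $>1$ --- so we may assume $i\ge 2$.

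\textbf{Type (A).} Fix such a merge, set $l=D(C_1,C_2)$ with $C_2$ the closest compatible component of $C_1$, and let $C_1=P_0,P_1,\dots,P_q=C_3$ realize the compressed distance $D(C_1,C_3)$; the newly added edges have total weight exactly $D(C_1,C_3)$. By the definition of $H$ each $P_j$ with $j\ge1$ satisfies $D(C_1,P_j)\le D(C_1,C_3)<l/(i+2)$, so each $P_j$ is closer to $C_1$ than the closest compatible component, hence even of rank $<i$; therefore $T(P_1),\dots,T(P_q)$ consist of edges of rank $<i$ and lie in $\mathcal{F}_{i-1}\subseteq\mathcal{F}'_{i-1}$. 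The first edge of the path joins $a\in C_1$ (so $t(a)\le t_{max}(C_1)$) to $b\in P_1$ with $d(a,b)\le D(C_1,C_3)<l/(i+2)$, so $t(b)<t_{max}(C_1)+l/(i+2)$, while the merge fired only because $C_3$ carries a request $v$ with $t(v)\ge t_{max}(C_1)+l$. After the merge, $T(P_1)\cup\dots\cup T(P_q)$ together with the $q-1$ new path-edges running among $P_1,\dots,P_q$ forms a tree connecting $b$ to $v$, so its weight is at least $t(v)-t(b)>l(i+1)/(i+2)$; subtracting the $<l/(i+2)$ of weight contributed by those new edges leaves $\sum_j w(T(P_j))>li/(i+2)$, whence $D(C_1,C_3)<l/(i+2)<\frac1i\sum_j w(T(P_j))$. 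Since components never split, $P_1,\dots,P_q$ are absorbed into a rank-$\ge i$ component for good and can never recur in another type-(A) witness, so the type-(A) witnesses are pairwise-disjoint sub-forests of $\mathcal{F}_{i-1}$; summing these inequalities bounds the type-(A) part of $w(S'_i)$ by $w(\mathcal{F}_{i-1})/i$.

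\textbf{Type (B).} Along a single run of \textsc{NEARBY\_FIXUP} the nearby ranks of the successive components strictly increase --- each step moves to a component of rank at least, or of nearby rank strictly above, the current nearby rank, and by \cref{lemma:nearby-greater} a component still carrying a nearby rank has it strictly above its own rank --- so at most one step per fixup run feeds $S'_i$: the step merging a component $C$ with $\nrank(C)=i$ into its closest component $D'$ of rank $\ge i$ (or of nearby rank $\ge i+1$). The value $\nrank(C)=i$ was installed by a rank-$i$ regular merge $\mathcal{R}$, which added $R_i$-edges of weight $W_{\mathcal{R}}=D(C_{from},C_{to})$ and, by \cref{lemma:merge-bigger-component}, produced a rank-$\ge i$ component within compressed distance $W_{\mathcal{R}}/(i+1)$ of $C$; because \cref{lemma:nearby-after-edges} forbids any edge of rank $<i$ from ever being attached to $C$ afterwards, a rank-$\ge i$ component remains within $W_{\mathcal{R}}/(i+1)$ of $C$, so $D(C,D')\le W_{\mathcal{R}}/(i+1)<W_{\mathcal{R}}/i$. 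One then charges this step to $R_i$-edges --- those of $\mathcal{R}$ itself, or, when $\mathcal{R}$ has seeded nearby rank $i$ on several components that get fixed one after another, the $R_i$-edges inside whichever rank-$\ge i$ component is closest at that moment, together with the rank-$<i$ spanning trees of the even components swallowed by the step --- checking that the total charge stays inside $R_i\cup\mathcal{F}_{i-1}$ and that no edge is paid for twice.

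\textbf{Conclusion and main obstacle.} The type-(A) witnesses lie in $\mathcal{F}_{i-1}$, the type-(B) witnesses in $R_i$ (plus, if needed, further disjoint rank-$<i$ pieces), and $\mathcal{F}_{i-1}$ and $R_i$ are disjoint subsets of $\mathcal{F}'_{i-1}\cup R_i$; summing the per-call bounds yields $w(S'_i)\le\bigl(w(\mathcal{F}'_{i-1})+w(R_i)\bigr)/i$. The delicate part is entirely in type (B): because $D$ is not a metric and shifts as even components are merged away, the claim that ``a rank-$\ge i$ witness stays nearby'' must be proved carefully through \cref{lemma:nearby-after-edges} and \cref{lemma:comp-distance}, and --- more seriously --- when one regular merge installs nearby rank $i$ on many components, the charging has to be organized so that no portion of $R_i\cup\mathcal{F}_{i-1}$ is charged more than once; this bookkeeping, not type (A), is where the real work lies.
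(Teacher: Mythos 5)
Your type-(A) analysis is essentially the paper's: you use the same witness (the spanning trees of $C_3$ and of the even components merged along the path), the same timing argument via the late request in $C_3$, and the same disjointness reasoning (once absorbed into a rank-$\geq i$ component with an $S'_i$ edge, a component can never again serve as $C_{from}$ or as a swallowed even component for a merge contributing to $S'_i$). The bookkeeping is marginally different (you subtract the internal path edges instead of absorbing the whole path into $E_1$), but the bound comes out the same.

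Type (B) is where you genuinely diverge from the paper, and where the gap is. You try to charge the special merge against the $R_i$-edges of the regular merge $\mathcal{R}$ that installed the nearby rank. You yourself flag that this charging cannot be made disjoint when $\mathcal{R}$ seeds nearby rank $i$ on several components, and you never resolve this --- the proposal stops at acknowledging that ``this bookkeeping\ldots\ is where the real work lies.'' That is precisely the obstruction: a single regular merge of rank $i$ can set $\nrank(\cdot)=i$ on many disjoint even components, each of which may later trigger a type-(B) special merge, and there is no way to partition the fixed pool of $R_i$-edges of $\mathcal{R}$ (total weight $W_{\mathcal{R}}$) so that each such merge gets a disjoint share worth at least $i$ times the weight it adds. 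The paper sidesteps this entirely by picking a different witness: it lower-bounds the weight of the spanning tree $T(C_1)$ of the \emph{merging odd component itself}, showing $w(C_1)\geq l\cdot(1-1/(i+1))$ where $l=D_{t'}(C'_{from},C'_{to})$. This is done via \cref{lemma:request-in-odd}: there is a request $v\in V(C_1)$ that has always been in an odd component, so either $v$ had already arrived at time $t'$ (in which case the component containing it at time $t'$ was odd, hence compatible, hence at compressed distance $\geq l$ from $C'_{from}$, forcing $d(u,v)\geq l(1-1/(i+1))$ for any $u\in C'_1$), or $v$ arrived after $t'$ (in which case $t(v)\geq t_{max}(C'_{from})+2l$ while $t_{max}(C'_1)\leq t_{max}(C'_{from})+l$, again forcing $d(u,v)\geq l$). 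Both $u$ and $v$ lie in $V(C_1)$, so $T(C_1)$ must have weight at least $l(1-1/(i+1))$, while the special edges added have weight $<l/(i+1)$; the ratio is $\leq 1/i$. With $T(C_1)$ (plus the swallowed even-component trees) as witness, disjointness is automatic by exactly the same argument as in type (A): once $C_1$ merges, its spanning tree carries an $S'_i$ edge and can never again appear as $C_{from}$ or as a swallowed component for a merge into $S'_i$. You are missing this use of \cref{lemma:request-in-odd} to lower-bound the merging component's own weight; without it your type-(B) charging does not close.
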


\begin{proof}
    Let $i \in [\![1, \lfloor \log m\rfloor ]\!]$, let $A = \mathcal{F}'_{i-1} \cup R_i \subseteq \mathcal{F}$. We consider a special merge from $C_{from}$ to $C_{to}$ which contributes a path $P$, being a shortest path from $C_{from}$ to $C_{to}$, to $S'_i$. Let $D$ be the set of even components merged along $P$ (excluding endpoints). We want to prove two properties: (i) the edges of $T(C_{from})$ and the spanning trees of components in $D$ are in A and (ii) $w(P) \leq (w(C_{from}) + w(D)) / i$. We cover the two cases where a special merge can happen. 

    The first case is during the combining step. Using the algorithm notation, let $C_1 = C_{to}$ be the first component considered and $l_{C_1}$ be the compressed distance to its nearest compatible component at the time of the special merge. We have $\rank(C_1) = i$, and $C_{from} = C_3$ satisfies $D(C_1, C_3) < l_{C_1} / (rank(C_1) + 2)$ and $t_{max}(C_3) \geq t_{max}(C_1) + l_{C_1}$. Because $D$ and $C_3$ are not compatible with $C_1$, this means that their rank is strictly less than $\rank(C_1) = i$, which is the rank used for the special merge. Using \cref{coro:compo-edge-rank}, this implies that $C_3$ and $D$ are made from edges in $\mathcal{F}_{i-1}$, so in $A$ because $\mathcal{F}_{i-1} \subseteq \mathcal{F'}_{i-1}$.

    We have $t_{max}(C_3) \geq t_{max}(C_1) + l_{C_1}$, so there exists a request $v \in V(C_3)$ such that $t(v) \geq t_{max}(C_1) + l_{C_1}$. We have $w(P) = D(C_1, C_3) < l_{C_1} / (\rank(C_1) + 2)$. Given $P$, a simple path on the components, we now describe how to get $P'$, a simple path on the requests. To do so, we consider the fact that each edge in $P$ starts and ends at a specific request within its incident component. To link these together, consider two consecutive edges in $P$: the request where the first edge ends and the request where the second edge begins belong to a same component $C$. We can therefore connect these two requests using the simple path between them in the spanning tree $T(C)$. We end up with a path $P'$ from some $u \in C_1$ to $v$ made of two disjoint sets of edges: the first $E_1$ consists of the edges between requests in different components. We note that $w(E_1) = w(P)$. The second $E_2$ consists of edges between requests in the same component. By construction, these edges belong to $T(C_3)$ or to the spanning tree of components in $D$. Moreover, each edge in the spanning tree of a component is only used at most once: $P$ is a shortest path and does not pass through a component twice. Therefore $w(E_2) \leq w(C_3) + w(D)$. Because $d$ satisfies the triangle inequality and $P'$ is a path from $u$ to $v$, we get:
    \begin{align*}
        w(P') &\geq d(u,v) \\
        & \geq |t(v) - t(u)| \\
        & \geq t(v) - t(u) \\
        & \geq t_{max}(C_1) + l_{C_1} - t_{max}(C_1) \\
        & \geq l_{C_1} \\
        & \geq w(P) \cdot (\rank(C_1) + 2) \\
        & = w(P) \cdot (i+2)
    \end{align*}
    We also have:
    \begin{align*}
        w(P') &= w(E_1) + w(E_2) \\
        & \leq w(P) + (w(C_3) + w(D)) 
    \end{align*}
    From these two inequalities, we get the desired property $w(P) \leq (w(C_3) + w(D)) / (i+1) \leq  (w(C_3) + w(D)) / i =  (w(C_{from}) + w(D)) / i$.

    The second case is in the nearby fixup procedure, using the algorithm notation, when merging a component $C_1 = C_{from}$ to its closest component $C_2 = C_{to}$ having a rank of at least $\nrank(C_1)$, or nearby rank of at least $\nrank(C_1) + 1$. We remark that $i = \nrank(C_1)$ and that this special merge can happen in the middle of an iteration, meaning some invariants may not hold. Because components in $D$ are closer to $C_1$ than $C_2$, it implies that their rank is strictly less than $\nrank(C_1) = i$. Given that components in $D$ were not touched by the algorithm in the current iteration so far, we can use \cref{coro:compo-edge-rank}, which implies that they are made from edges in $\mathcal{F}_{i-1}$, so in $A$ because $\mathcal{F}_{i-1} \subseteq \mathcal{F'}_{i-1}$. We now consider $C_1$. The fixup procedure might have already performed some special merge before this one. However, we remark that the nearby rank of $C_1$ increases by at least $1$ during each fixup iteration. Therefore, each special merge in previous iterations of the fixup procedure contributed to $S'_k$ for some $k < i$, which is in $A$. Moreover, the regular merge right before the call to the fixup procedure had rank at most $\nrank(C_1) = i$, so the edges added were part of $R_k$ for some $k \leq i$, which is also in $A$. Finally, for the remaining edges, which were already present before the beginning of the algorithm iteration, we can use our invariants from \cref{lemma:nearby-greater,coro:compo-edge-rank}: given that $\rank(C_1) < \nrank(C_1) = i$, all remaining edges are part of $\mathcal{F}_{i-1}$.

    We consider the time $t'$ at which $C_1$ got assigned $i$ as its nearby rank. At time $t'$, $C_1$ was possibly smaller and as such, we call $C_1'$ the component $C_1$ as it was at time $t'$, we have $C'_1 \subseteq C_1$. Getting a non-$\bot$ nearby rank can only happen during a regular merge of some component $C'_{from}$ into a component $C'_{to}$ at rank $i$. Let $l$ be the distance from $C'_{from}$ to $C'_{to}$ at this time, $l = D_{t'}(C'_{from}, C'_{to})$. Because $C'_1$ was given a nearby rank, it means that $D_{t'}(C'_{from}, C'_1) < l / (i + 1)$.

    We want to prove that while $C'_1$ has not been merged, there is always a component of rank at least $\nrank(C_1)$ with distance at most $l / (i + 1)$ from $C'_1$. This, in particular, implies that $D(C_1,C_2) \leq  l / (i + 1)$. Using \cref{lemma:regular-invariants}, we consider the shortest path $P$ from $C'_1$ to $C'_{from}$ at time $t'$. We remark that it only passes through components with distance less than $l / (i + 1)$ from $C'_{from}$, hence all of these components' nearby rank was set to at least $i$. We observe using \cref{lemma:nearby-after-edges} that after $t'$, these components along the path are either the same or merged into a component of rank at least $i$. Moreover, after $t'$, $C'_{from}$ has itself merged into a component of rank at least $\nrank(C_1)$. Therefore, going along $P$ after $t'$, we always find a component of rank at least $i$ after some even components. By definition of $C_2$, we get that $D(C_1, C_2) \leq w(P) < l /(i+1)$.

    We now want to give a lower bound on $w(C_1)$. To be more precise, we want to prove that $w(C_1) \geq l \cdot(1 - 1/(i+1))$. We remark that in the nearby fixup procedure, $C_1$ is odd. Hence, using $\cref{lemma:request-in-odd}$, there exists a request $v \in V(C_1)$ which was always part of an odd component until $t$. We consider the state of this request at time $t'$:
    \begin{itemize}
        \item Request $v$ already arrived by $t'$ ($t(v) < t'$). But then, the component $v$ was in at time $t'$ was an odd component $C_v$ and as a consequence was compatible with $C'_{from}$. $C'_{to}$ being the closest compatible component at this time, this means that $D_{t'}(C'_{from}, C_v) \geq D_{t'}(C'_{from}, C_{to}) = l$. Because $C'_1$ was an even component at time $t'$, we get that:
        \begin{align*}
            D_{t'}(C'_1, C_v) &\geq D_{t'}(C'_{from}, C_v) - D_{t'}(C'_{from}, C'_1) \\
            &\geq l - l/(i+1) = l \cdot(1 - 1/(i+1))
        \end{align*}
        Let $u \in C'_1$, using \cref{lemma:comp-distance}, we have therefore $d(u,v) \geq D(C'_1, C_v) \geq l \cdot(1 - 1/(i+1))$. We remark that we chose $v$ such that it was in $V(C_1)$. As a consequence, both $u$ and $v$ are in $V(C_1)$. Because $d$ satisfies the triangle inequality, the spanning tree $T(C_1)$ on $V(C_1)$ satisfies $w(T(C_1)) \geq d(u,v) \geq l \cdot(1 - 1/(i+1))$.
        \item Request $v$ arrived after $t'$ ($t(v) \geq t'$). We note that by design of the algorithm, $t' \geq t_{max}(C'_{from}) + 2D_{t'}(C'_{from}, C_{to}) = t_{max}(C'_{from}) + 2l$. Moreover, using \cref{lemma:regular-invariants}, $\rank(C'_{from}) < i$, so $C_1'$ satisfies $D_{t'}(C'_{from}, C'_1) < l / (i + 1) \leq l / (\rank(C'_{from}) + 2)$. As a consequence, the algorithm ensured in the combining step that $t_{max}(C'_1) \leq t_{max}(C'_{from}) + l$. Let $u \in C_1'$, we therefore have:
        \begin{align*}
            d(u,v) &\geq |t(v) - t(u)| \\
            & \geq t(v) - t(u) \\ 
            & \geq t_{max}(C'_{from}) + 2l - (t_{max}(C'_{from}) + l) \\
            & = l
        \end{align*}
        We remark that both $u$ and $v$ are requests in $C_1$ and use the same argument as before: because $d$ satisfies the triangle inequality, the spanning tree $T(C_1)$ verifies:
        \begin{align*}
            w(C_1) \geq d(u,v) \geq l > l \cdot(1 - 1/(i+1))
        \end{align*}
    \end{itemize}
    From the previous points, we get that:
    \begin{align*}
        W(P) &= D(C_1, C_2) \\
        &\leq 1 /(i+1) \cdot l \\
        & \leq 1/(i+1) \cdot (1 - 1/(i+1))^{-1} \cdot w(C_1) \\
        & = w(C_1) / i \\
        &\leq  (w(C_{from}) + w(D)) / i
    \end{align*}

    We showed that (i) the edges of $T(C_{from})$ and the spanning trees of components in $D$ are in A and (ii) $w(P) \leq (w(C_{from}) + w(D)) / i$. We remark that the resulting component has at least an edge in $S'_i$ and therefore, because of (i), cannot be used again as $C_{from}$ or in $D$ for a new special merge contributing to $S'_i$. Hence, each edge in $A$ is used at most once as $C_{from}$ or $D$ for a merge contributing to $S'_i$. Moreover, using (i), these components are made only using edges from $A$. Thus, using (ii), summing over all merges which contribute to $S'_i$, we get $w(S'_i) \leq w(A) / i = \left(w(\mathcal{F}'_{i-1}) + w(R_i) \right) / i $.
\end{proof}

\begin{theorem} \label{theorem:bound-forest-weight}
    \begin{align*}
        w(\mathcal{F}) \leq OPT  \cdot (2 + o(1)) \log^2m
    \end{align*}
\end{theorem}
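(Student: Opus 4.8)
The plan is to fold the two preceding bounds --- \cref{lemma:bound-regular} on $w(R_i)$ and \cref{lemma:bound-special} on $w(S'_i)$ --- into a single first-order recurrence on $a_i := w(\mathcal{F}'_i)$ and then solve it. Since by definition $\mathcal{F}'_i = \mathcal{F}'_{i-1} \cup R_i \cup S'_i$, subadditivity of $w$ gives $a_i \le a_{i-1} + w(R_i) + w(S'_i)$, and substituting $w(S'_i) \le (a_{i-1} + w(R_i))/i$ from \cref{lemma:bound-special} yields, for every $i \in [\![1,\lfloor\log m\rfloor]\!]$,
\[
  a_i \;\le\; \Bigl(1+\tfrac1i\Bigr)\bigl(a_{i-1} + w(R_i)\bigr),
\]
with base case $a_0 = w(\mathcal{F}'_0) = w(\emptyset) = 0$ (the $i=1$ instance reads $a_1 \le 2\,w(R_1)$, consistent with $w(S'_1) \le w(R_1)$).

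Next I would unroll the recurrence. Writing $c_j = (j+1)/j$, a straightforward induction gives $a_i \le \sum_{k=1}^{i}\bigl(\prod_{j=k}^{i} c_j\bigr)\,w(R_k)$, and the product telescopes to $\prod_{j=k}^{i}\frac{j+1}{j} = \frac{i+1}{k}$. Setting $M=\lfloor\log m\rfloor$ and recalling that $\mathcal{F}'_M = \mathcal{F}$ (noted just before \cref{lemma:bound-special}), this gives $w(\mathcal{F}) \le \sum_{k=1}^M \frac{M+1}{k}\,w(R_k)$. Plugging in $w(R_k)\le 2(k+1)\,OPT$ from \cref{lemma:bound-regular},
\[
  w(\mathcal{F}) \;\le\; 2(M+1)\,OPT\sum_{k=1}^{M}\frac{k+1}{k} \;=\; 2(M+1)\,\bigl(M + H_M\bigr)\,OPT,
\]
where $H_M = \sum_{k=1}^M 1/k = \ln M + \O(1)$. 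Finally, using $M \le \log m$ and $H_M = \O(\log\log m)$, the right-hand side is $2(\log m+1)(\log m + \O(\log\log m))\,OPT = (2+o(1))\log^2 m\cdot OPT$, which is the claim.

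I do not expect a genuine obstacle: the statement is essentially solving a linear recurrence once \cref{lemma:bound-regular,lemma:bound-special} are in hand, and all the real content is in those two lemmas. The only points needing care are (i) that $\mathcal{F}'_i = \mathcal{F}'_{i-1}\cup R_i\cup S'_i$ so that $w$ is (sub)additive across the three pieces; (ii) the correctness of the telescoping product and of the induction, including the degenerate $i=1$ case; and (iii) absorbing the floor in $M=\lfloor\log m\rfloor$, the factor $(M+1)/M$, and the lower-order term $H_M/M$ into the $o(1)$ --- this last bookkeeping is where the asymptotic statement is finalized.
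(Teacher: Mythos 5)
Your proposal is correct and takes essentially the same route as the paper: you write the same recurrence $w(\mathcal{F}'_i) \le (1+\tfrac1i)\bigl(w(\mathcal{F}'_{i-1}) + w(R_i)\bigr)$ obtained from \cref{lemma:bound-regular,lemma:bound-special}, and both derivations land on exactly the bound $w(\mathcal{F}'_i) \le 2(i+1)(i + H_i)\cdot OPT$ before specializing to $i = \lfloor\log m\rfloor$. The only difference is presentational — you unroll the recurrence and telescope $\prod_{j=k}^i \tfrac{j+1}{j} = \tfrac{i+1}{k}$ to obtain the closed form directly, whereas the paper states the closed form and verifies it by induction; the underlying computation is identical.
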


\begin{proof}
    We consider $H_k = \sum_{i=1}^k 1/k$ the harmonic series. We prove by induction for $i \in [\![0, \lfloor \log m\rfloor ]\!]$ that:
    \begin{align*}
        w(\mathcal{F}'_i) \leq 2 \cdot (i+1) \cdot (i + H_i)\cdot OPT
    \end{align*}
    
    \textit{Initialization}: For $i = 0$, $\mathcal{F}'_0$ contains no edge, so $w(\mathcal{F}'_0) = 0$. We remark that the right term is also $0$, so the property holds.

    \textit{Induction}: Let $i \in [\![1, \lfloor \log m\rfloor ]\!]$, we assume that the property is true for $w(\mathcal{F}'_{i-1})$. We remark that $\mathcal{F}'_{i} = R_i \cup S'_i \cup {F}'_{i-1}$. Using \cref{lemma:bound-special,lemma:bound-regular}, we get that $w(R_i) \leq 2i \cdot OPT$ and $w(S'_i) \leq \left(w(R_i) + w(\mathcal{F}'_{i-1})\right)/ i$. Therefore, using the induction hypothesis:
    \begin{align*}
        w(\mathcal{F}'_{i}) &\leq w(R_i) + w(S'_i) + w(\mathcal{F}'_{i-1}) \\
        & \leq (1 + 1/i) \cdot w(R_i) + (1 + 1/i) \cdot w(\mathcal{F}'_{i-1})\\
        & \leq 2 \cdot (1+1/i) \cdot (i+1) \cdot OPT +   2 \cdot (1+1/i) \cdot i \cdot (i - 1 + H_{i-1}) \cdot OPT \\
        & = 2 \cdot (i+1) \cdot (1+1/i) \cdot OPT +  2 \cdot (i+1) \cdot (i - 1 + H_{i-1}) \cdot OPT \\
        & = 2 \cdot (i+1) \cdot (1 + 1/i + i - 1 + H_{i-1}) \cdot OPT \\
        & = 2 \cdot (i+1) \cdot (i + H_i)\cdot OPT
    \end{align*}

    \textit{Conclusion}: We have:
    \begin{align*}
        w(\mathcal{F}) = w(\mathcal{F}'_{\lfloor \log m\rfloor}) \leq 2 \cdot (\lfloor \log m\rfloor+1) \cdot (\lfloor \log m\rfloor + H_{\lfloor \log m\rfloor})\cdot OPT
    \end{align*}
    
    Using the property of the harmonic series $H_k = \Theta(\log k) = o(k)$, we get $w(\mathcal{F}) \leq OPT  \cdot (2 + o(1)) \log^2m$.

\end{proof}

\subsection{Bounding the algorithm's cost}

In the previous part, we were able to bound the weight of the forest created by our algorithm. Using this bound, we will show how to establish an upper bound on the time cost before requests join the greedy procedure, then on the time cost and connection cost inside the greedy part.

Let $u$ be a request. We consider $t_a(u) = t(u)$ its time of arrival, $t_j(u)$ the time the request joined a greedy instance and $t_m(u)$ the time it got matched with another request. We remark that the total time cost for this request is $|t_m(u) - t_a(u)|$. Let $\delta_c(u) = |t_j(u) - t_a(u)|$ the time spent waiting before joining a greedy instance and $\delta_g(u) = |t_m(u) - t_j(u)|$ the time spent waiting after joining a greedy instance. Finally, we denote $\Delta_c = \sum_{u \in V_{req}} \delta_c(u)$ and $\Delta_g = \sum_{u \in V_{req}} \delta_g(u)$. We remark that the total time cost of the algorithm is $\Delta_c + \Delta_g$. We will first focus on $\Delta_c$, the total time spent by requests before joining a greedy instance. 

During the execution of the algorithm, for a given time $t$, we denote by $\comp_{odd}(t)$ the set of odd components present at time $t$. We give the following result:

\begin{lemma}\label{lemma:bound-delta-odd}
    \begin{align*}
        \Delta_c = \int_{t \geq 0} |\comp_{odd}(t)|dt
    \end{align*}
\end{lemma}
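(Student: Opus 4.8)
The claim equates a sum over requests of waiting times (before joining a greedy instance) with an integral over time of the number of odd components. Both quantities are really just "area under a curve" in different coordinates, so the natural strategy is a double-counting / Fubini argument: write each side as an integral over the time-request plane of an indicator, and check that the two indicators coincide. I would take $\delta_c(u) = |t_j(u) - t_a(u)| = t_j(u) - t_a(u)$ (using that a request only joins a greedy instance after it arrives), so that $\delta_c(u) = \int_{t \ge 0} \mathbf{1}[t_a(u) \le t < t_j(u)]\, dt$, and hence $\Delta_c = \sum_{u} \int_{t\ge 0} \mathbf{1}[t_a(u)\le t < t_j(u)]\,dt = \int_{t\ge 0} \#\{u : t_a(u) \le t < t_j(u)\}\, dt$ by swapping sum and integral.

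\textbf{Key step: identifying the count.} It then remains to show that for every time $t$, the number of requests $u$ with $t_a(u) \le t < t_j(u)$ equals $|\comp_{odd}(t)|$. This is where the structural design of the algorithm enters. The relevant observation is that a request has arrived but not yet joined a greedy instance at time $t$ exactly when, in the current component decomposition, it is the unique unmatched request of its component that is "left over" — and by construction (see \textsc{RUN\_GREEDY} and the surrounding discussion), requests are added to greedy instances two at a time, so a component contributes such a leftover request if and only if it is odd, and then contributes exactly one. More precisely I would argue: (i) in an even component, every request is eventually attached to a greedy instance, and at any moment the number of requests not yet attached is even, and in fact every request present in the component has already been attached once the component has stabilized — but to be careful, one should note that requests are attached as soon as a component accumulates two unattached ones, so at any time $t$ the set of requests present-but-unattached in a component $C$ has size $0$ or $1$ according to the parity of $|V(C)|$ restricted to requests that have arrived by time $t$. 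Here one must be a little careful that $\comp_{odd}(t)$ refers to parity of the whole vertex set of the component, but all vertices of a component present at time $t$ have arrival time $\le t_{\max}(C) \le t$, so "arrived by time $t$" is automatic for components present at time $t$; hence parity of arrived requests in $C$ equals parity of $|V(C)|$. (ii) Conversely each odd component contributes exactly one such request, and distinct components contribute distinct requests since the $V(C)$ partition the arrived requests. Combining (i) and (ii), $\#\{u : t_a(u)\le t < t_j(u)\} = |\comp_{odd}(t)|$, and plugging this into the Fubini computation gives the lemma.

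\textbf{Main obstacle.} The routine part is the Fubini swap; the subtle part is pinning down exactly which requests are "arrived but not yet in a greedy instance" at a given time and matching that count to $|\comp_{odd}(t)|$. In particular one must handle the instants at which merges or greedy attachments happen (measure-zero sets of times, so they do not affect the integral, but one should say so), and must check the claim that within an even component no request is ever stranded outside a greedy instance for a positive-length interval in a way that would break the parity bookkeeping — this follows because whenever a component holds two unattached requests the algorithm immediately attaches them, so the "unattached" count in any component stays in $\{0,1\}$ at all times except possibly at isolated merge instants. Once that bookkeeping is nailed down, the identity is immediate.
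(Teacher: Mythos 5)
Your proof is correct and takes essentially the same route as the paper: the key observation in both is that, by the two-at-a-time attachment rule in \textsc{RUN\_GREEDY}, at any time an even component has no request outside a greedy instance and an odd component has exactly one, and the identity then follows by a sum-integral swap. You simply make the Fubini step and the measure-zero caveat at merge instants explicit, which the paper leaves implicit.
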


\begin{proof}
    As soon as a component has two or more requests not in a greedy instance, it makes them join its own greedy instance two at a time. From this, we conclude that at any given time, even components have no requests not inside a greedy instance while odd components have exactly one. The lemma follows from this observation.
\end{proof}

We can now explain the purpose of our wait tree pruning procedure. We recall that when considering an odd component $C_1$ in the merging procedure, if its closest compatible component $C_2$ is an odd component of strictly lower rank and $t \geq t_{max}(C_1) + 2D(C_1,C_2)$, then the algorithm does nothing and we add a waiting edge from $C_1$ to $C_2$. We say that $C_1$ is a \textbf{waiting} component, let $\comp_w(t)$ be the set of waiting components at a given point in time. Odd components which are not waiting are called \textbf{active} components. We denote $\comp_a(t)$ the set of active components. We note that at any time $t$, $\comp_{odd}(t) = \comp_w(t) \cup \comp_a(t)$.

\begin{lemma}\label{lemma:compare-active-waiting}
    At any time $t \geq 0$, $|\comp_w(t)| \leq \log m \cdot |\comp_a(t)|$
\end{lemma}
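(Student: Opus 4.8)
The plan is to bound the size of each tree of the waiting forest $\mathcal{W}$ and then count. First I would record the structure of $\mathcal{W}$ at time $t$: every node is an odd component, each node has out-degree at most $1$, and (as already observed) a waiting edge $C_1\to C_2$ forces $\rank(C_2)<\rank(C_1)$, so there are no directed cycles; hence each connected component (each \emph{waiting tree}) is a rooted tree whose root is its unique node of out-degree $0$. That root, having no outgoing waiting edge, is not a waiting component, so it is active; every other node has out-degree exactly $1$ and is therefore waiting. Writing $\mathcal{T}_1,\dots,\mathcal{T}_k$ for the waiting trees and $n_j$ for $|\mathcal{T}_j|$, this gives $|\comp_w(t)|=\sum_{j=1}^{k}(n_j-1)$, while $|\comp_a(t)|\ge k$ because the $k$ roots are pairwise distinct active components.

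The heart of the argument is the bound $n_j\le\lfloor\log m\rfloor+1$. I would obtain it from the fact that, at time $t$, the \textbf{repeat}-loop of \textsc{ONLINE\_MATCHING} has reached a fixpoint, so a call to \textsc{PRUNE\_WAITING\_TREES} would trigger no merge. By inspection of that procedure, a merge is performed inside a tree exactly when the tree contains two components of equal rank; hence at the fixpoint the ranks of the components of any waiting tree are pairwise distinct. Since every rank lies in $\{0,1,\dots,\lfloor\log m\rfloor\}$ by \cref{coro:max-comp-rank}, each waiting tree has at most $\lfloor\log m\rfloor+1$ nodes, i.e.\ $n_j-1\le\lfloor\log m\rfloor\le\log m$. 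Putting the two parts together, $|\comp_w(t)|=\sum_{j=1}^{k}(n_j-1)\le k\log m\le |\comp_a(t)|\cdot\log m$.

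I expect the only real subtlety to be justifying that the invariant may be checked at the fixpoint of the \textbf{repeat}-loop rather than mid-iteration: this rests on the earlier lemma that the loop terminates (there are at most $m-1$ merges in total) together with the modeling convention that the algorithm acts instantaneously at each event time, so that between events the component decomposition --- and hence $\mathcal{W}$ --- is exactly the post-loop state. The remaining steps (that all nodes of a waiting tree are odd, that the root is active and not merely non-waiting, and the acyclicity of $\mathcal{W}$) are immediate from the stated description of waiting edges and require no new ideas.
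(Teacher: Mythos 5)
Your proof is correct and follows essentially the same route as the paper: view the waiting edges as a forest rooted at active components, bound each tree's size by $\lfloor\log m\rfloor+1$ via distinctness of ranks (forced by the pruning step having reached a fixpoint) together with \cref{coro:max-comp-rank}, and then count. The only addition is the explicit remark justifying that the invariant is checked at the fixpoint of the \textbf{repeat}-loop, which the paper leaves implicit.
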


\begin{proof}
    At any time $t$, in the algorithm we consider $\mathcal{W}(t)$ the oriented graph whose vertices are the odd components $\comp_{odd}(t)$ and edges are the waiting edges. We call $\mathcal{W}(t)$ the \textit{waiting forest}. Because a waiting edge goes from one component to another with strictly lower rank, it cannot have cycles. Moreover, each odd component waits on at most a single component, so has out-degree at most $1$ in $\mathcal{W}(t)$. From these observations, we conclude that $\mathcal{W}(t)$ is indeed a forest. Moreover, each oriented tree $T$ in $\mathcal{W}(t)$ has its edges oriented towards the root, which has out-degree $0$. 

    We also remark that a component is active if and only if its out-degree in $\mathcal{W}(t)$ is $0$. Therefore, all trees in $\mathcal{W}(t)$ consist of a single active component being its root, and some waiting components.

    We claim that each tree in $\mathcal{W}(t)$ has size at most $\lfloor \log m \rfloor + 1$. The reason is that, using \cref{coro:max-comp-rank}, the maximum component rank is $\lfloor \log m \rfloor$. So if a tree has a greater size, using the pigeonhole principle, we could find two components in this tree with the same rank, but the tree pruning procedure would immediately merge them together. Thus, we have:
    \begin{itemize}
        \item A graph where each active and waiting components appear as vertices.
        \item Each connected component of this graph contains exactly one active component.
        \item Each connected component of this graph has size at most $\lfloor \log m \rfloor + 1$
    \end{itemize}
    Using these properties together, we get that $|\comp_w(t)| \leq \lfloor \log m \rfloor \cdot |\comp_a(t)| \leq \log m \cdot |\comp_a(t)|$
\end{proof}

Using the lemma above, we can now focus on active components only. Indeed, any result on active components would apply to all odd components with an additional logarithmic factor.

\begin{lemma} \label{lemma:bound-active-forest}
    \begin{align*}
        \int_{t \geq 0} |\comp_{a}(t)|dt \leq 8 \cdot \log m \cdot w(\mathcal{F})
    \end{align*}
\end{lemma}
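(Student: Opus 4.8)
The goal is to bound $\int_{t\ge 0}|\comp_a(t)|\,dt$, i.e.\ the total time spent by active odd components, by $O(\log m)\cdot w(\mathcal F)$. The natural strategy is to \emph{charge the waiting time of each active component against the weight of the edges that its eventual merge creates}. Concretely, recall that an active odd component $C_1$ at time $t$ has identified a closest compatible component $C_2$ with $l = D(C_1,C_2)$, and the algorithm has committed to merging at time $t_{max}(C_1)+2l$ (unless something changes first). So the "remaining" waiting time of an active component is at most $2l$, and $l$ is (up to the $(r+1)$ factors that appear in regular merges, or directly for special merges) comparable to the weight of the edges added when $C_1$ merges. Since every merge that involves $C_1$ as a source contributes a fresh path of edges to $\mathcal F$, and — by the rank-monotonicity facts (\cref{lemma:rank-inside-comp,coro:compo-edge-rank}) and \cref{lemma:component} — these edge sets are essentially disjoint across the different merges, summing these charges over all merges yields a bound proportional to $w(\mathcal F)$.

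First I would set up the accounting carefully. At any time $t$, partition the active components according to whether they are currently "below threshold" (i.e.\ $t < t_{max}(C_1)+2l$, still waiting for the safety delay) or "at threshold" (the merge is about to fire). For a component $C_1$ that is active on a time interval $I$, the length of $I$ is at most $2l_{C_1}$ where $l_{C_1}$ is the relevant compressed distance at the moment it merges (one has to be a little careful because $C_2$ and $l$ can change over time as the surrounding decomposition evolves — but each change only happens at a merge event, so one can break the timeline of $C_1$ into sub-intervals, each associated with a concrete merge that eventually consumes it). Then $\int_{t\ge 0}|\comp_a(t)|\,dt = \sum_{C_1}|I_{C_1}| \le \sum_{C_1} 2 l_{C_1}$, where the sum ranges over (component, merge) incidences with $C_1$ active.

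Next I would relate each $l_{C_1}$ to the weight of edges created. If $C_1$ merges via a regular merge of rank $r$, then $l_{C_1} = w(P)$ where $P$ is the path of regular edges added, so $2l_{C_1} = 2 w(P_{C_1})$ directly. If $C_1$ merges via a special merge, \cref{lemma:special-invariants} gives that $l_{C_1}$ is comparable to $w(P)$ of the special path, again within a constant. Summing: $\sum_{C_1} 2 l_{C_1} \le 2\sum_{\text{merges}} c \cdot w(P_{\text{merge}})$. The key disjointness point is that the edge sets $P_{\text{merge}}$ for distinct merges are pairwise disjoint (each merge adds genuinely new edges to the forest, never re-adding an edge), so $\sum_{\text{merges}} w(P_{\text{merge}}) \le w(\mathcal F)$. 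Tracking the constants — the factor $2$ from the safety delay, the $(r+1)$ slack absorbed into the rank bounds, and the at-most-one active component consumed per relevant path — should land at the claimed constant $8$. The main obstacle I anticipate is the bookkeeping around time-varying $C_2$ and $l$: one must argue that each maximal active sub-interval of a given component can be unambiguously assigned to a single future merge event whose edges pay for it, and that no merge's path is made to pay for more than one active component (or, if it does, only a bounded number, which is where an extra small constant or the use of "at most one endpoint is the active source" comes in). Once that assignment is pinned down, the rest is a direct summation using $w(\mathcal F) = \sum_r (w(R_r)+w(S_r))$ and the disjointness of the added paths.
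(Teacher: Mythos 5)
Your proposal has a genuine gap at its core: the claim that the active time of a component $C_1$ in a sub-interval ending at a merge is bounded by $2l_{C_1}$, where $l_{C_1}$ is the weight of the path $P$ added by that merge. This is false in general. Consider a component $C_1$ created at time $t_1$ with all other requests far away (so $l$ is huge), and suppose a new request arrives very close to $C_1$ at a much later time $t_2$. The component stays active (the threshold $t_{\max}(C_1)+2l$ is never reached because $l$ is large) for the entire interval $[t_1,t_2]$, and then at roughly $t_2$ it merges with the newcomer along a path $P$ whose weight $l'$ is tiny. You cannot charge the long active interval $t_2 - t_1$ against $w(P) \approx l'$. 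The paper avoids this precisely by charging against the spanning tree $w(C')$ of the component at the end of its rank-$r$ phase, not against the newly added edges: the representative $u$ and the late-arriving request $v$ are both in $V(C')$, so $t_2 - t_1 \leq t(v) - t(u) \leq d(u,v) \leq w(C')$. That is why the paper sums $\sum_r \sum_{C\in\dec_r} w(C) = \sum_r w(\mathcal{F}_r)$ and pays the $\log m$ factor from re-counting old edges once per rank — not from the $(r+1)$ slack, as you suggest.

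There is a second missing idea. The paper explicitly notes that "bounding the time between $t_1$ and $t_2$ for a single component cannot be done efficiently" once the component has stabilized (no more odd components will merge into it); instead it bounds the \emph{aggregate} post-stabilization active time of all odd sub-components of a parent $C \in \dec_r$ together, using an Euler tour of $T(C)$ restricted to the chosen requests $v_1,\dots,v_k$ to get $\sum_i l_i \leq 2w(C)$. A per-component, per-merge charging scheme like yours has no analogous mechanism: after stabilization, a component can sit waiting with an odd-but-lower-rank neighbor as its closest compatible component, and the distance to that neighbor bears no useful relation to the edges eventually added, so your "consume the merge path once" disjointness argument does not close. Finally, note that in the special-merge branch of the combining step the edges added have weight strictly less than $l/(\rank(C_1)+2)$ while the waiting threshold is governed by $l$; your $2l \le c\, w(P)$ step fails there by an unbounded (rank-dependent) factor, yet the component remains active afterward, so this case cannot simply be absorbed into "constants." Your outline needs the stabilization split and the Euler-tour aggregation to go through.
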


\begin{proof}
We now want to bound the time an odd component stays active. Let $r \in [\![0, \lfloor \log m\rfloor ]\!]$, we consider a component $C(t)$ of rank $r$ evolving from the time $t_{re}(C, r)$ where it reached rank $r$ to the time $t_{le}(C, r)$ where either: it merges into a bigger component, its rank increased, or the time the algorithm ends if this component is even and never gets used again. Using \cref{lemma:component}, we can find $C' \in \dec_r$ such that $C(t_{le}(C,r)) = C'$.

We will divide the time $C(t)$ stays active in two and give an upper bound for each of these intervals. We say that $C(t)$ stabilizes at rank $r$ if no more odd component merges into $C(t)$ while its rank is $r$. We denote by $t_{st}(C, r)$ the stabilization time of $C(t)$ at rank $r$. We have $t_{re}(C,r) \leq t_{st}(C,r) \leq t_{le}(C,r)$. We will give an upper bound on the amount of time a component can stay active before its stabilization time and after it at a given rank.

Assume $r \geq 1$, $C'$ is a component in $\dec_r$ where components are defined as connected components in the forest $\mathcal{F}_r$. Because $\mathcal{F}_{r-1} \subseteq \mathcal{F}_r$, we can find a set of components $B \subseteq \dec_{r-1}$ such that $V(C') = \cup_{C" \in B} V(C")$, i.e $C'$ is made from components of $B$ merging together. Using \cref{lemma:component}, we can show that every component $C"$ in $B$ existed at some time $t'$, different for each component, right before it merged into $C(t')$. Let $C_1, ..., C_k$ be the odd components of $B$. Using \cref{lemma:request-in-odd}, we can find some requests $v_1 \in C_1, ..., v_k \in C_k$ such that each of those requests was in an odd component until it got merged into $C$. We will show that $C$ can remain active at most $3w(C')$ units of time between $t_{re}(C,r)$ and $t_{st}(C,r)$.

Let $t_1 = t_{re}(C,r)$ and $t_2 = t_{st}(C,r)$. We consider $u = repr(C(t_1))$ which by design stays the same at any time and is already part of $C(t_1)$. We consider:
\begin{align*}
    l = \max_{i \in [\![1, k ]\!]} d(u, v_i)
\end{align*}
We claim that after $t_1 + w(C') + 2l$ and before $t_2$, $C(t)$ cannot be active. To do so, consider some time $t$ such that $t > t_1 + w(C') + 2l$ and $t < t_2$, let us show that $C(t)$ is even or waiting. If $C(t)$ is even, we are done, so let us assume that $C(t)$ is odd. Let $l'$ be the distance from $C(t)$ to its closest compatible component, we claim that $l' \leq l$. The reason is that because $t < t_2$, the component did not stabilize yet, so there must exist some $C_i$ such that $C_i$ did not merge into $C(t)$ yet. Therefore, by definition, at time $t$, $v_i$ is in an odd component $\tilde{C}$ distinct from $C(t)$. Because this component is odd, it is compatible with $C(t)$ and using \cref{lemma:comp-distance}, $D_t(C(t), \tilde{C}) \leq d(u, v_i) \leq l$, so by minimality $l' \leq l$. Moreover, there must exist some $u' \in V(C(t))$ such that $t_{max}(C(t)) = t(u')$. Because $u$ and $u'$ are both in $C'$ and $d$ satisfies the triangle inequality, we have:
\begin{align*}
    t_{max}(C(t)) - t(u) &= t(u') - t(u) \\
    & \leq d(u', u) \\
    & \leq w(C')
\end{align*}
Because $u = repr(C(t_1))$, we have $t(u) \leq t_1$, so $w(C') \geq t_{max}(C(t)) - t_1$. Combining these inequalities, we get:
\begin{align*}
    t &> t_1 + w(C') + 2l \\
    &\geq t_1 + t_{max}(C(t)) - t_1 + 2l' \\
    &\geq t_{max}(C(t)) + 2l'
\end{align*}
Because $t$ is after $t_{max}(C(t)) + 2l'$, where $l'$ is the distance from $C(t)$ to its closest compatible component, and no merge happened yet, it means that $C(t)$ is waiting. Moreover, we note that $l$ is the distance between $u$ and some $v_i$ and both $u$ and $v_i$ are in $V(C')$. Because $d$ satisfies the triangle inequality and $T(C')$ is a spanning tree over $V(C')$, we get that $l \leq w(C')$. Therefore, between $t_1$ and $t_2$, $C(t)$ can only be active until time $\min(t_2, t_1 + 3w(C'))$ and thus can remain active at most $3w(C')$ units of time. We remark that components with $r = 0$ are singletons and therefore, $t_2 = t_1$ hence this property holds immediately in this case.

We now consider the time $C$ spends active between $t_{st}(C,r)$ and $t_{le}(C,r)$. Let $t_1 = t_{st}(C,r)$ and $t_2 = t_{le}(C,r)$. By our choice of $t_1$, being the stabilization time, only even component will merge with $C$ in this time frame, so $C$ will keep the same parity, and in particular the same one as $C(t_2) = C'$. If $C'$ is even, $C$ will remain even, so not active between $t_1$ and $t_2$. We now assume that $C'$ is odd.

Because all final components are even, it follows that at time $t_2$, $C(t_2)$ will merge into a bigger component or its rank will increase, so we can focus on this case only. However, it turns out that bounding the time between $t_1$ and $t_2$ for a single component cannot be done efficiently. Instead, we will do so by considering a component and bounding together the time spent active after stabilization of all lower rank components which merge into it.

We thus consider a new $r > 0$ and $C \in \dec_r$. As explained above, we will take a new approach and instead bound the active time after stabilization of all odd components which merge into $C$. Because $\mathcal{F}_{i-1} \subseteq \mathcal{F}_{i}$, we can consider the sub-components of $C$ in $\dec_{r-1}$. Let $C'_1, ..., C'_k$ be the set of all odd such sub-components. Using \cref{lemma:request-in-odd}, for each $C'_i$, we can find a request $v_i$ such that $v_i$ was always part of an odd component until it merged into $C$. We observe at any time before merging into $C$, the component containing $v_i$ was odd and thus compatible with any other component.

We note that $v_1, .., v_k$ are requests in the spanning tree $T(C)$, we consider an Euler tour $E_T$ from $T(C)$ where we only keep the vertices in $\{v_1, ..., v_k\}$. Because we are in a metric space, $w(E_T) \leq 2w(T(C)) = 2w(C)$. We orient the tour $E_T$ in an arbitrary way and for each $v_i$, we consider $l_i$ to be the distance from $v_i$ to its right neighbor along $E_T$. By construction, $\sum_{i=1}^k l_i = w(E_T) \leq 2w(C)$. 

We now consider an odd component $C'_i$ among these sub-components. Let $C_i(t)$ be the component representing it during the algorithm execution. Let $t_1 = t_{st}(C_i, \rank(C'_i))$ and $t_2 = t_{le}(C_i, \rank(C'_i))$. We want to give an upper bound on the time $C_i(t)$ stays active after stabilization and until it merges into $C$. Because no odd component joins $C_i$ after $t_1$ and $v_i$ is always part of an odd component until it joins $C_i$, we conclude that $v_i$ joined $C'_i$ before $t_1$, and $t(v_i) \leq t_1$.  With the same argument as before, we get that for all $t \geq t_1, \ t_{max}(C_i(t)) \leq t(v_i) + w(C'_i) \leq t_1 + w(C'_i)$. We claim that between $t_1 + w(C'_i) + 2l_i$ and $t_2$, component $C_i(t)$ will always be waiting. We consider some time $t$ such that $t > t_1 + w(C'_i) + 2l_i$ and $t < t_2$. Because $t$ is after the stabilization time $t_1$, the parity of $C'_i$ does not change so $C_i(t)$ is odd. We note that by definition, $l_i$ is the distance between $v_i \in C_i(t)$ and some $v_j$ in another component which is compatible with $C_i(t)$. Therefore, using \cref{lemma:comp-distance}, the compressed distance between $C_i(t)$ and its closest compatible component $l'$ is at most $l_i$. We remark that $t > t_1 + w(C'_i) + 2l_i \geq t_{max}(C_i(t)) + 2l'$, therefore if the component was not waiting, it would have already merged into its closest component, so it is waiting. Thus after $t_1$, component $C_i(t)$ is active at most $w(C'_i) + 2 l_i$ units of time.

We now compute the total active time after stabilization of odd components in $B$. Using the previous part, it is at most:
\begin{align*}
    \sum_{i = 1}^k (w(C'_i) + 2l_i)
\end{align*}
We remark that $\sum_{i=1}^k l_i \leq 2w(C)$ and all $T(C'_i)$ are disjoint and a subset of $T(C)$, so $\sum_{i=1}^k w(C'_i) \leq w(C)$. Therefore, the total active time for components which merge into $C$ after stabilization is at most $5w(C)$.

We can finally compute the total active time for all components, which is the sum of the active time before stabilization and after stabilization for each component at each rank. We also note that $w(C) = 0$ for a component of rank $0$, as they are singletons:
\begin{align*}
    \int_{t \geq 0} |\comp_{a}(t)|dt &\leq \sum_{r = 0}^{\lfloor \log m \rfloor} \sum_{C \in \dec_r} 3w(C) + \sum_{r = 1}^{\lfloor \log m \rfloor} \sum_{C \in \dec_r} 5w(C) \\
    & = 8 \sum_{r = 1}^{\lfloor \log m \rfloor} \sum_{C \in \dec_r} w(C) \\
    & = 8 \sum_{r = 1}^{\lfloor \log m \rfloor} w(\mathcal{F}_r) \\
    & \leq 8  \sum_{r = 1}^{\lfloor \log m \rfloor} w(\mathcal{F}) \\
    & \leq 8 \cdot \log m \cdot w(\mathcal{F})
\end{align*}

\end{proof} 

\begin{theorem}\label{theorem:bound-waiting-outside}
    \begin{align*}
        \Delta_c \leq (16 + o(1)) \log^4 m \cdot OPT
    \end{align*}
\end{theorem}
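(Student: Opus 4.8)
The plan is to simply chain the four preceding results, since all the real work has already been done. First I would invoke \cref{lemma:bound-delta-odd} to rewrite the quantity of interest as $\Delta_c = \int_{t \geq 0} |\comp_{odd}(t)|\,dt$. Then, using the decomposition $\comp_{odd}(t) = \comp_w(t) \cup \comp_a(t)$ together with \cref{lemma:compare-active-waiting}, I would bound $|\comp_{odd}(t)| = |\comp_w(t)| + |\comp_a(t)| \leq (\log m + 1)\,|\comp_a(t)|$ pointwise in $t$, and integrate to get $\Delta_c \leq (\log m + 1)\int_{t\geq 0}|\comp_a(t)|\,dt$.

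Next I would apply \cref{lemma:bound-active-forest} to replace the integral of active components by the forest weight, giving $\Delta_c \leq 8(\log m + 1)\log m \cdot w(\mathcal{F})$, and finally substitute the bound $w(\mathcal{F}) \leq (2 + o(1))\log^2 m \cdot OPT$ from \cref{theorem:bound-forest-weight}. Collecting the constants, $8(\log m+1)\log m \cdot (2+o(1))\log^2 m = (16 + o(1))\log^4 m$, which is exactly the claimed bound; the $(\log m + 1)$ factor contributes only a $1 + o(1)$ correction that is absorbed into the $o(1)$ term. I would present this as a short displayed chain of inequalities:
\begin{align*}
    \Delta_c = \int_{t\geq 0}|\comp_{odd}(t)|\,dt \leq (\log m + 1)\int_{t\geq 0}|\comp_a(t)|\,dt \leq 8(\log m+1)\log m \cdot w(\mathcal{F}) \leq (16 + o(1))\log^4 m \cdot OPT.
\end{align*}

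There is essentially no obstacle here: the theorem is a bookkeeping corollary of \cref{lemma:bound-delta-odd,lemma:compare-active-waiting,lemma:bound-active-forest,theorem:bound-forest-weight}. The only thing to be mildly careful about is making sure the integral manipulations are legitimate (the integrands are piecewise constant in $t$ with finitely many breakpoints, so the pointwise bound integrates directly) and that the various $o(1)$ terms combine correctly, but neither point requires more than a remark.
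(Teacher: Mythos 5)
Your proof is correct and follows exactly the same chain of inequalities as the paper: rewrite $\Delta_c$ via \cref{lemma:bound-delta-odd}, bound the odd count by the active count using \cref{lemma:compare-active-waiting}, apply \cref{lemma:bound-active-forest}, and then substitute the forest-weight bound from \cref{theorem:bound-forest-weight}. The constant bookkeeping also matches, so there is nothing to add.
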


\begin{proof}
    We use the fact that at any time, $\comp_{odd}(t) = \comp_w(t) \cup \comp_a(t)$. Therefore, $|\comp_{odd}(t)| \leq  |\comp_w(t)| + |\comp_a(t)|$. We have:
    \begin{align*}
        \Delta_c &= \int_{t \geq 0} |\comp_{odd}(t)|dt && \text{from \cref{lemma:bound-delta-odd}} \\
        & \leq \int_{t \geq 0} |\comp_{odd}(t)|dt \\
        & \leq \int_{t \geq 0} (|\comp_w(t)| + |\comp_a(t)|)dt \\
        & \leq \int_{t \geq 0} (\log m |\comp_a(t)| + |\comp_a(t)|)dt && \text{from \cref{lemma:compare-active-waiting}}\\
        & \leq (1 + o(1)) \log m \int_{t \geq 0} |\comp_a(t)|dt \\
        & \leq (1 + o(1)) \log m \cdot 8 \cdot \log m \cdot w(\mathcal{F}) && \text{from \cref{lemma:bound-active-forest}} \\
        &\leq (8 + o(1))  \log^2 m \cdot w(\mathcal{F}) \\
        &\leq (16 + o(1)) \log^4 m \cdot OPT && \text{from \cref{theorem:bound-forest-weight}}
    \end{align*}
\end{proof}

We can finally tackle the last part of the algorithm which is the greedy approximation. To do so, we need to approximate the size of a tour over all requests in each greedy invocation to be able to use \cref{theorem:greedy-upper-bound}. We denote by $M_g$ the connection cost of the algorithm. We observe that the total cost of the algorithm is therefore $\Delta_c + \Delta_g + M_g$. We also note that all matchings are done within greedy instances.

\begin{theorem}\label{theorem:bound-waiting-inside}
    \begin{align*}
        \Delta_g + M_g \leq \left(80 + o(1)\right) \log^5 m \cdot OPT
    \end{align*}
\end{theorem}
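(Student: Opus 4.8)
The plan is to write $\Delta_g + M_g$ as a sum, over all greedy instances, of the total cost (connection cost plus time spent after joining) incurred inside that instance, and bound each summand with \cref{theorem:greedy-upper-bound}. Recall that a greedy instance is identified by a representative $\rho$; let $V^\rho$ be the set of requests that ever join it. Every request joins exactly one instance and is matched inside it, so $\{V^\rho\}$ partitions $V_{req}$, and $\sum_\rho C_{GREEDY}^\rho = \Delta_g + M_g$ where $C_{GREEDY}^\rho$ denotes the instance's total cost. Inside its instance a request $u$ is treated as the time-augmented point $(x(u),t_j(u))$, so \cref{theorem:greedy-upper-bound} (with $|V^\rho|\le m$) gives $C_{GREEDY}^\rho \le \tfrac52(\lceil\log m\rceil+1)\cdot OPT_{TSP}^\rho$, where $OPT_{TSP}^\rho$ is the weight of an optimal TSP tour in $d$ over $\{(x(u),t_j(u)):u\in V^\rho\}$.

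The next step is to bound $OPT_{TSP}^\rho \le 2w(C^\rho) + 2\sum_{u\in V^\rho}\delta_c(u)$, where $C^\rho$ is the largest component that ever had $\rho$ as its representative; this is well defined because the merge procedures never alter the representative of $C_{to}$, and $V^\rho\subseteq V(C^\rho)$. To obtain it I would take an Euler tour of $T(C^\rho)$, which has weight exactly $2w(C^\rho)$ in $d$ with the original arrival times, shortcut it (without increasing the weight, by the triangle inequality) to a Hamiltonian cycle on $V^\rho$, and then replace each visited request $u$ by $(x(u),t_j(u))$; by the triangle inequality each cycle edge grows by at most $\delta_c$ of its two endpoints, and since each request lies on exactly two cycle edges the total increase is at most $2\sum_{u\in V^\rho}\delta_c(u)$. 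Summing over $\rho$ and using $\sum_\rho\sum_{u\in V^\rho}\delta_c(u)=\Delta_c$ gives
\begin{align*}
    \Delta_g + M_g \;\le\; 5(\lceil\log m\rceil+1)\left(\sum_\rho w(C^\rho) + \Delta_c\right).
\end{align*}

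It remains to control $\sum_\rho w(C^\rho)$, and I expect the claim $\sum_\rho w(C^\rho)\le(\lfloor\log m\rfloor+1)\,w(\mathcal{F})$ to be the main obstacle. I would prove it by grouping representatives by $r=\rank(C^\rho)$: once $C^\rho$ is maximal under $\rho$ it never grows again and can only be consumed by a merge of rank strictly greater than $r$ (using \cref{lemma:merge-rank-bigger,lemma:merge-bigger-component,lemma:rank-inside-comp}), so no rank-$\le r$ edge is ever added with an endpoint in $V(C^\rho)$ connecting it to the outside; combined with \cref{coro:compo-edge-rank}, this makes $C^\rho$ exactly a connected component of $\mathcal{F}_r$, i.e. an element of $\dec_r$ (cf. \cref{lemma:component}), and distinct representatives of the same rank yield distinct, hence vertex-disjoint, elements of $\dec_r$. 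Therefore $\sum_{\rho:\rank(C^\rho)=r}w(C^\rho)\le w(\mathcal{F}_r)\le w(\mathcal{F})$, and summing over the at most $\lfloor\log m\rfloor+1$ ranks (\cref{coro:max-comp-rank}) gives the claim.

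Finally I would substitute the two bounds already available: $w(\mathcal{F})\le(2+o(1))\log^2 m\cdot OPT$ from \cref{theorem:bound-forest-weight}, so $(\lfloor\log m\rfloor+1)\,w(\mathcal{F})=(2+o(1))\log^3 m\cdot OPT$, and $\Delta_c\le(16+o(1))\log^4 m\cdot OPT$ from \cref{theorem:bound-waiting-outside}, the latter dominating the sum. Hence $\Delta_g+M_g\le 5(\lceil\log m\rceil+1)\cdot(16+o(1))\log^4 m\cdot OPT=(80+o(1))\log^5 m\cdot OPT$. The points I would be most careful about are the bookkeeping of which component hosts a given greedy instance (and that it is a genuine member of some $\dec_r$), and using the Euler tour of $T(C^\rho)$ directly rather than passing through an MST of $V^\rho$ — it is this that keeps the constant $2$ in the TSP bound, and hence $80$ in the final bound rather than $160$.
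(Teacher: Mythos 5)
Your proposal is correct and follows essentially the same approach as the paper's proof: bound each greedy instance via \cref{theorem:greedy-upper-bound}, construct the TSP tour from the Euler tour of the host component's spanning tree, account for the shifted arrival times through $\delta_c$, and sum. The only cosmetic difference is in how you bound $\sum_\rho w(C^\rho)$: you argue directly that each $C^\rho$ is a distinct connected component of $\mathcal{F}_{\rank(C^\rho)}$ and sum rank by rank, whereas the paper simply upper-bounds this by $\sum_{C\in\dec} w(C) = \sum_r w(\mathcal{F}_r) \le \log m\cdot w(\mathcal{F})$ after embedding each representative's terminal component into $\dec$ via \cref{lemma:component}; both yield the same $\mathcal{O}(\log m)\cdot w(\mathcal{F})$ bound and the same final constant.
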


\begin{proof}
We define $\dec = \bigcup_{r=0}^{\lfloor \log m \rfloor} \dec_r$ the multiset of all possible components in our component decompositions. We consider the greedy invocation associated with a request $u \in V_{req}$, we remark that as long as the component containing $u$ does not merge into another component, its greedy invocation can receive requests from within the component. Let $C$ be the component with representative $u$ right before it got merged into another component or at the end of the algorithm if this never happens. Using \cref{lemma:component}, we note that $C$ corresponds to a component in $\dec_r$ for some $r \geq 0$. Moreover, a component having a single representative, this component in $\dec_r$ cannot correspond to another component with a different representative. Therefore, for each request $u \in V_{req}$, we can find a distinct component $C$ in $\dec$ such that the greedy instance on $u$ only matches requests in $V(C)$. For a component $C \in \dec$, we denote by $\req(C)$ the set of requests that are processed by the greedy instance associated with $C$. If no request got attached to the greedy instance on $C$, then $\req(C) = \emptyset$. We remark that for all $C \in \dec$, $\req(C) \subseteq V(C)$.

We consider a component $C \in \dec$ and want to bound the cost of the greedy associated with this component. We note that, by design of the algorithm, requests join the greedy algorithm with a different arrival time compared to their original arrival time. In particular, the arrival time of a request in the greedy invocation is $t_j(u)$ instead of $t_a(u)$. $T(C)$ being a spanning tree over $V(C)$, we can consider an Euler tour $E_C$ from the spanning tree $T(C)$ where we then remove requests which are not in $\req(C)$. Therefore, $E_C$ is a tour on $\req(C)$. Because we are in a metric space, $w(E_C) \leq 2w(C)$. We now consider $w'$ and $d'$ to be the modified edge weight and distance taking into account the time of joining the greedy algorithm $t_j(u)$ instead of the time of arrival $t_a(u)$ for a request $u$. We observe that by design of our algorithm, each greedy instance uses $d'$ instead of $d$ for the distance. For an edge $e = (u,v)$, we have:
\begin{align*}
    w'(e) &= d'(u,v) \\
    & = d((x(u), t_j(u)), (x(v), t_j(v))) \\
    & \leq d((x(u), t_j(u)), (x(u), t_a(u))) + d((x(u), t_a(u)), (x(v), t_a(v))) + d((x(v), t_a(v)), (x(v), t_j(v))) \\
    & = t_j(u) - t_a(u) + d(u,v) + t_j(v) - t_a(v) \\
    & = \delta_c(u) + \delta_c(v) + d(u,v)
\end{align*}
We note that $E_C$ is a tour of $\req(C)$, so each request in $\req(C)$ appears as the endpoint of exactly two edges in $E_C$, therefore:
\begin{align*}
    w'(E_C) \leq w(E_C) + 2\sum_{v \in \req(C)} \delta_c(v) \leq 2 w(C) + 2\sum_{v \in \req(C)} \delta_c(v)
\end{align*}
Moreover, $E_C$ is a valid traveling salesman tour for $\req(C)$, therefore the optimal traveling salesman tour for requests handled by the greedy invocation on $C$ has weight at most $2 w(C) + 2\sum_{v \in \req(C)} \delta_c(v)$. We denote by $M_C$ and $W_C$ the communication cost and matching cost of the greedy instance in $C$ respectively, we can now apply \cref{theorem:greedy-upper-bound}:
\begin{align*}
    W_C + M_C &\leq \frac{5}{2}(\lceil \log |\req(C)| \rceil + 1) \left( 2 w(C) + 2\sum_{v \in req(C)} \delta_c(v) \right) \\
    & \leq 5 (\lceil \log m \rceil + 1) \left( w(C) + \sum_{v \in req(C)} \delta_c(v) \right)
\end{align*}

We note that each request appears only in the greedy instance of exactly a single component, therefore $\displaystyle \sum_{C \in \dec} \sum_{v \in \req(C)}\delta_c(v) = \sum_{v \in V_{req}} \delta_c(v)$.
We can now bound the total wait time and communication cost, given that each request is matched in a single greedy instance:

\begin{align*}
    \Delta_g + M_g &= \sum_{C \in \dec} \left( W_C + M_C \right) \\
    & \leq \sum_{C \in \dec} \left( 5 (\lceil \log m \rceil + 1) \left( w(C) + \sum_{v \in \req(C)} \delta_c(v) \right) \right) \\
    & = (5 + o(1)) \log m \left( \sum_{C \in \dec} w(C) + \sum_{C \in \dec} \sum_{v \in \req(C)}\delta_c(v) \right) \\
    & = (5 + o(1)) \log m \left( \sum_{r = 1}^{\lfloor \log m \rfloor} \sum_{C \in \dec_r} w(C) + \sum_{v \in V_{req}} \delta_c(v) \right) \\ 
    & = (5 + o(1)) \log m \left( \sum_{r = 1}^{\lfloor \log m \rfloor} w(\mathcal{F}_r) + \Delta_c \right) \\
    & \leq (5 + o(1)) \log m \left( \sum_{r = 1}^{\lfloor \log m \rfloor} w(\mathcal{F}) + \Delta_c \right) \\
    & \leq (5 + o(1)) \log m \cdot \left( \log m \cdot w(\mathcal{F}) + \Delta_c \right) \\
    & \leq (5 + o(1)) \log m \cdot ( (2 + o(1)) \log^3 m \cdot OPT  \\
        &  \ \ \ \ \ \ \ \ + (16 + o(1)) \log^4 m \cdot OPT ) && \text{Using \cref{theorem:bound-forest-weight,theorem:bound-waiting-outside}} \\
    & = (80 + o(1)) \log^5 m \cdot OPT
\end{align*}
    
\end{proof}

\begin{theorem}
    \textsc{ONLINE\_MATCHING} has competitiveness $(80 + o(1)) \log^5 m$.
\end{theorem}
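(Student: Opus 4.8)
The plan is simply to assemble the two headline cost bounds already proved. First I would recall the cost decomposition introduced just before \cref{theorem:bound-waiting-inside}: for each request $u$ the total time cost is $|t_m(u)-t_a(u)| = \delta_c(u) + \delta_g(u)$, and the only connection (space) cost the algorithm pays is incurred by edges matched inside greedy instances, since \emph{all} matching is performed by greedy invocations. Hence on any input the total cost of \textsc{ONLINE\_MATCHING} equals $\Delta_c + \Delta_g + M_g$, where $\Delta_c = \sum_{u} \delta_c(u)$, $\Delta_g = \sum_{u} \delta_g(u)$, and $M_g$ is the aggregate connection cost of the greedy instances. The correctness lemmas (the algorithm never loops forever and produces a perfect matching) guarantee this quantity is finite, so the competitive ratio is well-defined.

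Next I would invoke the two main estimates directly. \cref{theorem:bound-waiting-outside} gives $\Delta_c \leq (16+o(1))\log^4 m \cdot OPT$, and \cref{theorem:bound-waiting-inside} gives $\Delta_g + M_g \leq (80+o(1))\log^5 m \cdot OPT$. Summing, the $\log^4 m$ term is absorbed into the $o(1)$ factor multiplying $\log^5 m$, so the total algorithm cost is at most $(80+o(1))\log^5 m \cdot OPT$. Dividing through by $OPT$ (all of our bounds are multiplicative in $OPT$, so the statement is trivial when $OPT = 0$) yields competitiveness $(80+o(1))\log^5 m$, as claimed.

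This final step has essentially no difficulty of its own; the substance lives in the quoted results, in particular \cref{theorem:bound-forest-weight}, which bounds the forest weight $w(\mathcal{F})$ by $(2+o(1))\log^2 m \cdot OPT$ and feeds both of the above estimates, together with the active/waiting machinery behind \cref{lemma:bound-active-forest,lemma:compare-active-waiting}. The only thing to be careful about here is the bookkeeping: verifying that the decomposition $\Delta_c + \Delta_g + M_g$ is exhaustive, which follows from \textsc{RUN\_GREEDY} feeding surplus requests into greedy instances two at a time so that even components retain no unmatched request outside their instance (as used in \cref{lemma:bound-delta-odd}), and that the constants and $o(1)$ error terms from \cref{theorem:bound-waiting-outside,theorem:bound-waiting-inside} compose correctly. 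I expect this to be entirely routine.
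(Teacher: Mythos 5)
Your proposal is correct and follows exactly the same route as the paper: decompose the total cost as $\Delta_c + \Delta_g + M_g$, invoke \cref{theorem:bound-waiting-outside,theorem:bound-waiting-inside}, and absorb the lower-order $\log^4 m$ term into the $o(1)$. Nothing further is needed.
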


\begin{proof}
    The total cost of the algorithm is:
    \begin{align*}
        \Delta_c + \Delta_g + M_g &\leq (16 + o(1)) \log^4 m \cdot OPT +  (80 + o(1)) \log^5 m \cdot OPT && \text{Using \cref{theorem:bound-waiting-outside,theorem:bound-waiting-inside}} \\
        & = (80 + o(1)) \log^5 m \cdot OPT
    \end{align*}
    Because $OPT$ is the optimal cost an algorithm knowing in advance the arrival of every request achieves, \textsc{ONLINE\_MATCHING} has indeed competitiveness $(80 + o(1)) \log^5 m$.
\end{proof}
\section{Conclusion}

In this paper, we describe the first algorithm with polylogarithmic competitiveness for the general online min-cost perfect matching with delays problem, with the only assumption being that the requests lie in a metric space. This algorithm is an exponential improvement over previous results and is only a polylogarithmic factor away from the lower bound $\Omega(\log m / \log \log m)$.

Several open problems remain. The primary open problem is to tighten the gap with the lower bound, either by developing better algorithms or improving the lower bound itself. We observe that the lower bound holds even for randomized algorithms, while ours is deterministic. As such, one can wonder if randomized algorithms can offer an improvement for this problem. We also remark that our algorithm is somewhat complex. We believe that a similar approach could quite likely yield a simpler algorithm and possibly remove one or two logarithmic factors from the competitive ratio.

Finally, it is natural to ask whether this algorithm can be adapted for the online min-cost bipartite perfect matching with delays problem. Although previous works typically provided variants for both the general and bipartite settings, several key parts of our approach, specifically the greedy algorithm and the component decomposition, do not naturally extend to the bipartite case.


\newpage
\appendix
\section{Missing proofs}

We provide here all the proofs omitted in the main body of the paper. 

\OnlineOfflineSame*

\begin{proof}
    We consider a solution $(p_i, q_i, t_i)_{i \in [\![1;m/2]\!]}$ to the \mpmd problem, for $i \leq m/2$, we observe that $t_i \geq \max(t(p_i), t(q_i))$, as a consequence:
    \begin{align*}
        |t_i - t(p_i)| + |t_i - t(q_i)| &\geq |\max(t(p_i), t(q_i)) - t(p_i)| + |\max(t(p_i), t(q_i)) - t(q_i)| \\
        & = |t(p_i) - t(q_i)|
    \end{align*}
    Therefore:
    \begin{align*}
        g(x(p_i), x(q_i)) + |t_i - t(p_i)| + |t_i - t(q_i)| \geq g(x(p_i), x(q_i)) + |t(p_i) - t(q_i)| = d(p_i, q_i)
    \end{align*}

    As a consequence:
    \begin{align*}
        \sum_{i=1}^{m/2} g(x(p_i), x(q_i)) + |t_i - t(p_i)| + |t_i - t(q_i)| \geq \sum_{i=1}^{m/2} d(p_i, q_i)
    \end{align*}
    Conversely, if we consider a solution $(p_i, q_i)_{i \in [\![1;m/2]\!]}$ to the offline problem, we define $\forall i \leq m/2$, $t_i = \max(t(p_i), t(q_i))$. We observe that with this definition, $|t_i - t(p_i)| + |t_i - t(q_i)| = |t(p_i) - t(q_i)|$, thus:
    \begin{align*}
        \sum_{i=1}^{m/2} d(p_i, q_i) = \sum_{i=1}^{m/2} g(x(p_i), x(q_i)) + |t_i - t(p_i)| + |t_i - t(q_i)|
    \end{align*}
    From the first result, any online solution of cost $C$ yields an offline solution of cost at most $C$. From the second result, any offline solution of cost $C'$ yields an online solution of cost $C'$. Using these two properties together, we conclude that $OPT = OPT'$.
\end{proof}

\DecompShortestOpt*

\begin{proof}
    Let $M_{OPT}$ be the min-cost perfect matching of weight $W_{OPT}$. Let $E \subseteq M_{OPT}$ be the subset of these edges such that the two endpoints of any edge are in different components. By identifying a vertex to the component containing it, we can define $G' = (\comp, E)$ as an undirected graph on components. We note that this graph is not simple (there may be multiple edges between two components), but it has no self-loops. We consider a component $C \in \comp$, let $k$ be the number of internal edges (edges with both endpoints in $V(C)$) in $M_{OPT}$. Because every vertex is the endpoint of exactly one edge of $M_{OPT}$, we get that $\mathrm{deg}_{G'}(C) = |V(C)| - 2k$. In particular, the parity of $\mathrm{deg}_{G'}(C)$ is the same as the parity of $|V(C)|$. Therefore, a component in $G'$ has odd degree if and only if it is an odd component.

    Let $k_{odd}$ be the number of odd components in $\comp$, which is also the number of vertices with odd degree in $G'$. Using the handshaking lemma, $k_{odd}$ is even. We claim that we can find a set of $k_{odd}/2$ edge-disjoint paths $(P_i)_{i \in [\![1,k_{odd}/2]\!]}$ such that each path is between two distinct odd vertices in $G'$ and every odd vertex in $G'$ appears exactly once as the endpoint of one of those path. 
    
    To do so, we first prove that we can always find a path between two distinct vertices with odd degree in $G'$. If we consider $v_1$ a vertex with odd degree, it must have degree at least $1$ so we can find an edge $e_1 = (v_1, v_2)$. If $v_2$ has odd degree, then $e_1$ satisfies the property we look for as a path. Otherwise, we remove $e_1$ from $G'$, the degree of $v_1$ then becomes even while the degree of $v_2$ becomes odd. We can repeat this operation now starting with $v_2$ until we end up on a vertex with odd degree, this gives us the path we are looking for. 

    When we have such a path, we remove it from $G'$: the degree of its two endpoints becomes even and the degree of other vertices does not change. We can repeat this process $k_{odd}/2$ times and will be left with only vertices with even degree. From this, we get the set of edge-disjoint paths $(P_i)_{i \in [\![1,k_{odd}/2]\!]}$ described above.

    We consider one such path $P = C_1 \rightarrow C_2 \rightarrow \ldots \rightarrow C_l$, let $w(P)$ be the sum of the edges along such a path. According to our assumptions, $C_1$ and $C_l$ are distinct components. Let $s_1 < l$ be the last index $C_1$ appears along path $P$ and let $s_2 \in ]\!]s_1, l]\!]$ be the index after $s_1$ where an odd component distinct from $C_1$ appears. We remark that $C_{s_1} \rightarrow C_{s_1 + 1} \rightarrow \ldots \rightarrow C_{s_2}$ is a path between two distinct odd components where parts within even components are ignored. Thus by definition, the weight of this path is at least $D(C_1, C_{s_2})$. Therefore, $w(P) \geq D(C_1, C_{s_2}) \geq r_{C_1}$. By symmetry, $w(P) \geq r_{C_2}$.

    We remark that the paths $(P_i)_{i \in [\![1,k_{odd}/2]\!]}$ are edge-disjoint, therefore:
    \begin{align*}
        \sum_i w(P_i) &= w\left(\bigcup_i P_i \right) \\
        & \leq w(M_{OPT}) \\
        & = W_{OPT}
    \end{align*}
    Moreover, each odd component appears exactly once at the endpoint of a path in $(P_i)_i$ and each path has exactly two endpoints, therefore:
    \begin{align*}
        \sum_i 2 w(P_i) \geq \sum_{C \in \comp_{odd}} r_C
    \end{align*}
    Putting the two previous equations together, we get our result:
    \begin{align*}
        \sum_{C \in \comp_{odd}} r_C \leq 2 W_{OPT}
    \end{align*}
\end{proof}

\GreedyUpperBound*

\begin{proof}
    Let $M$ be the matching output by $M_{GREEDY}$. We first want to bound the space cost $w(M)$ of the matching. We will use \cref{lemma:metric-tsp} again. For $u \in V_{req}$ a request, let $v$ be the request \textsc{GREEDY\_ONLINE} matches it to. We define $l_u = d(u,v) / 2$. We show that it satisfies the properties required by the lemma:
    \begin{itemize}
        \item  Let $p,q$ be two distinct requests. If the greedy algorithm matched $p$ with $q$, then $l_p = l_q$ and $d(p,q) = 2 l_p \geq \min (l_p, l_q)$. Otherwise, without loss of generality, let us assume that $p$ got matched before $q$. We now consider the algorithm at the time it matched $p$ with some other vertex, there are two cases:
        \begin{itemize}
            \item $p$ was the first request considered ($u$ in the context of the algorithm). Let $v$ be the vertex matched to it. The algorithm waited past time $t(p) + d(p,v)$ to match it, which using \cref{lemma:arrival-time} implies that all requests within a radius $d(p,v)$ from $p$ already arrived. Because it decided to match $p$ with $v$ and not $q$ which was unmatched, it means that $q$ had not arrived yet or was further away from $u$ than $v$. In both cases, $d(p,q) \geq d(p,v)$, so $2l_p = d(p,v) \leq d(p,q)$. Hence $d(p,q) \geq 2l_p \geq \min (l_p, l_q)$.
            \item $p$ was the second request considered ($v$ in the context of the algorithm). The algorithm therefore matched $p$ to some other request $u$ at time $t$. We have $t \geq t(u) + 2 d(u,p)$. Because $|t(u) - t(p)| \leq d(u,p)$, it implies that $t \geq t(p) + d(u,p)$. Assume by contradiction that $d(p,q) < d(u,p)/2$. Then we would have $t > t(p) + 2 d(p,q)$, hence the algorithm would have already matched $p$ at time $t(p) + 2 d(p,q)$ with $q$ if it were not already matched. But we assumed that $p$ got matched at time $t > 2 d(p,q)$, hence the contradiction. Therefore $d(p,q) \geq d(u,p)/2 = l_p \geq \min(l_p, l_q)$.
        \end{itemize}
        \item Let $p \in V_{req}$, we assume $p$ got matched to some other request $q$ so $l_p = d(p,q)/2$. The TSP consists of two paths between $p$ and $q$. Given that we are in a metric graph, it follows that $OPT_{TSP} \geq 2 d(p,q)$ and therefore $l_p \leq \frac{1}{4} OPT_{TSP}$.
    \end{itemize}
    We can thus use \cref{lemma:metric-tsp}:
    \begin{align*}
        \sum l_p \leq \frac{1}{2}(\lceil \log n \rceil + 1) OPT_{TSP}
    \end{align*}
    We remark that by the definition of $(l_p)$, the weight of every matching edge appears exactly twice (one for each endpoint) in $(l_p)$, therefore $\sum 2 l_p = 2 w(M)$, so the overall connection cost satisfies:
    \begin{align*}
        w(M) \leq \frac{1}{2}(\lceil \log n \rceil + 1) OPT_{TSP}
    \end{align*}

    We now consider the time cost $\mathcal{W}_{GREEDY}$ of the algorithm. Let $u$ and $v$ be two requests matched together by the algorithm at time $t$. Without loss of generality, let us assume that $t(u) \leq t(v)$. Let $t' = t(u) + 2d(u,v)$. We want to prove that $t = t'$. By design of the algorithm, $t \geq t(u) + 2d(u,v) = t'$. Moreover, we have $t' \geq t(u) + d(u,v)$. Therefore using \cref{lemma:arrival-time}, both $u$ and $v$ have arrived at time $t'$. We now consider the state of the algorithm at time $t'$. If there were a request $w$ closer to $u$ than $v$, the algorithm would immediately match $u$ and $w$ together as we would have $t' > t(u) + 2d(u,w)$. This contradicts the fact that $u$ and $w$ got matched together. So at time $t' = t(u) + 2d(u,v)$, $u$ and $v$ have both arrived and $v$ is the closest request to $u$, so the algorithm matches them together at that time, so $t = t'$. 
    
    Thus, the time cost spent by $u$ and $v$ for waiting is $t - t(u) + t - t(v) = 4d(u,v) + t(u) - t(v) \leq 4 d(u,v)$ because $t(u) \leq t(v)$. So, the total time cost of the algorithm $\mathcal{W}_{GREEDY}$ can be bounded by:
    \begin{align*}
        \mathcal{W}_{GREEDY} &= \sum_{(u,v) \in M} 4d(u,v)\\
        & = 4 w(M)
    \end{align*}

    Therefore, the total online cost of the greedy algorithm is at most:
    \begin{align*}
        C_{GREEDY} &= w(M) + \mathcal{W}_{GREEDY} \\
        & \leq w(M) + 4 \cdot w(M) = 5 \cdot w(M) \\
        & \leq\frac{5}{2} (\lceil \log n\rceil + 1) OPT_{TSP}
    \end{align*}
\end{proof}

\end{document}